\documentclass[11pt]{article}

\usepackage[square,numbers,sort&compress]{natbib}

\usepackage{tablefootnote}
\usepackage[utf8]{inputenc} 
\usepackage[T1]{fontenc}    
\usepackage{hyperref}       
\usepackage{url}            
\usepackage{booktabs}       
\usepackage{amsfonts}       
\usepackage{nicefrac}       
\usepackage{microtype}      
\usepackage{tikz}
\usetikzlibrary{decorations.pathreplacing,calc}

\usepackage[algo2e,ruled]{algorithm2e}
\LinesNumbered

\usepackage{verbatim}
\usepackage{amsthm}
\usepackage{amsfonts}
\usepackage[nonamelimits]{amsmath}
\usepackage{amssymb}
\usepackage{sansmath}
\usepackage{color}
\usepackage{xcolor}
\usepackage{bm}
\usepackage{bbm}
\usepackage{fullpage}
\usepackage[bottom]{footmisc}

\theoremstyle{plain}
\newtheorem{theorem}{Theorem}[section]

\newtheorem{lemma}[theorem]{Lemma}
\newtheorem{fact}[theorem]{Fact}
\newtheorem{proposition}[theorem]{Proposition}
\newtheorem{remark}[theorem]{Remark}

\theoremstyle{definition}
\newtheorem{definition}[theorem]{Definition}

\usepackage{subfig}
\usepackage{graphicx}
\usepackage{listings}
\usepackage{enumitem}
\usepackage{tabularx}
\newtheoremstyle{named}{}{}{\itshape}{}{\bfseries}{.}{.5em}{\thmnote{#3}}
\theoremstyle{named}

\usepackage{tikz}
\usetikzlibrary{fit,calc}

\usepackage{tikz}
\usetikzlibrary{cd}

\DeclareSymbolFont{extraup}{U}{zavm}{m}{n}
\DeclareMathSymbol{\varheart}{\mathalpha}{extraup}{86}
\DeclareMathSymbol{\vardiamond}{\mathalpha}{extraup}{87}

\usepackage{mathtools}

\makeatletter
\newcommand*{\inlineequation}[2][]{%
  \begingroup
    \refstepcounter{equation}%
    \ifx\\#1\\%
    \else
      \label{#1}%
    \fi
    \relpenalty=10000 %
    \binoppenalty=10000 %
    \ensuremath{%
      #2%
    }%
    ~\@eqnnum
  \endgroup
}
\makeatother
\usepackage[table]{xcolor}

\usepackage{threeparttable}

\usepackage{todonotes}

\usepackage{xspace}

\xspaceaddexceptions{[]\{\}}

\usepackage{hyperref}

\usepackage{thm-restate}

\usepackage{bm}
\makeatletter
\newcommand*{\rom}[1]{\expandafter\@slowromancap\romannumeral #1@}

\usepackage{multirow}

\usepackage{tablefootnote}
\usepackage{float}
\usepackage{makecell}
\usepackage{authblk}

\allowdisplaybreaks

\title{On Complexity of Gaussian Sampling}
\date{\vspace{-5ex}}
\begin{document}
\author[1]{Huanjian Zhou\thanks{zhou-huanjian185@g.ecc.u-tokyo.ac.jp}}
\author[1,2]{Taiji Suzuki\thanks{taiji@mist.i.u-tokyo.ac.jp }}
\affil[1]{The University of Tokyo}
\affil[2]{RIKEN Center for Advanced Intelligence Project}

\maketitle

\begin{abstract}%
We study the number of exact matrix-vector queries needed to sample a centered $d$-dimensional Gaussian with condition number at most $\kappa$ to total variation error $\varepsilon$.
The best known algorithm uses $O(\min\{d,\sqrt{\kappa}\log(\frac{d}{\varepsilon})\})$ queries, while
the known lower bounds do not grow as $\varepsilon\to0$ and match this bound only for $\kappa\le
d^{1/5-\delta}$ at constant accuracy.
We introduce a construction based on a hidden determinant sign and prove a lower bound of $\Omega\!\left(\min\left\{d,\sqrt{\kappa}\log\left(1+\frac{d}{\sqrt{\kappa}\,\varepsilon}\right)\right\}\right)$ uniformly for all $d\ge1$, $\kappa\ge4$, and $0<\varepsilon\le\tfrac{1}{16}$.
This bound matches the known upper bound up to constant factors whenever $\kappa\le d^{2-\delta}$ for fixed $\delta\in(0,2)$, or $\varepsilon\le d^{-c}$ for fixed $c>0$.
We also extend a recent fixed-spectrum construction with a hidden eigenbasis from $\kappa=2$ to
$2\le\kappa\le d^{1-\delta}$, for any fixed $\delta\in(0,1)$. This gives an
$\Omega_\delta(\sqrt{\kappa}\log d)$ lower bound at constant accuracy that holds even when the
spectrum of the precision matrix is known to the algorithm.
\end{abstract}

\newpage
\setcounter{tocdepth}{2}
\tableofcontents

\newpage

\section{Introduction}
We study the problem of sampling from a centered Gaussian distribution given query access to its
unnormalized density.
Gaussian sampling is a fundamental algorithmic task
with applications in Bayesian inference~\citep{bhattacharya2016fast},
numerical simulation~\citep{chow2014preconditioned},
and randomized approximation algorithms~\citep{goemans1995improved}.

The target is $\pi_A=N(0,A^{-1})$, where the unknown precision
matrix $A$ satisfies $I_d\preceq A\preceq\kappa I_d$.
An algorithm accesses $A$ through matrix--vector product queries.
At each step, it chooses a vector $x\in\mathbb R^d$ and observes $Ax$. 
The goal of the algorithm is to output a random vector whose distribution
is within $\varepsilon$ of $\pi_A$ in total variation (TV) distance.
We ask how many queries required for sampling from \(d\)-dimensional Gaussian distribution to accuracy $\varepsilon$ with the condition number $\kappa$.

Gaussian sampling has a long history, with methods based on polynomial
approximations to matrix square roots and inverse square
roots~\citep{fixman1986construction,davis1987generating},
Krylov subspaces and conjugate
gradients~\citep{parker2012sampling,aune2013iterative,chow2014preconditioned},
and Markov chain Monte Carlo, including Gibbs sampling and Hamiltonian
Monte Carlo~\citep{roberts1997updating,fox2018accelerated,apers2024hamiltonian}.
%
In the first-order oracle model considered here,
\cite{lee2020logsmooth} obtained a 
query bound of
$O(\kappa d\log^3(\tfrac{\kappa d}{\varepsilon}))$
for Metropolized Hamiltonian Monte Carlo.
For Gaussian targets, \cite{apers2024hamiltonian} improved this to
$\widetilde O(\sqrt{\kappa}\,d^{1/4}\log(\tfrac{1}{\varepsilon}))$
using long, randomized integration times, with initialization included
in the query count.
Analyses of Hit-and-Run also yield an
$O(\kappa d\log(\tfrac{\kappa d}{\varepsilon}))$
query bound, since each exact Gaussian line update requires only one
matrix-vector
query~\citep{ascolani2024entropy,bou2025accelerated}.
A sharper worst-case upper bound is provided by the polynomial sampler
of \cite{chewi2024query}, which requires
$O(\min\{d,\sqrt{\kappa}\log(\tfrac{d}{\varepsilon})\})$
queries for total variation error $\varepsilon$.
At a sufficiently small constant total variation error,
they proved a lower bound of
$\Omega(\min\{d,\sqrt{\kappa}\})$
against arbitrary randomized adaptive algorithms.
They further obtained
$\Omega_\delta(\sqrt{\kappa}\log d)$
when $\kappa\le d^{1/5-\delta}$,
for any fixed $\delta\in(0,1/5)$,
matching their upper bound at constant accuracy in this regime.
They conjectured that $\Omega(\sqrt{\kappa}\log d)$ queries are necessary whenever
$\sqrt{\kappa}\log d\le d$. A recent preprint~\cite{OAI:Subpolynomial-query-complexity-for-well-conditioned-log-concave-sampling-September-26-2026} gives an $\Omega(\log d)$ lower bound for
$\kappa=2$ and $\varepsilon=\frac{1}{10}$, using instances with a fixed spectrum and a random
eigenbasis. None of these lower bounds grows as $\varepsilon\to0$, and for $\kappa\ge d^{1/5}$ they
are below the upper bound by a factor of order $\log(d/\varepsilon)$.

%
However, both regimes arise in applications.
%
%
%
%
%
Large condition numbers arise naturally in spatial statistics and Gaussian process regression~\citep{ababou1994condition,banerjee2013efficient,lim2017numerical}.
On a bounded domain, the covariance matrices of a fixed Mat\'ern model satisfy $\kappa=\Omega(d)$ as the number $d$ of sampling locations increases~\citep{stein2012difference}.
Sparse precision matrices arising from SPDE-based Gaussian Markov random field approximations of Mat'ern fields can likewise become ill-conditioned under mesh refinement~\citep{lindgren2011explicit,pereira2019efficient}.
For example, a regular-grid discretization of the Mat\'ern SPDE with $\nu=1$ on a fixed two-dimensional periodic domain satisfies $\kappa=\Theta(d^2)$ under uniform refinement.

High sampling accuracy matters for estimating a rare failure probability $p$~\citep{liu2015efficient}, since TV error $\varepsilon$ allows relative bias up to $\varepsilon/p$.
High accuracy matters in Gibbs samplers with Gaussian conditionals~\citep{bhattacharya2016fast,nishimura2023prior}, where uncorrected sampling errors can bias the chain~\citep{gilavert2014efficient}.
From a common initial law, a uniform per-update TV error of $\varepsilon$ accumulates to at most $N\varepsilon$ over $N$ updates.
%
%
%
In neither regime do the known bounds determine the query complexity.
This leads to the following question.
\begin{center}
\emph{Can we establish sharp lower bounds for \underline{high-accuracy} sampling from \underline{ill-conditioned} Gaussians, \underline{uniformly over all dimensions}?}
\end{center}

\subsection{Our results}
\label{subsec:our-results}
Our first result is the following lower bound, which holds for every $\kappa\ge 4$ and increases as $\varepsilon$ decreases to zero.
\begin{table}[ht]
\centering
\small
\setlength{\tabcolsep}{4pt}
\renewcommand{\arraystretch}{1.15}
\caption{Upper and lower query bounds for sampling from $N(0,A^{-1})$
with $I_d\preceq A\preceq\kappa I_d$, using an exact first-order
oracle and a fixed worst-case query budget, to total variation
accuracy $\varepsilon$. Here $d$ is the dimension and $\kappa\ge2$
is the condition-number bound.}
\label{tab:exact-oracle-lower-bounds}
\begin{tabular}{@{}
  >{\raggedright\arraybackslash}
    p{\dimexpr0.18\linewidth-1.5\tabcolsep\relax}
  >{\raggedright\arraybackslash}
    p{\dimexpr0.21\linewidth-1.5\tabcolsep\relax}
  >{\raggedright\arraybackslash}
    p{\dimexpr0.38\linewidth-1.5\tabcolsep\relax}
  >{\raggedright\arraybackslash}
    p{\dimexpr0.23\linewidth-1.5\tabcolsep\relax}
@{}}
\toprule
Reference & Regime & Query bound & Technique \\
\midrule
\multicolumn{4}{@{}l}{\textit{Upper bounds}} \\
\addlinespace[2pt]
\cite{chewi2024query}
  & all $d,\kappa$,\newline
    $0<\varepsilon\le\tfrac1{10}$
  & $\displaystyle O\!\left(
      \min\bigl\{d,
      \sqrt{\kappa}\log(\tfrac{d}{\varepsilon})\bigr\}
      \right)$
  & approximation of $A^{-1/2}$ \\
\midrule
\multicolumn{4}{@{}l}{\textit{Lower bounds}} \\
\addlinespace[2pt]
\cite{chewi2024query}
  & all $d,\kappa$,\newline
    $0<\varepsilon\le c$
  & $\Omega(\min\{d,\sqrt{\kappa}\})$
  & hidden minimum eigenvalue \\
  \addlinespace
\cite{chewi2024query}
  & $\kappa\le d^{1/5-\delta}$,\newline
    $0<\varepsilon\le\varepsilon_0$
  & $\Omega_\delta(\sqrt{\kappa}\log d)$
  & hidden eigenvalue multiplicities \\
  \addlinespace
\cite{OAI:Subpolynomial-query-complexity-for-well-conditioned-log-concave-sampling-September-26-2026}
  & $d\geq d_0,\kappa=2$,\newline
    $\varepsilon=\tfrac1{10}$
  & $\Omega(\log d)$
  & hidden eigenbasis \\
\addlinespace
Theorem~\ref{thm:gaussian-lower-bound}
  & $d\ge1$, $\kappa\ge4$,\newline
    $0<\varepsilon\le\tfrac1{16}$
  & $\displaystyle
      \begin{aligned}[t]
      &\Omega\big(\min\big\{d,\sqrt{\kappa}\log\big(1+
      \tfrac{d}{\sqrt{\kappa}\,\varepsilon}\big)
      \big\}\big)
      \end{aligned}$
  & hidden determinant sign \\
\addlinespace
Theorem~\ref{thm:openai-general-kappa}
  &  $d\geq d_0(\delta)$,\newline
    $2\le\kappa\le d^{1-\delta}$,\newline
    $0<\varepsilon\le\tfrac1{10}$
  & $\Omega_\delta(\sqrt{\kappa}\log d)$
  & hidden eigenbasis\newline
    (extended from~\cite{OAI:Subpolynomial-query-complexity-for-well-conditioned-log-concave-sampling-September-26-2026}) \\
\bottomrule
\end{tabular}

\vspace{3pt}
\begin{minipage}{\linewidth}
\footnotesize
Here $c,\varepsilon_0>0$ are sufficiently small universal constants. $d_0(\delta)$ is a sufficiently large constant which depends on $\delta$.
In the second lower-bound row, $\delta\in(0,1/5)$ is fixed;
in the last row, $\delta\in(0,1)$ is fixed.
Constants implicit in $\Omega_\delta$, and the dimension threshold
in the last row, may depend on $\delta$.
\end{minipage}
\end{table}
\begin{theorem}[Informal; see Theorem~\ref{thm:gaussian-lower-bound}]
\label{thm:informal-gaussian-lower-bound}
For every integer
$d\ge1$, $\kappa\ge4$, and $0<\varepsilon\le1/16$, any Gaussian sampler 
requires $\Omega\big(\min\big\{
        d,\,
        \sqrt{\kappa}
        \log\big(1+\frac{d}{\sqrt{\kappa}\,\varepsilon}\big)
    \big\}\big)$ queries.
\end{theorem}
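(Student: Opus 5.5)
The plan is a two-point (Le Cam) argument built on a \emph{hidden determinant sign}. We construct two families of precision matrices $A_+$ and $A_-$ with $I\preceq A_\pm\preceq\kappa I$. They are randomized by a common hidden orthogonal change of basis, so that any transcript of $k$ adaptive matrix--vector queries has nearly the same law under both families. Yet the targets $N(0,A_+^{-1})$ and $N(0,A_-^{-1})$ are far apart in TV on a test the sampler cannot circumvent. Concretely, we work in a one-dimensional-per-block tridiagonal (path/Jacobi) model. We take $A=\frac{\kappa+1}{2}I+\frac{\kappa-1}{4}(S+S^\top)$-type matrices, where $S$ is a shift along a hidden path of length $m$ whose last coupling carries a sign $\sigma\in\{\pm1\}$. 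A Krylov-type argument (as in the $\Omega(\sqrt\kappa)$ bound of \cite{chewi2024query}) shows that after $k$ queries the algorithm has explored only a neighborhood of radius $k$ of the hidden path. Hence if $k<m$, the sign $\sigma$ is independent of the transcript up to the small probability of a lucky query hitting the hidden coordinates. The standard device here is to embed the path in a uniformly random rotation of $\mathbb R^d$ and use that $k\ll d$ queries have negligible overlap with unseen directions.

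The point of the sign is that it affects the covariance $A^{-1}$ only through entries coupling the two ends of the path. By a Chebyshev/resolvent estimate, these entries have size $\approx\rho^{m}$ with $\rho=\frac{\sqrt\kappa-1}{\sqrt\kappa+1}$, i.e. $e^{-\Theta(m/\sqrt\kappa)}$. We will not use a single path. Instead we put $n\approx d/m$ disjoint hidden paths in the space, each with an independent hidden sign. The sampler's output must then reproduce, in each block, a correlation of size $\rho^m$ with the correct sign. A product/Hellinger computation gives a TV between the sign-conditioned targets of order $\min\{1,\sqrt{n}\,\rho^{m}\}$, or, more carefully, order $\min\{1,n\rho^{2m}\}^{1/2}$. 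The sampler, lacking information about the signs it has not resolved, outputs a mixture. We then use the standard fact that if the output law is independent of $\sigma$ on unexplored blocks, then the averaged TV error is at least a constant times the TV between the two sign hypotheses. This forces $\varepsilon\gtrsim \sqrt{n}\,\rho^m$ unless $k\gtrsim m$ queries are spent. Since each query explores roughly one path step across all blocks together (adaptivity is bounded by the rotation argument), the algorithm needs $k\ge m$.

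Optimizing, we choose $m$ as large as possible subject to $\sqrt{d/m}\,e^{-cm/\sqrt\kappa}\ge 16\varepsilon$. This gives $m\asymp\sqrt\kappa\log\bigl(1+\frac{d}{\sqrt\kappa\,\varepsilon}\bigr)$, capped at $m\le d$, which yields the $\min\{d,\cdot\}$. The small-$d$ and $\kappa$ close to $4$ cases follow from the same construction with $n=1$, plus the constant-accuracy $\Omega(\min\{d,\sqrt\kappa\})$ bound of \cite{chewi2024query}.

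The main obstacle is the information step. We must show rigorously that adaptive, randomized queries against a random rotation cannot learn any sign faster than one path step per query. We must also show that the sampler's output cannot correlate with unexplored signs. For the first, we plan a hybrid argument: replace the true oracle by a simulator that answers using $\sigma=+1$ on unexplored ends. We then bound the probability that the simulation diverges by the chance that a query has noticeable projection onto a fresh random direction, which is $\lesssim k\sqrt{m/d}$ or exponentially small. The second step is a careful lower bound on TV between $N(0,\Sigma_+)$ and $N(0,\Sigma_-)$ mixtures, via second-moment (chi-square) control of the sign-averaged target.
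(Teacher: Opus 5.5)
\textbf{There is a genuine gap, and it lies in the information step.} The bound is extracted at the block size where the target separation has itself dropped to order $\varepsilon$. The theorem must also hold uniformly as $\varepsilon\to0$ at every fixed $d$, including block sizes comparable to $d$, which is the regime behind the $\min\{d,\cdot\}$ term. So any leakage $\delta$ in your indistinguishability step must satisfy $\delta\lesssim\varepsilon$ uniformly, and a hybrid bound like $k\sqrt{m/d}$ or $e^{-cd}$ cannot give that.

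The hybrid picture also does not fit your model. Under a global Haar rotation there is no localized ``unexplored end'': $O^\top x/\|x\|$ is uniform on the sphere, so the very first query touches every hidden coordinate of every path. Moreover, with fixed, distinct spectra under a Haar rotation, exact equality fails after a single query. The law of $\langle x,Ax\rangle=\|x\|^2\sum_i\lambda_iu_i^2$ depends on all power sums $\operatorname{tr}(\Lambda^p)$, and these differ between your two hypotheses at $p=m$ (compare Section~\ref{subsec:openai-limitations} on why the fixed-spectrum analysis gives no $\varepsilon$-dependence).

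Your ansatz $\frac{\kappa+1}{2}I\pm\frac{\kappa-1}{4}(S+S^\top)$ has the right shape. The missing idea is that $S$ itself must be Haar on a whole determinant component. The paper takes $A_{O_j}=\frac{\kappa+1}2I_n-\frac{\kappa-1}4(O_j+O_j^\top)$, with $O_j\sim\mu_{n,b}$ independent across blocks and one sign $b$ shared by all blocks. It simulates each first-order query by the two primitive queries $O_jv$ and $O_j^\top v$. By Lemma~\ref{lem:haar-observation} and the posterior factorization, each new observation is uniform on the unexplored sphere under either sign while at least two dimensions remain. Hence the transcript laws coincide \emph{exactly} whenever $2T<n$ (Lemma~\ref{lem:construction-indistinguishability}), with no error term at all.

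\textbf{The construction and the separation step also need repair.} For a tridiagonal matrix, the sign of the last coupling is a gauge. Conjugating by $\mathrm{diag}(1,\ldots,1,-1)$ flips it, and this conjugation is absorbed by the Haar rotation. So your two rotated families have the same law, and nothing separates their targets. A surviving sign must be the holonomy of a closed cycle, which is essentially the determinant sign of an orthogonal map.

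Even with that fix, a difference of size $\rho^m$ in covariance entries does not lower-bound the TV between the basis-averaged mixtures, since averaging leaves only invariant statistics. Second-moment (chi-square) control bounds TV from above, not from below. The paper proves $\operatorname{TV}(P_{+1},P_{-1})\ge\frac12e^{-5n/\sqrt\kappa}$ by evaluating the radial test $e^{-\lambda\|x\|^2}$ exactly. It writes $\det(O)\sqrt{\det A_O}=2^{-n}\det\bigl((\sqrt\kappa+1)O-(\sqrt\kappa-1)I_n\bigr)$. Reflections then kill every principal minor except $\varnothing$ and $\{1\}$ (Lemma~\ref{lem:determinant-cancellation}), and the spherical Poisson identities finish the computation at a tuned $\lambda$.

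The common sign makes the averaged targets $P_{+1}^{\otimes k}$ versus $P_{-1}^{\otimes k}$, and Hellinger tensorization gives $1-\exp\{-\frac k8e^{-10n/\sqrt\kappa}\}$. For small values this is of order $k\operatorname{TV}(P_{+1},P_{-1})^2$, not the $\sqrt{N}\,\rho^m$ you assert for $N$ blocks. Your scaling is not a valid general lower bound for product measures, though the discrepancy costs only constants. Finally, with independent signs per block you would still have to compare the all-$+$ and all-$-$ sign vectors, which is exactly the common-sign construction.
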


Our lower bound matches the known upper bound of $ O\!\big(
      \min\big\{d,
      \sqrt{\kappa}\log(\tfrac{d}{\varepsilon})\big\}
      \big)$ up to constant factors whenever $\kappa\le d^{2-\delta}$ for any fixed $\delta\in(0,2)$. 
      At constant accuracy, the two bounds were previously
known to match only for $\kappa\le d^{1/5-\delta}$.
This confirms the conjecture of  \cite{chewi2024query} that $\Omega(\sqrt{\kappa}\log d)$ queries are necessary at constant accuracy whenever $\sqrt{\kappa}\log d\le d$.
It also matches the upper bound for every $\kappa\ge4$ whenever $\varepsilon\le d^{-c}$ for any fixed $c>0$.
For $\kappa\ge d^2$, it yields the optimal $\Theta(d)$ query complexity.

%

We also extend the hidden-eigenbasis construction of \cite{OAI:Subpolynomial-query-complexity-for-well-conditioned-log-concave-sampling-September-26-2026} to obtain an $\Omega_\delta(\sqrt{\kappa}\log d)$ lower bound for $2\le\kappa\le d^{1-\delta}$. We summarize the comparisons in Table~\ref{tab:exact-oracle-lower-bounds}.

\begin{theorem}[Informal; see Theorem~\ref{thm:openai-general-kappa}]
For every fixed $\delta\in(0,1)$, sufficiently large $d$,
$2\le\kappa\le d^{1-\delta}$, and $0<\varepsilon\le1/10$, any Gaussian sampler requires $\Omega_\delta (\sqrt{\kappa}\log d)$ queries.
\end{theorem}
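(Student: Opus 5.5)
The plan is to extend the hidden-eigenbasis construction of \cite{OAI:Subpolynomial-query-complexity-for-well-conditioned-log-concave-sampling-September-26-2026} from $\kappa=2$ to general $\kappa$ by composing it with a polynomial-degree amplification. Concretely, fix a spectrum consisting of two eigenvalue clusters $1$ and $\kappa$, or more precisely a spectrum supported near the two endpoints. Then take $A=U\Lambda U^\top$ with $U$ Haar-random, where two candidate spectra $\Lambda_0,\Lambda_1$ are known to the algorithm and differ only slightly in the multiplicities or locations of the eigenvalues. We would then show that after $T$ adaptive queries the algorithm's transcript is essentially a Krylov-type object: by rotational invariance, we may reduce any adaptive algorithm to one whose queries lie in the span of previous answers together with fresh random directions. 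The transcript therefore depends on $A$ only through the moment matrices $\langle x_i, p(A) x_j\rangle$ for polynomials $p$ of degree at most $2T$.

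The key analytic step is a concentration/indistinguishability statement. For a Haar-random $U$ and a fixed query vector $x$, the quadratic forms $x^\top A^k x$ concentrate around the normalized trace moments $\frac{1}{d}\operatorname{tr}\Lambda^k$, with fluctuations of order $\kappa^k/\sqrt d$. We would choose $\Lambda_0,\Lambda_1$ so that their spectral measures match all moments up to degree $T\approx c\sqrt{\kappa}\log d$ after rescaling to $[1,\kappa]$; Chebyshev polynomials on $[1,\kappa]$ provide this. Such matching is possible because degree-$T$ polynomials cannot separate spectral measures differing on a set where Chebyshev polynomials of degree $T$ grow only like $e^{T/\sqrt\kappa}$. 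We need $e^{O(T/\sqrt{\kappa})}\le d^{\delta/2}$ so that the moment-matching error, multiplied by this growth factor, remains below the $d^{-1/2}$ fluctuation scale. This is where the restriction $\kappa\le d^{1-\delta}$ and the $\Omega_\delta$ dependence enter: the number of available eigenvalues $d$ must dominate the number $\sim\sqrt\kappa\log d$ of atoms needed for moment matching, with polynomial slack.

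On the sampling side, we would show that any output within TV $1/10$ of $\pi_A$ reveals a statistic, such as $\|X\|^2$ or $\operatorname{tr}A^{-1}$ estimated from one sample, that distinguishes $\Lambda_0$ from $\Lambda_1$ with constant advantage. We would design the spectra so that $\operatorname{tr}A^{-1}$ differs by $\gg\sqrt d$ between them, which makes the distributions $\pi_{A_0}$ and $\pi_{A_1}$ nearly disjoint in TV. Combining the two parts through the standard Le Cam argument yields the bound $T=\Omega_\delta(\sqrt\kappa\log d)$.

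The main obstacle is the adaptive-query reduction combined with moment matching under the Haar measure. The interpolation requires bounding the total variation between transcript distributions, not just the expectations of moments, and this must hold uniformly over adaptive choices. I would try first to follow the $\kappa=2$ argument, via Gram–Schmidt conditioning on the revealed subspace, and to track where the constant $2$ was used, replacing that constant by Chebyshev growth bounds on $[1,\kappa]$.
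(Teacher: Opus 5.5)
Your route is the two-point ``hidden spectrum'' strategy: two known spectra with matched low moments, a Haar eigenbasis, and Le Cam. The step that carries the whole argument is missing.

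\textbf{The indistinguishability step.} Matching the moments of $\Lambda_0$ and $\Lambda_1$ up to degree $2T$ equalizes only the \emph{means} of the quadratic forms $\langle x_i,p(A)x_j\rangle$. Under a Haar eigenbasis, their joint \emph{law} depends on all power sums of the spectrum. Already for one fixed unit vector $x$, $x^\top p(A)x=\sum_i w_ip(\lambda_i)$ with Dirichlet weights. Its variance involves $\frac1d\operatorname{tr}p(\Lambda)^2$, a moment of degree up to $4T$, and its higher cumulants involve $\operatorname{tr}p(\Lambda)^m$.

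So Le Cam needs a quantitative total-variation comparison between the two laws of a $\Theta(T^2)$-dimensional Gram array. It must hold uniformly over adaptive strategies, with $T\asymp\sqrt\kappa\log d$ and $\kappa$ up to $d^{1-\delta}$. You name this as ``the main obstacle'' but do not supply it. Your heuristic (moment-matching error times Chebyshev growth below the $d^{-1/2}$ fluctuation scale) controls first moments only, whereas the discrepancy lives in the fluctuations themselves.

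\textbf{The origin of $\kappa\le d^{1-\delta}$.} Your explanation is also off: fitting $\sim\sqrt\kappa\log d$ atoms into $d$ eigenvalues would only require $\kappa\lesssim d^2/\log^2 d$. The hidden-multiplicity bound of \cite{chewi2024query} follows essentially this route and reaches $\Omega(\sqrt\kappa\log d)$ only for $\kappa\le d^{1/5-\delta}$. Nothing in your proposal indicates how to go further.

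\textbf{The spectrum design.} A spectrum genuinely concentrated in short intervals of length $\eta$ at $1$ and $\kappa$ would lose the factor $\sqrt\kappa$. A signed measure there that is orthogonal to all polynomials of degree at most $2T$ integrates $x^{-1}$ to at most about $(\eta/2)^{T}$, rather than $\varrho_\kappa^{2T}$. The $\operatorname{tr}A^{-1}$ separation would then cap $T$ at $O(\log d)$, so the atoms must be spread across $[1,\kappa]$.

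\textbf{What the paper does instead.} The paper avoids any comparison of transcript laws. It uses a single fixed spectrum, the Chebyshev nodes $\lambda_i=\frac{\kappa+1}2+\frac{\kappa-1}2\cos\theta_i$. By deferred revelation of the Haar eigenbasis it proves an almost-sure structural fact: $O^\top X\in\mathcal K_q(g_1,\ldots,g_{q+1})$ for independent Gaussians $g_a$ (Lemma~\ref{lem:openai-rotation}).

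It then tests the output with an \emph{instance-dependent} event built from $x^\top T_n\bigl(\frac{2A-(\kappa+1)I_d}{\kappa-1}\bigr)x=u^\top D_nu$, where $u=O^\top x$ and $n=2q+1$. Products of two degree-$q$ cosine polynomials have frequencies below $n$, and the grid gives exact discrete orthogonality. Matrix Bernstein therefore yields, with probability $0.99$, $|u^\top D_nu|\le2\tau_{d,q}\|u\|^2$ uniformly on $\mathcal K_q$, with $\tau_{d,q}\approx q\sqrt{\log d/d}$ (Lemma~\ref{lem:openai-cosine}). Under $\pi_A$, the normalized mean is about $\varrho_\kappa^{n}=e^{-\Theta(n/\sqrt\kappa)}$, with a known sign (Lemma~\ref{lem:openai-target-bias}).

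Applying the per-instance TV guarantee to $E_A$ gives the contradiction once $q\le\frac{\delta}{256}\sqrt\kappa\log d$. The restriction $\kappa\le d^{1-\delta}$ is exactly the requirement that $\tau_{d,q}$, which is $\sqrt{\kappa/d}$ up to logarithmic factors, be beaten by $d^{-O(\delta)}$. Equivalently, the Krylov space of dimension $(q+1)^2$ must stay far from filling $\mathbb R^d$.

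This is the idea your proposal lacks. You need neither two spectra nor a TV bound on transcripts, only a statistic that is uniformly negligible on the Krylov space but not under the target.
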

This bound is tight at constant accuracy and, more generally, whenever $d^{-C}\le\varepsilon\le1/10$ for a fixed $C>0$, with matching constants depending on $\delta$ and $C$. 
For $\kappa\ge4$ and $\varepsilon\le\frac{1}{16}$ it is implied by Theorem~\ref{thm:gaussian-lower-bound} while the new
cases are $2\le\kappa<4$ and $\frac{1}{16}<\varepsilon\le\frac{1}{10}$. We include it because its
hard instances have a fixed spectrum, so the bound holds even for algorithms that know the spectrum
of the precision matrix.
%
Section~\ref{subsec:openai-limitations} explains why the analysis of these instances gives no dependence on
$\varepsilon$ and does not extend beyond $\kappa = d^{1-\delta}$.

\begin{remark}[\bf Connection to strongly log-concave sampling]
Our lower bounds also apply to sampling from $\pi\propto e^{-V}$
with $I_d\preceq\nabla^2V\preceq\kappa I_d$ using exact
value-and-gradient queries, even when the minimizer is known.
For $\kappa=2$ and TV accuracy $1/10$,
\cite{OAI:Subpolynomial-query-complexity-for-well-conditioned-log-concave-sampling-September-26-2026}
establishes an $O_\eta(d^\eta)$ upper bound for every fixed $\eta>0$,
leaving a gap from the $\Omega(\log d)$ lower bound, whereas
Gaussian sampling has complexity $\Theta(\log d)$ in this regime.
For general condition numbers and high accuracy,
\cite{chen2026accelerated} obtain TV error $\varepsilon$ using
$\widetilde O(\sqrt{\kappa}\,d^{1/4}
\operatorname{polylog}(1/\varepsilon))$
expected gradient queries with a warm start.
Thus, our results provide an improved lower bounds for general strongly log-concave sampling with almost tight lower bounds for Gaussian subclass.
\end{remark}

\section{Preliminaries}
\label{sec:preliminaries}

We write $\|\cdot\|$ for the Euclidean norm and $I_d$ for the
$d\times d$ identity matrix.
For symmetric matrices $A$ and $B$, the notation $A\preceq B$
means that $B-A$ is positive semidefinite.

Let
$O(n):=\{O\in\mathbb R^{n\times n}:O^\top O=I_n\}$
denote the real orthogonal group, equipped with Haar probability
measure.
It has two components, $\{\det O = +1\}$,
$\{\det O=-1\}$, and each have Haar probability $1/2$.
For $b\in\{-1,+1\}$, we write $\mu_{n,b}$ for Haar probability
measure conditioned on $\det O=b$.


\paragraph{Targets and oracle.}
We restrict the target class to centered Gaussians. 
This restriction removes the hardness of locating the target and isolates the difficulty of sampling with an unknown covariance.
For an integer $d\ge 1$ and a parameter $\kappa\ge 1$, without loss of generality, we normalize the spectral bounds to $1$
and $\kappa$ by rescaling and consider
the precision class
\[
    \mathcal A_{d,\kappa}
    :=
    \left\{
        A\in\mathbb R^{d\times d}:
        A=A^\top,\quad I_d\preceq A\preceq \kappa I_d
    \right\}.
\]
An input $A\in\mathcal A_{d,\kappa}$ specifies the quadratic
potential $f_A(x):=\frac12 x^\top Ax$ and the target distribution
$\pi_A:=N(0,A^{-1})$.
At a point $x\in\mathbb R^d$, the exact first-order oracle returns
\begin{equation}
    \operatorname{FO}_A(x)
    :=
    \bigl(f_A(x),\nabla f_A(x)\bigr)
    =
    \left(\frac12 x^\top Ax,Ax\right).
    \label{eq:first-order-oracle}
\end{equation}

On this class, exact first-order access is equivalent to exact
matrix--vector access.
Indeed, a first-order response contains $Ax$, while a
matrix--vector response determines both the gradient $Ax$
and the function value $\frac12\langle x,Ax\rangle$.
Thus the exact value-and-gradient, gradient-only, and
matrix--vector query complexities coincide.

\paragraph{Randomized algorithms.}
A randomized algorithm chooses its queries and stopping decisions
as Borel functions of its internal randomness and previous oracle
responses, where the randomness is independent of the input.
Its output may be any $\mathbb R^d$-valued Borel function of
the same information.
The algorithm has \emph{fixed query budget} $T$ if it makes
at most $T$ queries for every admissible input and every
realization of its internal randomness.

\paragraph{Accuracy measures.}
For probability measures $P$ and $Q$ on a common measurable space,
their total variation distance is
\[
    \operatorname{TV}(P,Q)
    :=
    \sup_B |P(B)-Q(B)|
    =
    \frac12\int |p-q|\,d\lambda,
\]
where the supremum ranges over measurable sets, and $p$ and $q$
are the densities of $P$ and $Q$ with respect to any common
dominating measure $\lambda$.
In particular,
$|\mathbb E_P h-\mathbb E_Q h|
\le \operatorname{TV}(P,Q)$
for every measurable function $h$ taking values in $[0,1]$.
\paragraph{Sampling accuracy and query complexity.}
For a fixed input $A$, let $\rho_A^{\mathrm{Alg}}$ denote
the law of the algorithm's output, with all internal
randomness included.
For $0<\varepsilon<1$, define
\begin{equation}
    Q_G(d,\kappa,\varepsilon)
    :=
    \inf\left\{
        T\in\mathbb Z_{\ge 0}:
        \begin{array}{l}
            \exists\,\mathrm{Alg}
            \text{ with fixed query budget }T
            \\[2pt]
            \text{such that }\displaystyle
            \sup_{A\in\mathcal A_{d,\kappa}}
            \operatorname{TV}
            \bigl(\rho_A^{\mathrm{Alg}},\pi_A\bigr)
            \le \varepsilon
        \end{array}
    \right\}.
    \label{eq:gaussian-query-complexity}
\end{equation}
The accuracy requirement is uniform over inputs and concerns
the output law after averaging over the algorithm's seed.
Throughout the paper, query budgets are deterministic caps
on the number of queries in each run, rather than bounds
on the expected number of queries.

\section{Lower Bound via Hidden Determinant Sign}
\label{sec:construction}
In this section, we establish a Gaussian sampling lower bound
that holds uniformly for all dimensions $d\ge1$,
arbitrarily large condition-number bounds $\kappa\ge4$,
and arbitrarily small total variation errors
$0<\varepsilon\le1/16$.
\begin{theorem}[\bf Gaussian sampling lower bound]
\label{thm:gaussian-lower-bound}
For every integer $d\ge1$, $\kappa\ge4$, and
$0<\varepsilon\le1/16$,
\begin{equation}
    Q_G(d,\kappa,\varepsilon)
    \ge
    \frac1{256}
    \min\left\{
        d,\,
        \sqrt{\kappa}
        \log\left(
            1+\frac{d}{\sqrt{\kappa}\,\varepsilon}
        \right)
    \right\}.
    \label{eq:gaussian-lower-bound}
\end{equation}
\end{theorem}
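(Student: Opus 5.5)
We plan to reduce sampling to a two-point testing problem about a hidden sign. The construction is an instance family indexed by an orthogonal matrix $O\in O(n)$, where $n\asymp\min\{d,\sqrt\kappa\log(1+d/(\sqrt\kappa\varepsilon))\}$. The precision matrix $A_O$ is engineered so that the output distribution $\pi_{A_O}$ reveals $\det O$ with advantage well above $2\varepsilon$. We then show that after $T\le n/256$ (or fewer than the budget) matrix--vector queries, the algorithm's transcript has nearly the same law under $\mu_{n,+1}$ and $\mu_{n,-1}$.

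\textbf{Construction.} Take $A_O$ to be a block matrix on $\mathbb R^d$ that embeds a Krylov-type chain. For example, use a path/tridiagonal operator of length $m\approx\sqrt\kappa\log(\cdot)$ whose spectrum lies in $[1,\kappa]$. One end is coupled through $O$ to a copy of $\mathbb R^n$ with an $n$-dimensional block, and the remaining $d-m-n$ coordinates are padded by identity. The determinant sign enters through a term like $\operatorname{sym}(O)$ or through the orientation of $O$ acting on $n$ coordinates. Each query reveals $O$ on only about one new direction, because information propagates one step along the chain per query. The geometric decay rate of the chain, of order $1-1/\sqrt\kappa$, explains the $\sqrt\kappa\log(\cdot)$ factor: to make the sign-dependent part of the target have total variation size at least $4\varepsilon$ while still being hidden from $T$ queries, the signal must survive a path of length $\sqrt\kappa\log(d/(\sqrt\kappa\varepsilon))$.

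\textbf{Indistinguishability.} Conditional on any $k<n$ orthonormal vectors $Ov_1,\dots,Ov_k$ with $v_i$ given, the remaining $n-k$ columns of $O$ are Haar on the orthogonal complement. Both determinant signs are equally likely and have identical laws for the revealed data, since we can flip one unrevealed column. So as long as the transcript reveals the action of $O$ on fewer than $n$ directions, the transcripts under $b=\pm1$ have identical laws; this is an exact statement, not an approximate one. By a chain argument, the $T$ queries expose at most $c\,T$ directions of $O$, and the chain length delays access, giving the bound $T\gtrsim n$ or $T\gtrsim$ the chain length.

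\textbf{Distinguishability of targets.} We must show $\operatorname{TV}\bigl(\mathbb E_{\mu_{n,+}}\pi_{A_O},\mathbb E_{\mu_{n,-}}\pi_{A_O}\bigr)>2\varepsilon$. The sign of $\det O$ is invisible in second moments, since the mixtures are mixtures of Gaussians and only higher-order structure distinguishes them. So we would use a test statistic such as the sign of $\det$ of an $n\times n$ matrix formed from the sample's coordinates. This statistic can only work if a single sample contains correlated coordinates $x_{\text{chain}}$ and $x_{\text{block}}$ with $x_{\text{block}}\approx O\,x'$ for a block $x'$ of size $n$, so that $\det$ of the joint $n$-block relation has a sign equal to $b$ with probability $\ge 1/2+4\varepsilon$. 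The sampler's output law, being within $\varepsilon$ of $\pi_{A_O}$ for each $O$, must therefore distinguish the signs with advantage at least $\operatorname{TV}-2\varepsilon$, which contradicts exact indistinguishability.

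\textbf{Main obstacle.} The main obstacle is this last step: quantifying, with explicit constants ($1/256$, $\varepsilon\le1/16$), how much sign information survives in one sample after attenuation along the chain. That attenuation is exactly what produces $\log(1+d/(\sqrt\kappa\varepsilon))$. I would first try Chebyshev-polynomial lower bounds on the entries of $A^{-1/2}$ or $A^{-1}$ at distance $m$, which decay like $e^{-2m/\sqrt\kappa}$. I would then aggregate $d/n$ parallel copies sharing the same $O$ to amplify the signal by $\sqrt{d}$-type factors, which accounts for the $d$ inside the logarithm. Finally, I would split into the cases $n=d$ and $n=\sqrt\kappa\log(\cdot)$.
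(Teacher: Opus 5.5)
Your top-level template matches the paper's: hide $\det O$ of a Haar orthogonal matrix, obtain exact transcript equality while few directions of $O$ are exposed, separate the two averaged targets, and amplify over $d/n$ blocks. But the proposal has a genuine gap, and two of its concrete choices would fail. First, amplifying with copies that \emph{share the same} $O$ destroys indistinguishability. A query $x=(x_1,\ldots,x_k,z)$ acts on every copy at once. With $k$ block-diagonal copies of one $n'$-dimensional block, the algorithm can place different standard basis vectors in different copies and read off that block, hence all of $A$, in $\lceil n'/k\rceil$ queries. It can then sample $\pi_A$ exactly at no further cost. The paper instead draws $O_1,\ldots,O_k$ independently from $\mu_{n,b}$, sharing only the sign $b$. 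Then a query exposes at most two new directions per block (through $O_jv$ and $O_j^\top v$). The condition $2T<n$ keeps every residual map Haar on at least two dimensions, so each response has a sign-free law. The single-block gap is then amplified by tensorizing the Hellinger affinity, giving $\operatorname{TV}(\bar\pi_{+1},\bar\pi_{-1})\ge1-\exp\{-\frac{k}{8}e^{-10n/\sqrt\kappa}\}$. Second, the chain plays no role in this oracle model. The coordinates are visible, so queries can be placed directly on the $O$-coupled block; a chain delays access only when its coordinates are hidden, e.g.\ by a global random rotation. Likewise, a target sample contains the block coordinates themselves, so there is no attenuation along the path to exploit.

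The decisive missing piece is the separation step you flag as the main obstacle, which is where all of the $\sqrt\kappa$ dependence comes from. The statistic you suggest, the sign of a determinant built from one sample, has no mechanism, since one Gaussian vector does not supply $n$ input--output pairs of $O$. What you need is that the block dimension itself plays the role of your chain length. Take $A_O=\frac{\kappa+1}{2}I_n-\frac{\kappa-1}{4}(O+O^\top)=\frac{1}{4}B^\top B$ with $B=(\sqrt\kappa+1)I_n-(\sqrt\kappa-1)O$. Then $\det(O)\sqrt{\det A_O}=2^{-n}\det\bigl((\sqrt\kappa+1)O-(\sqrt\kappa-1)I_n\bigr)$. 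Expand in principal minors and use the reflections $O\mapsto(I_n-2e_ie_i^\top)O$ for $i\ge2$; these preserve Haar measure and $U=O_{11}$ but flip every minor containing $i$. In any Haar average against a function of $U$, this kills all terms except those indexed by $\varnothing$ and $\{1\}$, leaving the factor $(\sqrt\kappa-1)^{n-1}$. A purely radial test $e^{-\lambda\|x\|^2}$ then suffices. Rotation invariance and Gaussian integration reduce its signed expectation to $\mathbb E[\det(O)\sqrt{\det A_O}\,h(U)]$ with $h(u)=(\kappa+1+4\lambda-(\kappa-1)u)^{-n/2}$. Choose $\lambda>0$ (the difference vanishes at $\lambda=0$) so that this denominator is a multiple of $(1+r^2-2rU)^{n/2}$, with $r=(1-\frac{1}{n})\frac{\sqrt\kappa-1}{\sqrt\kappa+1}$. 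The spherical Poisson identities then evaluate it exactly, giving $\operatorname{TV}(P_{+1},P_{-1})\ge\frac{1}{2}e^{-5n/\sqrt\kappa}$. Setting $n=2T+1$, the triangle inequality forces $ke^{-10n/\sqrt\kappa}\le32\varepsilon$ with $k\ge d/(2n)$. This solves to the stated bound after treating $T=0$ and $2T+1>d$ separately. Without a concrete construction and this computation, your proposal does not yet prove the theorem.
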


In the remainder of this section, we present the construction
in Section~\ref{subsec:hard-family}, state its three key properties
in Section~\ref{subsec:construction-properties}, and use them
to prove Theorem~\ref{thm:gaussian-lower-bound}
in Section~\ref{subsec:main-proof}.
The proofs of these properties are deferred to
Sections~\ref{sec:indistinguishability},~\ref{sec:mixture-separation} and~\ref{sec:regularity}.

\subsection{The Hard Gaussian Family}
\label{subsec:hard-family}

Fix $\kappa\ge4$.
Let $3\le n\le d$ be the block dimension and
$k:=\lfloor d/n\rfloor$ the number of blocks.
In the proof of the main theorem, we choose $n=2T+1$,
where $T$ is the algorithm's query budget.
This choice applies when $T\ge1$ and $2T+1\le d$.
The remaining cases are treated separately.
For $O\in O(n)$, define
\begin{equation}
    A_O
    :=
    \frac{\kappa+1}{2}I_n
    -
    \frac{\kappa-1}{4}(O+O^\top).
    \label{eq:single-block-precision}
\end{equation}
For $\boldsymbol O=(O_1,\ldots,O_k)\in O(n)^k$, set
\begin{equation}
    A_{\boldsymbol O}
    :=
    \operatorname{diag}
    \bigl(A_{O_1},\ldots,A_{O_k},I_{d-kn}\bigr).
    \label{eq:block-precision}
\end{equation}

For each $b\in\{-1,+1\}$, draw $O_1,\ldots,O_k$ independently
from $\mu_{n,b}$, and let $\mathcal D_b$ denote the law of
$A_{\boldsymbol O}$.
Thus the orthogonal matrices share the same determinant sign.

Writing
$P_b:=\mathbb E_{O\sim\mu_{n,b}}[N(0,A_O^{-1})]$
for the single-block target mixture, the averaged target
under $\mathcal D_b$ is
\begin{equation}
    \bar\pi_b
    :=
    \mathbb E_{A\sim\mathcal D_b}[\pi_A]
    =
    P_b^{\otimes k}\otimes N(0,I_{d-kn}).
    \label{eq:averaged-targets}
\end{equation}
When $kn=d$, the identity block and the final Gaussian factor
are omitted.

\subsection{Properties of the Construction}
\label{subsec:construction-properties}
This subsection states the three lemmas used in the proof of Theorem~\ref{thm:gaussian-lower-bound}.
Lemma~\ref{lem:construction-indistinguishability} shows that an algorithm with fewer than $n/2$
queries has the same average output law under the two priors,
Lemma~\ref{lem:construction-separation} shows that the two averaged targets are far apart in
total variation, and
Lemma~\ref{lem:construction-admissibility} verifies that the hard family belongs to
$\mathcal A_{d,\kappa}$.
All proofs are deferred to Appendix~\ref{app:b}.
\paragraph{Why the priors are indistinguishable.}
A first-order query can be simulated by observing $Ov$ and
$O^\top v$ in each block and each of these two observations fixes at most one new direction
of the unknown orthogonal map.
Conditionally on the history, the rest of $O_j$ is Haar distributed subject to its determinant sign,
and as long as two dimensions are unexplored the next response has the same law under either sign.
The condition $2T<n$ ensures this at every observation,
including when queries depend on responses from other blocks.
The proof is in Section~\ref{sec:indistinguishability}.

\begin{lemma}[\bf Indistinguishability]
\label{lem:construction-indistinguishability}
For every randomized first-order algorithm $\mathrm{Alg}$
with query budget $T<n/2$,
\begin{equation}
    \mathbb E_{A\sim\mathcal D_{+1}}
        [\rho_A^{\mathrm{Alg}}]
    =
    \mathbb E_{A\sim\mathcal D_{-1}}
        [\rho_A^{\mathrm{Alg}}].
    \label{eq:output-indistinguishability}
\end{equation}
\end{lemma}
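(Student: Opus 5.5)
We plan to prove the lemma by showing that the joint law of the full transcript (internal randomness, all queries, all responses) is the same under $\mathcal D_{+1}$ and $\mathcal D_{-1}$. Since the output is a Borel function of the transcript and the seed, equality of the output mixtures follows. First, fix the seed; by Fubini it suffices to treat deterministic adaptive algorithms and then average over the seed. Next, replace each query $x$ by block-wise observations. Writing $x=(x_1,\dots,x_k,x_0)$, the response $A_{\boldsymbol O}x$ is determined by $x$ together with $O_jx_j$ and $O_j^\top x_j$ for $j\le k$. So we simulate each oracle call by $2k$ elementary observations of the form $O_jv$ or $O_j^\top v$. This gives a more informative process, and it suffices to show that its transcript law does not depend on $b$. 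In each block the number of elementary observations is at most $2T<n$.

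The core step is a conditional-law statement for a single block. Suppose $O\sim\mu_{n,b}$ and we have observed $m$ pairs $(u_i,O u_i)$ and $(O^\top w_i, w_i)$, which together amount to $O u = u'$ for $u$ in some subspace $U$ of dimension $r\le m$, with image $U'=OU$. Then the conditional law of $O$ is that of $O = R'\,\mathrm{diag}(O|_U,\,Q)\,R^\top$. Here $R,R'$ are fixed orthonormal bases adapted to $U\oplus U^\perp$ and $U'\oplus U'^\perp$, and $Q$ is Haar on $O(n-r)$ conditioned on a determinant sign $\pm b$ that is fixed by the history. We will justify this by left/right invariance of $\mu_{n,b}$ under the stabilizer subgroup $\{\mathrm{diag}(I_r,S):S\in SO(n-r)\}$, applied on both sides. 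Equivalently, we compute the disintegration explicitly, using the fact that Haar measure restricted to matrices with the prescribed action on $U$ is Haar on the coset.

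Given this, the next observation is $Ov$ with $v=v_U+v_\perp$ (or $O^\top v$, handled symmetrically). Its value is $Ov_U$, which is already known, plus $R'_\perp Q\,(R_\perp^\top v_\perp)$. When $n-r\ge2$, the vector $Qy$ for $Q$ Haar on $SO(n-r)$, or on its other coset, is uniform on the sphere of radius $\|y\|$ in $U'^\perp$. Indeed, $SO(m)$ acts transitively on spheres for $m\ge2$, and the coset $\{\det=-1\}$ equals $SO(m)$ times a fixed reflection. The response law is therefore identical for both $b$. Because $r\le 2T-1<n-1$ before every observation, we always have $n-r\ge2$. We also check that the updated subspace grows by at most one dimension per observation.

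To handle adaptivity across blocks, we induct on the elementary observations. The blocks are independent under $\mathcal D_b$, so given the history the posterior factorizes over blocks. Each factor is of the form above, with sign determined by $b$ and that block's history. The next response depends on a single block, and its conditional law was just shown to be independent of $b$. By the chain rule for transcripts, the full transcript laws coincide. The main obstacle is making the conditional-Haar step rigorous measure-theoretically: we need a regular conditional distribution when conditioning on events of measure zero. We plan to avoid this by working directly with joint laws. Concretely, we show by induction that the map (history, $O$) has the same law under $\mu_{n,+1}$ and under $\mu_{n,-1}$ after composing with a random reflection $D$ fixing the explored subspace. We realize this as $O\mapsto O\,H$, where $H$ is a reflection in a direction orthogonal to the explored inputs; such an $H$ exists because $n-r\ge 1$. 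With a measurable choice of $H$, this map is a measure-preserving bijection $\mu_{n,+1}\to\mu_{n,-1}$ that leaves the transcript unchanged, which gives a clean coupling argument.
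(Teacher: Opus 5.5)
Up to your last paragraph, your argument is the paper's proof. You split each first-order query into observations $O_jv$, $O_j^\top v$, at most $2T<n$ per block. You keep a history-measurable decomposition $O_j=R'\operatorname{diag}(\cdot,Q)R^\top$ whose residual $Q$ is conditionally Haar on a determinant component. The new part of each response is uniform on a sphere while at least two residual dimensions remain, which $r\le 2T-1\le n-2$ guarantees. You finish by the chain rule over histories and averaging over the seed.

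\textbf{The gap is in your last paragraph, where you place the rigor.} You assert that $\Phi(\boldsymbol O)=(O_1H_1,\ldots,O_kH_k)$ is a measure-preserving bijection from $\mathcal D_{+1}$ onto $\mathcal D_{-1}$. Here each $H_j$ is a reflection chosen measurably from the transcript and fixing the explored domain subspace. Bijectivity and invariance of the transcript are fine. Measure preservation, however, does not follow from right invariance of Haar measure, because $H_j$ depends on $\boldsymbol O$ through the adaptive transcript. In fact, write $\tau$ for the transcript map. Then $\tau\circ\Phi=\tau$ together with $\Phi_*\mathcal D_{+1}=\mathcal D_{-1}$ already gives $\tau_*\mathcal D_{+1}=\tau_*\mathcal D_{-1}$. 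So your assertion is at least as strong as the lemma itself. To prove it by induction, you must know that, given the history, each residual map is Haar on its determinant component. Only then does right multiplication by a history-measurable reflection carry the conditional law under $b=+1$ to that under $b=-1$. That is exactly the conditional-Haar step the coupling was meant to avoid: the coupling repackages it rather than bypassing it.

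\textbf{The missing ingredient is an explicit single-observation decomposition; this is how the paper sidesteps abstract conditioning.} Let $W\sim\mu_{m,b}$ with $m\ge2$, and let $u$ be a unit vector. Take independent $Z\sim\sigma_{m-1}$ and $V\sim\mu_{m-1,b}$, and Borel frames $Q_z\in SO(m)$ with $Q_ze_1=z$. Set $\widetilde W=Q_Z\operatorname{diag}(1,V)Q_u^\top$.
\begin{itemize}
\item By construction, $\widetilde Wu=Z$ and $\det\widetilde W=b$.
\item For every $R\in SO(m)$ we have $R\widetilde W\stackrel{d}{=}\widetilde W$, since $Q_{RZ}^\top RQ_Z=\operatorname{diag}(1,S_R(Z))$ with $S_R(Z)\in SO(m-1)$.
\item Uniqueness of the invariant law on the component then gives $\widetilde W\sim\mu_{m,b}$.
\end{itemize}
This yields at once the uniform law of $Wu$ and the residual law $\mu_{m-1,b}$ given $Wu$, and likewise for $W^\top$. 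Iterate this along the transcript with $SO(n)$-valued history-measurable frames, and use independence of the blocks. You get a response kernel that does not depend on $b$. With that in hand, either your chain-rule argument or your coupling goes through.

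One small correction. For $O\mapsto OH$ to leave an inverse response unchanged, $H$ must fix the preimage $O_j^\top w$, because $(O_jH)^\top w=HO_j^\top w$. It need not fix the query vector $w$. Your subspace $U$ from the core step is the right one, but ``orthogonal to the explored inputs'' is not.
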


\paragraph{Why the target mixtures remain separated.}
Although low query budget algorithm cannot detect the determinant sign,
the norm of a target sample carries information about it.
Comparing the expectations of the radial test function $x\mapsto e^{-\lambda\|x\|^2}$ under the two
single-block mixtures gives $\mathrm{TV}(P_{+1},P_{-1})\ge\frac12e^{-5n/\sqrt{\kappa}}$.
A sample from the full target consists of $k$ blocks that are independent given the common sign, and
tensorization of the Hellinger affinity amplifies the single-block separation to the bound below.
Adding blocks does not help the algorithm, since a single query already acts on all blocks and
Lemma~\ref{lem:construction-indistinguishability} holds for every $k$.
We summarize this property in
Lemma~\ref{lem:construction-separation},
whose proof is given in Section~\ref{sec:mixture-separation}.

\begin{lemma}[\bf Separation]
\label{lem:construction-separation}
The averaged targets satisfy
\begin{equation}
    \operatorname{TV}(\bar\pi_{+1},\bar\pi_{-1})
    \ge
    1-\exp\left\{
        -\frac{k}{8}
        \exp\left(-\frac{10n}{\sqrt{\kappa}}\right)
    \right\}.
    \label{eq:averaged-target-separation}
\end{equation}
\end{lemma}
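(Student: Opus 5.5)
The plan has two steps: a single-block separation bound, then amplification across the $k$ independent blocks via the Hellinger (Bhattacharyya) affinity. The constants in \eqref{eq:averaged-target-separation} suggest the target single-block bound
\[
\operatorname{TV}(P_{+1},P_{-1})\ \ge\ \tfrac12 e^{-5n/\sqrt\kappa},
\]
which is what the paragraph before Lemma~\ref{lem:construction-separation} announces. The tensorization step is routine; the single-block bound is where the work lies.

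\emph{Tensorization.} Let $\mathrm{BC}(P,Q)=\int\sqrt{pq}$ and $H^2=1-\mathrm{BC}$. We use the standard inequalities $\operatorname{TV}\le\sqrt{2}\,H$, i.e.\ $H^2\ge \operatorname{TV}^2/2$, and $\operatorname{TV}\ge 1-\mathrm{BC}$.
\begin{itemize}
\item From the single-block bound,
\[
\mathrm{BC}(P_{+1},P_{-1})\le 1-\tfrac18 e^{-10n/\sqrt\kappa}\le \exp\!\bigl(-\tfrac18 e^{-10n/\sqrt\kappa}\bigr).
\]
\item By \eqref{eq:averaged-targets}, $\bar\pi_b=P_b^{\otimes k}\otimes N(0,I_{d-kn})$. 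The Bhattacharyya coefficient is multiplicative over products, and the common identity factor contributes $1$. Hence
\[
\mathrm{BC}(\bar\pi_{+1},\bar\pi_{-1})\le \exp\!\bigl(-\tfrac k8 e^{-10n/\sqrt\kappa}\bigr).
\]
\item Applying $\operatorname{TV}\ge 1-\mathrm{BC}$ gives exactly \eqref{eq:averaged-target-separation}.
\end{itemize}

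\emph{Single block: reduction to a spectral functional.} Use the radial test function $h_\lambda(x)=e^{-\lambda\|x\|^2}\in[0,1]$, so that $\operatorname{TV}(P_{+1},P_{-1})\ge|\mathbb E_{P_{+1}}h_\lambda-\mathbb E_{P_{-1}}h_\lambda|$. Since $\mathbb E_{N(0,A^{-1})}h_\lambda=\det(I+2\lambda A^{-1})^{-1/2}$, this is a class function of $O$:
\[
F(O)=\prod_{z\in\operatorname{spec}(O)} f(z)^{-1/2},\qquad
f(e^{i\theta})=1+\frac{2\lambda}{a(\theta)},\qquad
a(\theta)=\tfrac{\kappa+1}2-\tfrac{\kappa-1}2\cos\theta .
\]
The eigenvalue $z=1$ gives $a=1$ and $z=-1$ gives $a=\kappa$. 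The two determinant classes then differ in which real eigenvalues are forced; for odd $n$, $\mu_{n,-1}$ is the pushforward of $\mu_{n,+1}$ under $O\mapsto -O$. We want a lower bound on
\[
\mathbb E_{\mu_{n,+1}}F-\mathbb E_{\mu_{n,-1}}F \;=\; 2\,\mathbb E_{\mathrm{Haar}\,O(n)}\bigl[\det(O)\,F(O)\bigr].
\]

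\emph{Single block: evaluating the signed average (main obstacle).} I would evaluate $\mathbb E_{\mathrm{Haar}}[\det(O)F(O)]$ with the Weyl integration formula on both components of $O(n)$. This turns it into a Hankel/Toeplitz determinant, via Andreief's identity, built from the Fourier coefficients $\hat g_j$ of $g(\theta)=f(e^{i\theta})^{-1/2}$. The key analytic input is where $a$ vanishes: $a(\theta)=0$ at $\cos\theta=(\kappa+1)/(\kappa-1)$, i.e.\ $\theta=\pm i\,\eta$ with $\eta\approx 2/\sqrt\kappa$. So $g$ (and $a^{-1}$) is analytic only in a strip of width about $2/\sqrt\kappa$, and its Fourier coefficients decay no faster than like $e^{-2|j|/\sqrt\kappa}$. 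The signed determinant isolates the ``sign-representation'' component, which is a difference of the form (even minors) minus (odd minors) of this Toeplitz structure.

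To make this tractable I would first choose $\lambda$ of order $\kappa$, so that the forced eigenvalue at $z=1$ (where $a=1$) versus $z=-1$ (where $a=\kappa$) produces an $\Omega(1)$ multiplicative factor difference. The remaining work is to show that the paired-angle contributions shrink this by at most $e^{-5n/\sqrt\kappa}$. The heuristic is that about $n/2$ eigenangle pairs each cost a factor $e^{-O(1/\sqrt\kappa)}$, because the signed functional behaves like a Chebyshev-type extremal quantity on $[1,\kappa]$.

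I expect this quantitative evaluation, with the explicit constant $5$ and the factor $\tfrac12$, to be the hardest part. If the Weyl/Andreief route becomes unwieldy, my fallback is to replace $h_\lambda$ by a polynomial test in $\|x\|^2$ adapted to Chebyshev polynomials on $[1,\kappa]$ and compare moments directly.
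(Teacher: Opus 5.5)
\paragraph{The tensorization step.} This part is correct and essentially the paper's. The paper uses $\operatorname{TV}\le\sqrt{1-a^2}$ for the single-block affinity $a$, and you use $\operatorname{TV}\le\sqrt2\,H$. Both give $a\le\exp\bigl(-\tfrac18e^{-10n/\sqrt\kappa}\bigr)$, then multiplicativity over the $k$ blocks and $\operatorname{TV}\ge1-a^k$.

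\paragraph{The gap: the single-block bound.} The real content of the lemma is $\operatorname{TV}(P_{+1},P_{-1})\ge\tfrac12e^{-5n/\sqrt\kappa}$, and this is not proved. You set up the right quantity: your $\mathbb E_{\mathrm{Haar}}[\det(O)F(O)]$ is exactly the paper's $H(\lambda)$. But you never evaluate or bound it. The Weyl/Andreief computation is not carried out, the pair-counting heuristic is not an argument, and the fallback test is unspecified.

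Moreover, your concrete choice $\lambda\asymp\kappa$ would fail. The signed average has the closed form
\[
\mathbb E_{\mathrm{Haar}}\bigl[\det(O)F(O)\bigr]
=(-1)^{n-1}\,r^{n/2}\varrho^{\,n/2-1}\,\frac{r-\varrho}{1-r^2},
\qquad
\varrho=\frac{\sqrt\kappa-1}{\sqrt\kappa+1},
\]
where $r\in(0,\varrho)$ is defined by $(1+r^2)/(2r)=(\kappa+1+4\lambda)/(\kappa-1)$. For $\lambda\asymp\kappa$, $r$ stays bounded away from $1$. The test then certifies only a gap of order $e^{-\Theta(n)}$, far below $e^{-5n/\sqrt\kappa}$ once $n\gg1$ and $\kappa$ is large. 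The useful scale is $r=(1-1/n)\varrho$, i.e.\ $\lambda\approx\sqrt\kappa/(2n)+\kappa/(8n^2)$, which is the paper's choice.

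\paragraph{The missing idea: an exact evaluation.} What you need is a mechanism that produces this closed form. The paper avoids spectral asymptotics altogether, in four steps.
\begin{enumerate}
\item By conjugation invariance, $\mathbb E[\det(O)p_O(x)]=\mathbb E[\det(O)p_O(\|x\|e_1)]$. So the Gaussian exponent depends on $O$ only through $U=O_{11}$.
\item The factorization $A_O=\tfrac14M^\top M$ with $M=(\sqrt\kappa+1)I_n-(\sqrt\kappa-1)O$ gives $\det(O)\sqrt{\det A_O}=2^{-n}\det\bigl((\sqrt\kappa+1)O-(\sqrt\kappa-1)I_n\bigr)$.
\item Expand this determinant in principal minors. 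The reflections $I_n-2e_ie_i^\top$ with $i\ge2$ kill every minor except $S=\varnothing$ and $S=\{1\}$ when averaged against any function of $U$.
\item The remaining one-dimensional average is given exactly by the two Poisson-kernel identities on $\mathbb S^{n-1}$.
\end{enumerate}

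Your class-function route can also be completed exactly, without Toeplitz asymptotics. One has $F(O)=(r/\varrho)^{n/2}\det(I_n-\varrho O)/\det(I_n-rO)$. Expand
\[
\det(O)\det(I_n-\varrho O)=\sum_m(-\varrho)^{n-m}e_m(O),
\qquad
\det(I_n-rO)^{-1}=\sum_kr^kh_k(O).
\]
For $n\ge3$, $\mathbb E_{\mathrm{Haar}}[e_m(O)h_k(O)]$ is the multiplicity of $\Lambda^m$ in $\mathrm{Sym}^k$. It equals $1$ if $m=0$ and $k$ is even, or if $m=1$ and $k$ is odd, and $0$ otherwise. Summing the two surviving geometric series gives the displayed formula.

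In either version, the proof is finished by the elementary bounds $(1-1/n)^{n/2}\ge\tfrac12$, $n(1-r^2)\le2+4n/\sqrt\kappa\le2e^{2n/\sqrt\kappa}$, and $\varrho^n\ge e^{-3n/\sqrt\kappa}$.
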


\paragraph{Why the construction satisfies the regularity assumptions.}
The symmetric part $(O+O^\top)/2$ has spectrum in $[-1,1]$.
The affine transformation in Eq.~\eqref{eq:single-block-precision}
therefore places the spectrum of $A_O$ in $[1,\kappa]$.
Forming a block-diagonal matrix and adding an identity block
preserves these bounds. 
This gives the following lemma with proof in Section~\ref{sec:regularity}.

\begin{lemma}[\bf Regularity]
\label{lem:construction-admissibility}
For every $\boldsymbol O\in O(n)^k$,
$A_{\boldsymbol O}\in\mathcal A_{d,\kappa}$.
The potential $f_{A_{\boldsymbol O}}$ has unique minimizer $0$
and minimum value $0$.
\end{lemma}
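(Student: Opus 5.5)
The plan is to verify the spectral bounds block by block and then assemble them. Fix $O\in O(n)$ and set $S:=(O+O^\top)/2$, which is symmetric. For any unit vector $x\in\mathbb R^n$,
\[
x^\top S x = x^\top O x,
\]
and Cauchy--Schwarz gives $|x^\top Ox|\le\|x\|\,\|Ox\|=1$, since $O$ is an isometry. Hence every eigenvalue of $S$ lies in $[-1,1]$, that is, $-I_n\preceq S\preceq I_n$.

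Rewriting the block as
\[
A_O=\frac{\kappa+1}{2}I_n-\frac{\kappa-1}{2}S,
\]
and using $\kappa\ge 4>1$ so that the coefficient $\frac{\kappa-1}{2}$ is positive, this affine map sends the interval $[-1,1]$ onto $[1,\kappa]$ with the order reversed. Consequently $I_n\preceq A_O\preceq\kappa I_n$, and $A_O$ is symmetric by construction.

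For $A_{\boldsymbol O}$, note that it is block diagonal with blocks $A_{O_1},\ldots,A_{O_k}$ and possibly $I_{d-kn}$. Its spectrum is the union of the blocks' spectra, all contained in $[1,\kappa]$ because $1\le\kappa$. Equivalently, for $x=(x_1,\ldots,x_k,x_0)$,
\[
x^\top A_{\boldsymbol O}x=\sum_j x_j^\top A_{O_j}x_j+\|x_0\|^2\in\bigl[\|x\|^2,\kappa\|x\|^2\bigr].
\]
Together with symmetry, this shows $A_{\boldsymbol O}\in\mathcal A_{d,\kappa}$.

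The statement about the minimizer then follows directly. Since $A_{\boldsymbol O}\succeq I_d$, we have $f_{A_{\boldsymbol O}}(x)=\tfrac12x^\top A_{\boldsymbol O}x\ge\tfrac12\|x\|^2$, with equality to $0$ only at $x=0$. So $0$ is the unique minimizer and the minimum value is $0$.

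The only step that needs care is the bound $\|S\|\le1$. It follows from the quadratic-form identity together with the orthogonality of $O$; the remaining steps are routine.
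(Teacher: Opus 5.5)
Your proposal is correct and follows essentially the same route as the paper. Both bound $\langle v,Ov\rangle$ by Cauchy--Schwarz to get $I_n\preceq A_O\preceq\kappa I_n$ blockwise, assemble the block-diagonal bound via the quadratic form (with $\kappa\ge1$ handling the identity block), and deduce the unique minimizer from $f_{A_{\boldsymbol O}}(x)\ge\tfrac12\|x\|^2$.
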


\subsection{Proof of Theorem~\ref{thm:gaussian-lower-bound}}
\label{subsec:main-proof}

Let $\mathrm{Alg}$ have fixed query budget $T$ and satisfy
$\operatorname{TV}(\rho_A^{\mathrm{Alg}},\pi_A)\le\varepsilon$
for every $A\in\mathcal A_{d,\kappa}$.

\paragraph{Zero-query cases.}
A zero-query algorithm has the same output distribution on
$A_0=I_d$ and $A_1=\operatorname{diag}(\kappa,1,\ldots,1)$.
However, evaluating the event $\{|x_1|\le1/2\}$ and using
$\kappa\ge4$ gives
\[
    \operatorname{TV}(\pi_{A_0},\pi_{A_1})
    \ge
    2\int_{1/2}^{1}\frac{e^{-t^2/2}}{\sqrt{2\pi}}\,dt
    \ge \frac{e^{-1/2}}{\sqrt{2\pi}}
    > \frac18
    \ge 2\varepsilon,
\]
contradicting the triangle inequality.
Thus $T\ge1$.
If $2T+1>d$, integrality gives $T\ge d/2$, which proves
the claimed bound.
Henceforth assume $2T+1\le d$.

\paragraph{Comparing the two priors.}
Choose $n:=2T+1$ and $k:=\lfloor d/n\rfloor$.
Then $3\le n\le d$ and $2T<n$, so
Lemma~\ref{lem:construction-indistinguishability}
gives a common averaged output distribution $\bar\rho$
under the two priors.
By Lemma~\ref{lem:construction-admissibility} and convexity
of total variation under mixing,
\[
    \operatorname{TV}(\bar\rho,\bar\pi_b)
    \le
    \mathbb E_{A\sim\mathcal D_b}
    \left[
        \operatorname{TV}(\rho_A^{\mathrm{Alg}},\pi_A)
    \right]
    \le\varepsilon,
    \qquad b\in\{-1,+1\}.
\]
The triangle inequality and
Lemma~\ref{lem:construction-separation} therefore imply
\[
    1-\exp\left\{
        -\frac{k}{8}e^{-10n/\sqrt{\kappa}}
    \right\}
    \le
    \operatorname{TV}(\bar\pi_{+1},\bar\pi_{-1})
    \le2\varepsilon.
\]
Rearranging yields
\begin{equation}
    k e^{-10n/\sqrt{\kappa}}
    \le -8\log(1-2\varepsilon)
    \le32\varepsilon,
    \label{eq:block-error-constraint}
\end{equation}
where the last inequality uses
$-\log(1-2\varepsilon)
=\int_0^{2\varepsilon}(1-t)^{-1}\,dt
\le4\varepsilon$.

\paragraph{Bounding the query budget.}
Since $k\ge d/(2n)$ and $n=2T+1\le3T$,
Eq.~\eqref{eq:block-error-constraint} gives
\begin{equation}
    \frac{d}{64\varepsilon}
    \le n e^{10n/\sqrt{\kappa}}
    \le3T e^{30T/\sqrt{\kappa}}.
    \label{eq:exponential-budget-constraint}
\end{equation}
Using $\sinh t\ge t$ for $t\ge0$, we obtain
\[
    e^{60T/\sqrt{\kappa}}-1
    =
    2e^{30T/\sqrt{\kappa}}
    \sinh\left(\frac{30T}{\sqrt{\kappa}}\right)
    \ge
    \frac{60T}{\sqrt{\kappa}}e^{30T/\sqrt{\kappa}}
    \ge
    \frac{5d}{16\sqrt{\kappa}\varepsilon}.
\]
Taking logarithms and using concavity of the logarithm,
\[
    T
    \ge
    \frac{\sqrt{\kappa}}{60}
    \log\left(1+\frac{5d}{16\sqrt{\kappa}\varepsilon}\right)
    \ge
    \frac{\sqrt{\kappa}}{192}
    \log\left(1+\frac{d}{\sqrt{\kappa}\varepsilon}\right).
\]
The second inequality follows by writing
$1+5d/(16\sqrt{\kappa}\varepsilon)
=(11/16)+(5/16)(1+d/(\sqrt{\kappa}\varepsilon))$.

Combining this with the alternative $T\ge d/2$ yields
\[
    T\ge
    \frac1{256}
    \min\left\{
        d,\,
        \sqrt{\kappa}
        \log\left(1+\frac{d}{\sqrt{\kappa}\varepsilon}\right)
    \right\}.
\]
Since $\mathrm{Alg}$ was arbitrary, this proves the bound
for $Q_G(d,\kappa,\varepsilon)$.


\section{Lower Bound via Hidden Eigenbasis}
\label{sec:5}

The preprint~\citep{OAI:Subpolynomial-query-complexity-for-well-conditioned-log-concave-sampling-September-26-2026} proved an $\Omega(\log d)$ query lower bound for centered Gaussian
sampling with $I_d\preceq A\preceq2I_d$ at total variation accuracy $\varepsilon=\frac{1}{10}$.
%
%
We subsequently examine the limitations of this analysis for establishing accuracy-dependent lower bounds in Section~\ref{subsec:openai-limitations}.

\begin{theorem}[\bf Extension to general $\kappa$]
\label{thm:openai-general-kappa}
For every fixed $\delta\in(0,1)$, there exists $d_0(\delta)$ such
that, for every integer $d\ge d_0(\delta)$,
$2\le\kappa\le d^{1-\delta}$, and $0<\varepsilon\le1/10$,
\[
Q_G(d,\kappa,\varepsilon)
\ge \frac{\delta}{256}\sqrt{\kappa}\log d.
\]
\end{theorem}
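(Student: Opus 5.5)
We will extend the hidden-eigenbasis argument of the cited preprint from $\kappa=2$ to general $\kappa\le d^{1-\delta}$. The hard family is $A=U\Lambda U^\top$, where $U$ is Haar on $O(d)$ and $\Lambda$ is a fixed diagonal spectrum in $[1,\kappa]$. Two fixed spectra $\Lambda^{(0)}$ and $\Lambda^{(1)}$ will be used. We choose them so that their empirical spectral measures agree in the first $m$ moments, with $m\asymp\sqrt{\kappa}\log d$ up to a $\delta$-dependent constant, while the induced target laws have separated radial statistics.

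The plan has three steps.

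\textbf{Step 1: query access only sees moments.} An algorithm making $T$ adaptive queries sees the Krylov-type data $\langle v_i,A^j v_l\rangle$. Under Haar conjugation, the joint law of the responses depends on the spectrum only through the moments $\frac1d\operatorname{tr}A^j$ with $j\le 2T$, up to an error. We will bound that error following the preprint: Gram–Schmidt the query vectors, then compare the conditional laws of the new response directions given the history. The difference is controlled by the fluctuation of Haar projections, giving a TV error $O(T^2/d)$ or $O(\mathrm{poly}(T)/d)$. If $T\le c\sqrt{\kappa}\log d$ and $\kappa\le d^{1-\delta}$, this is $o(1)$.

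\textbf{Step 2: building the spectra.} Take $\lambda=\frac{\kappa+1}{2}-\frac{\kappa-1}{2}\cos\theta$. We use Chebyshev/Gauss-quadrature matching on $[1,\kappa]$ to produce two discrete measures that agree on polynomials of degree at most $2T$ but differ by a constant amount in $\int\lambda^{-1/2}$-type functionals. For example, let one measure be supported near the bottom of the spectrum and the other slightly perturbed there. The point is that approximating $x^{-1/2}$ or $e^{-s/x}$ on $[1,\kappa]$ to accuracy $d^{-\delta/2}$ requires degree $\asymp\sqrt{\kappa}\log d$ (Chebyshev lower bound). By duality, there exist two measures agreeing on all degree-$2T$ moments yet differing by at least $d^{-\delta/2}$ on that function. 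With $d$ eigenvalues we can realize these measures with atoms of mass $1/d$, and rounding costs $O(T/d)$ per moment.

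\textbf{Step 3: separation of targets.} For $x\sim N(0,A^{-1})$, the statistic $\|x\|^2$ concentrates around $\operatorname{tr}A^{-1}$ with fluctuation $O(\sqrt d)$. Hence a mean difference of order $d\cdot d^{-\delta/2}\gg\sqrt d$ would give separation. This requires arranging the difference to be larger than $d^{1/2}$, which again constrains $\delta$ and $T$, yielding $T\ge \frac{\delta}{256}\sqrt\kappa\log d$.

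\textbf{Main obstacle.} The hardest step is Step 1. Making the moment-matching indistinguishability quantitative for adaptive algorithms with degree up to $\sqrt\kappa\log d$ is difficult, since error terms grow like $\kappa^{j}$-weighted moment deviations. The plan is to normalize $A$ to $B=(2A-(\kappa+1)I)/(\kappa-1)$ with spectrum in $[-1,1]$, where powers stay bounded. We then redo the preprint's conditional Haar comparison with that normalization, which is where the restriction $\kappa\le d^{1-\delta}$ should enter.
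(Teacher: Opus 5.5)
Your Step~1 carries the entire difficulty, and as stated it is false. Take $T=1$ and a fixed unit query $v$. With $w=U^\top v$ uniform on the sphere, the law of $Av$ is determined by $\bigl(\sum_i\lambda_iw_i^2,\sum_i\lambda_i^2w_i^2\bigr)$, since the component in $v^\perp$ has a uniform direction. These two scalars fluctuate at scale $d^{-1/2}$, with $\operatorname{Var}(\|Av\|^2)=\frac{2}{d+2}\bigl(\frac1d\operatorname{tr}A^4-(\frac1d\operatorname{tr}A^2)^2\bigr)$. If the spectra agree only up to degree $2T=2$, this variance generically differs by a constant factor. So one query already separates the priors with constant TV advantage, not $O(\mathrm{poly}(T)/d)$. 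In general, $T$ Krylov queries reveal $\langle v,A^{2T}v\rangle$, whose variance involves $\operatorname{tr}A^{4T}$.

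Matching moments to degree a large multiple of $T$ is affordable, since it only changes constants in Step~2. But through its higher cumulants the transcript law depends on power sums of arbitrarily high degree. You would therefore still need a TV comparison for the $\Theta(T^2)$ rotation-invariant scalars of an adaptive transcript, each a sum of $d$ independent terms, with $T^2\asymp\kappa\log^2 d$ as large as $d^{1-\delta}\log^2 d$. Nothing in your sketch supplies this. The normalization $B=(2A-(\kappa+1)I)/(\kappa-1)$ fixes the scaling of the powers, not this dimension count. The closely related hidden-multiplicity route of \cite{chewi2024query} stops at $\kappa\le d^{1/5-\delta}$, and your sketch gives no argument that beats it.

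Nor is this ``following the preprint.'' The preprint's deferred-rotation lemma, like Lemma~\ref{lem:openai-rotation}, is an exact structural statement about a single spectrum and compares no two input laws. Two smaller issues: rounding the dual measures to atoms of mass $1/d$ breaks exact matching, and $O(T/d)$ moment errors across many $d^{-1/2}$-scale statistics are not negligible for free. Also, for the $\|x\|^2$ test of Step~3, the functional that must separate is $\int\lambda^{-1}$, not $\lambda^{-1/2}$.

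The idea you are missing is that the accuracy requirement is per input. The test event may therefore depend on $A$, and no second spectrum or transcript comparison is needed. With one Chebyshev-node spectrum, the paper tests whether $(-1)^nx^\top T_n\bigl(\tfrac{2A-(\kappa+1)I_d}{\kappa-1}\bigr)x>2\tau\|x\|^2$ with $n=2q+1$.

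On the algorithm side, deferred Haar revelation gives the exact inclusion $O^\top X\in\mathcal K_q(g_1,\ldots,g_{q+1})$ with independent Gaussians. Because $n>2q$, discrete orthogonality of cosines on the grid makes the $D_n$-weighted Gram matrix of the generators of $\mathcal K_q$ have mean zero. Matrix Bernstein then bounds $|u^\top D_nu|\le 2\tau\|u\|^2$ uniformly on $\mathcal K_q$, with $\tau\approx q\sqrt{\log d/d}$. This is where $\kappa\le d^{1-\delta}$ enters, through $(q+1)^2\ll d$.

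On the target side, the normalized bias is at least $(1-e^{-4})e^{-3n/\sqrt\kappa}$. This is polynomially larger than $\tau$ when $q\le\frac{\delta}{256}\sqrt\kappa\log d$, and Chebyshev's inequality finishes the argument. An exact structural lemma plus one-sided concentration thus replaces the transcript TV comparison your plan cannot supply.
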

Compared with Theorem~\ref{thm:gaussian-lower-bound}, this bound also covers $2\le\kappa<4$ and
$\frac{1}{16}<\varepsilon\le\frac{1}{10}$, and it holds even when the spectrum of $A$ is known to
the algorithm. On the other hand, it does not grow as $\varepsilon\to0$ and it requires $\kappa\le
d^{1-\delta}$.

\subsection{The Hard Family}
\label{subsec:openai-hard-family}
We use a construction similar to that in~\cite[Section~9.2]{OAI:Subpolynomial-query-complexity-for-well-conditioned-log-concave-sampling-September-26-2026},
with a cosine eigenvalue grid (Chebyshev nodes) and a random orthogonal rotation, while extending the spectral
interval from $[1,2]$ to $[1,\kappa]$.
For $d\ge2$ and $\kappa\ge2$, define
\begin{equation}
\label{eq:openai-hard-spectrum}
\theta_i=\frac{(i-\tfrac12)\pi}{d},
\qquad
\lambda_i=\frac{\kappa+1}{2}
          +\frac{\kappa-1}{2}\cos\theta_i,
\qquad
\Lambda=\operatorname{diag}(\lambda_1,\ldots,\lambda_d).
\end{equation}
Draw $O\sim\operatorname{Haar}(\mathrm{O}(d))$, independently
of the algorithm's private randomness, and set $A=O\Lambda O^\top$.
The associated potential and target are
$f_A(x)=x^\top Ax/2$ and $\pi_A=N(0,A^{-1})$.

The spectrum is deterministic and may be supplied to the algorithm.
In particular, the Gaussian normalizing constant is known.
The unknown component is the random eigenbasis represented by $O$.

\begin{remark}[\bf Comparison with $\kappa=2$ case]
At $\kappa=2$, the spectrum in
Eq.~\eqref{eq:openai-hard-spectrum} recovers the original construction in \cite{OAI:Subpolynomial-query-complexity-for-well-conditioned-log-concave-sampling-September-26-2026}.
More generally, if $A_2$ denotes the original precision matrix
with the same rotation $O$, then
$A=(\kappa-1)A_2-(\kappa-2)I_d$.
Thus one matrix--vector reply for either matrix determines
the corresponding reply for the other.
The proof compares $\sum_i\cos(n\theta_i)\langle x,Oe_i\rangle^2$ with $\|x\|^2$, where $n=2q+1$.
For the output of a $q$-query algorithm, the ratio is at most $d^{-\Omega(1)}$ in absolute value
(Lemma~\ref{lem:openai-cosine}), whereas under the target the ratio of the means has absolute value
$e^{-\Theta(n/\sqrt{\kappa})}$ (Lemma~\ref{lem:openai-target-bias}).
Hence the test works for $n$ up to order $\sqrt{\kappa}\log d$, rather than $\log d$ as at
$\kappa=2$, which is the source of the factor $\sqrt{\kappa}$ in
Theorem~\ref{thm:openai-general-kappa}.
%
%
\end{remark}

\subsection{Properties of the Hard Family}
\label{subsec:openai-properties}
This subsection states the four lemmas used in the proof of Theorem~\ref{thm:openai-general-kappa}.
Lemma~\ref{lem:openai-rotation} places the output of any $q$-query algorithm in a block Krylov space generated by independent
Gaussian vectors,
Lemma~\ref{lem:openai-cosine} bounds a quadratic form with cosine weights
uniformly over this space,
Lemma~\ref{lem:openai-target-bias} evaluates the same quadratic form under the target, and
Lemma~\ref{lem:openai-admissibility} verifies that the hard family is admissible.
All proofs are deferred to Appendix~\ref{app:c}.

\paragraph{Krylov representation of the output.}
We first control the information revealed by adaptive queries via a standard Krylov argument.
The deferred-rotation argument of~\cite[Lemma~9.2]{OAI:Subpolynomial-query-complexity-for-well-conditioned-log-concave-sampling-September-26-2026} uses the rotational symmetry
of the undisclosed eigenbasis to represent the algorithm's
output in a block Krylov space generated by
independent Gaussian vectors.

For vectors $g_1,\ldots,g_{q+1}\in\mathbb R^d$ and $\Lambda=\operatorname{diag}(\lambda_1,\ldots,\lambda_d)$, define
\[
\mathcal K_q(g_1,\ldots,g_{q+1})
=
\operatorname{span}\{
\Lambda^j g_a:
1\le a\le q+1,\ 0\le j\le q
\}.
\]

\begin{lemma}[\bf Krylov representation]
\label{lem:openai-rotation}
Fix an integer $q\ge0$ with $2q+1<d$ and let $X$ be the output. There exist independent vectors
$g_1,\ldots,g_{q+1}\sim N(0,I_d)$ such that
\[
O^\top X\in\mathcal K_q(g_1,\ldots,g_{q+1})
\qquad\text{almost surely}.
\]
\end{lemma}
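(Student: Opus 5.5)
We prove the lemma by a deferred-rotation (``lazy Haar'') simulation. The algorithm is run against an oracle that never fixes the full matrix $O$ in advance; it only ever works with fresh standard Gaussian vectors. We then check that this simulation produces the correct joint law of $(O,\text{transcript},X)$.

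First, reduce to deterministic algorithms by conditioning on the internal seed, which is independent of $O$. Without loss of generality the algorithm makes exactly $q$ queries $x_1,\dots,x_q$ and then outputs $X$. Treat the output as one more ``query'' $x_{q+1}:=X$, whose response is not used. Write $y_t:=O^\top x_t$, so the oracle reply is $Ax_t=O\Lambda y_t$.

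Main induction claim: for every $t\le q+1$, almost surely
\[
y_t\in\mathcal K^{(t)}:=\operatorname{span}\{\Lambda^j g_a:\ a\le t,\ 0\le j\le t-1\}.
\]
Here the $g_a$ are built sequentially as follows.

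Let $Q_{t-1}$ be an orthonormal basis of $V_{t-1}:=\operatorname{span}\{x_s,Ax_s: s<t\}$. This is the span of everything the algorithm has sent or seen; call the corresponding history $\mathcal H_{t-1}$. By induction, $O^\top V_{t-1}\subseteq \Lambda$-Krylov span $W_{t-1}:=\operatorname{span}\{\Lambda^j g_a: a<t,\ j\le t-1\}$. This holds because $O^\top A x_s=\Lambda y_s$ raises the Krylov degree by one.

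Rotational invariance of Haar measure gives the key fact. Conditional on $\mathcal H_{t-1}$, the unknown part of $O$ is Haar on the orthogonal maps sending $V_{t-1}^\perp$ onto $(O^\top V_{t-1})^\perp$. This uses left-invariance under rotations fixing $V_{t-1}$, which preserve all past responses.

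The new query decomposes as $x_t=P_{V_{t-1}}x_t+u_t$ with $u_t\perp V_{t-1}$. Then
\[
O^\top x_t = O^\top P_{V_{t-1}}x_t + \|u_t\|\,\xi_t,
\]
where $\xi_t$ is uniform on the unit sphere of $(O^\top V_{t-1})^\perp$. We realize it as the normalized projection of a fresh $g_t\sim N(0,I_d)$, independent of the past, onto $(O^\top V_{t-1})^\perp$. Since $O^\top V_{t-1}\subseteq W_{t-1}$, that projection lies in $\operatorname{span}(g_t,O^\top V_{t-1})\subseteq \operatorname{span}(g_t)+W_{t-1}$. Hence $y_t\in\mathcal K^{(t)}$, and then $\Lambda y_t\in\mathcal K^{(t+1)}$-type spans, which keeps the induction going.

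After $q+1$ steps, $O^\top X=y_{q+1}\in\mathcal K_q(g_1,\dots,g_{q+1})$, with Krylov degree at most $q$.

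The main obstacle is making the sequential coupling rigorous. We must construct a single probability space carrying $O\sim$ Haar together with independent $g_1,\dots,g_{q+1}$ such that the joint law of $(O,\text{history})$ is correct. For this we build $O$ column-by-column (a lazy Gram--Schmidt of Gaussians) and check that the final $O$ is Haar. We do this by verifying that the law is invariant under left multiplication by any fixed orthogonal map; applying such a map to the $g_a$ preserves their law. The hypothesis $2q+1<d$ guarantees that $\dim V_{t-1}\le 2(t-1)\le 2q<d$ at every step. So the orthogonal complement is nontrivial and the conditional Haar description applies, degenerate cases ($u_t=0$, dependent queries) have probability zero or are trivially covered, and the remaining columns of $O$ can be completed by independent Haar on the leftover complement.
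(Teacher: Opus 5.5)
Your skeleton matches the paper's proof: the revealed latent subspace $E_{t-1}:=O^\top V_{t-1}=\operatorname{span}\{y_s,\Lambda y_s:s<t\}$, the uniform new direction $\xi_t$ and its Gaussianization, the Krylov-degree induction, and the count $\dim V_{t-1}\le 2q<d$. The gap is in the step you flag as the main obstacle.

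Building $O$ lazily from the $g_a$ and certifying it is Haar by ``applying a fixed orthogonal map to the $g_a$'' does not work. The $g_a$ live in eigenbasis coordinates, where a rotation does not commute with $\Lambda$, and your construction feeds $\Lambda y_t$ back in. A rotation $U$ of the original coordinates replaces $A$ by $UAU^\top$, which changes the adaptive queries. So rotating the $g_a$ does not turn the constructed map into $UO$, and the check has nothing to act on. A lazy construction would also need fresh randomness on the image side, for the new component of $O\Lambda y_t$, which the $g_a$ do not supply.

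The paper avoids verifying anything about $O$. It keeps $O\sim\operatorname{Haar}(\mathrm O(d))$, adjoins independent $Z_t\sim N(0,I_d)$, and sets $g_t:=P_{E_{t-1}}Z_t+\|P_{E_{t-1}^\perp}Z_t\|\,\xi_t$, or $g_t:=Z_t$ when $x_t\in V_{t-1}$. The latter case has positive probability, e.g.\ for a repeated query, so it is not a null event. This construction leaves the joint law of $(O,\text{seed},\text{transcript},X)$ untouched. It gives $y_t\in E_{t-1}+\operatorname{span}\{g_t\}$ almost surely, and by Fact~\ref{fact:openai-gaussianize} it yields $\mathcal L(g_t\mid\mathcal G_{t-1})=N(0,I_d)$, hence mutual independence.

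For that last identity, $\mathcal G_{t-1}$ must contain $g_1,\dots,g_{t-1}$, which are functions of the latent vectors $y_s=O^\top x_s$. So the conditional Haar property must be stated for the enlarged history $\sigma(\text{seed},\text{transcript},O^\top|_{V_{t-1}},Z_1,\dots,Z_{t-1})$, which is exactly the paper's $\mathcal H_{t-1}$. Conditioning on the algorithm's transcript alone, as you wrote, does not even determine $(O^\top V_{t-1})^\perp$.

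Your invariance argument does extend to this $\sigma$-algebra, since rotations fixing $V_{t-1}$ leave $O^\top|_{V_{t-1}}$ and the $Z_s$ untouched. But the stabilizer depends on the just-observed reply, so showing that $O\mapsto R_{h(O)}O$ preserves Haar measure needs an actual argument. The paper handles this by revealing one vector at a time (first $y_t$, then $O\Lambda y_t$) and averaging only over the stabilizer of an already known vector (Facts~\ref{fact:openai-haar} and~\ref{fact:openai-step}). If you keep your lazy direction, the same step-by-step matching of conditional laws is what proves the lazily built $O$ is Haar.
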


The additional starting vector accounts for the 
output, which may lie outside the span of the query points
and oracle replies.
Equivalently, the lemma gives a representation
$O^\top X=\sum_{a=1}^{q+1}p_a(\Lambda)g_a$,
where each $p_a$ has degree at most $q$.
The polynomial coefficients may depend on the entire
extended experiment.
We therefore control the output through an estimate that
holds simultaneously for every vector in $\mathcal K_q$.
The proof of Lemma~\ref{lem:openai-rotation} is deferred to
Appendix~\ref{app:openai-properties}.

\paragraph{A uniform bound on the Krylov space.}
We next control a high-frequency cosine statistic uniformly over
the space in Lemma~\ref{lem:openai-rotation}.
For a query budget $q$, set
\[
n=2q+1,
\qquad
D_n=\operatorname{diag}
  \bigl(\cos(n\theta_1),\ldots,\cos(n\theta_d)\bigr).
\]
The choice of $n$ places this statistic above all frequencies
arising from products of two degree-$q$ cosine polynomials.

\begin{lemma}[\bf Uniform bound on the Krylov space]
\label{lem:openai-cosine}
There is an absolute constant $C>0$ such that the following holds.
Let $q\ge0$ satisfy $2q+1<d$, let
$g_1,\ldots,g_{q+1}$ be independent $N(0,I_d)$ vectors, and let
$\mathcal K_q=\mathcal K_q(g_1,\ldots,g_{q+1})$.
Define
\[
\tau_{d,q}
=
C\left[
(q+1)\sqrt{\frac{\log d}{d}}
+
\frac{(q+1)(q+1+\log d)\log d}{d}
\right].
\]
If $\tau_{d,q}\le1/2$, then, with probability at least $0.99$,
\[
|u^\top D_nu|
\le 2\tau_{d,q}\|u\|^2
\qquad
\text{for every }u\in\mathcal K_q.
\]
\end{lemma}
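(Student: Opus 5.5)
The plan is to reduce the uniform bound to a statement about a fixed finite-dimensional matrix and then apply an $\varepsilon$-net over the coefficient space.

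\textbf{Parametrization.} Write every $u\in\mathcal K_q$ as $u=\sum_{a=1}^{q+1}p_a(\Lambda)g_a$, where each $p_a$ has degree at most $q$. Since $\lambda_i$ is an affine function of $\cos\theta_i$, the space of degree-$q$ polynomials in $\Lambda$ is spanned by the Chebyshev polynomials $T_j(\cos\theta_i)=\cos(j\theta_i)$, $0\le j\le q$. So we may write $u=\sum_{a,j}c_{aj}C_jg_a$ with $C_j=\operatorname{diag}(\cos(j\theta_i))$. Let $M$ be the $d\times m$ matrix with columns $C_jg_a$, where $m=(q+1)^2$. Then $u=Mc$, and we must compare $c^\top M^\top D_nMc$ with $c^\top M^\top Mc$.

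\textbf{Expectations.} The entries of both Gram matrices are sums over $i$ of $g_{ai}g_{bi}$ times trigonometric weights. For $M^\top M$ the weight is $\cos(j\theta_i)\cos(l\theta_i)$. For $M^\top D_nM$ it is $\cos(j\theta_i)\cos(l\theta_i)\cos(n\theta_i)$. The expectation is $\delta_{ab}\sum_i w(\theta_i)$. Expand the product into cosines of frequencies $j\pm l$ or $n\pm j\pm l$. Because $n=2q+1>j+l$, every such frequency $r$ satisfies $1\le |r|\le 3q+1<2d$ in the $D_n$ case, or else equals zero. The discrete orthogonality $\sum_{i=1}^d\cos(r\theta_i)=0$ for $0<|r|<2d$ holds at Chebyshev nodes. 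Hence $\mathbb E[M^\top D_nM]=0$. Meanwhile $\mathbb E[M^\top M]=\operatorname{diag}_a(G)$, where $G_{jl}=\sum_i\cos(j\theta_i)\cos(l\theta_i)=\tfrac d2(1+\delta_{j0})\delta_{jl}$. This is well conditioned, with $G\succeq \tfrac d2 I$.

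\textbf{Concentration.} Rescale so that $\Sigma:=\mathbb E[M^\top M]$ becomes the identity. Each entry of $\Sigma^{-1/2}M^\top D_nM\Sigma^{-1/2}$ and of $\Sigma^{-1/2}(M^\top M-\Sigma)\Sigma^{-1/2}$ is a centered quadratic form in Gaussians with bounded weights over $d$ coordinates, normalized by $d$. Hanson--Wright then gives deviation $\lesssim\sqrt{t/d}+t/d$ with probability $1-2e^{-t}$.

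To get an operator-norm bound, view the matrix as $\sum_i w_i$-weighted rank-one terms $y_iy_i^\top$, where $y_i\in\mathbb R^m$ has entries $g_{ai}\cos(j\theta_i)\sqrt{2/d}$. These are independent over $i$. Apply matrix Bernstein after truncating $\|y_i\|^2\lesssim (q+1)(q+1+\log d)/d$, which holds with high probability by a union bound. The result is
\[
\Bigl\|\Sigma^{-1/2}(M^\top D_nM)\Sigma^{-1/2}\Bigr\|\ \text{and}\ \Bigl\|\Sigma^{-1/2}(M^\top M-\Sigma)\Sigma^{-1/2}\Bigr\|\ \le\ \tau_{d,q}.
\]
The variance term contributes $\sqrt{(q+1)\log d/d}$-type scaling: the matrix variance is $\lesssim \max_i\|y_i\|^2\lesssim (q+1)/d$ after accounting for the $q+1$ independent copies, and the $\log m\lesssim\log d$ factor comes from dimension. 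The range term contributes $(q+1)(q+1+\log d)\log d/d$. So this matches $\tau_{d,q}$ up to choosing $C$.

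\textbf{Conclusion.} On this event, $M^\top M\succeq(1-\tau)\Sigma\succeq\tfrac12\Sigma$. Therefore
\[
|c^\top M^\top D_nMc|\le\tau\, c^\top\Sigma c\le 2\tau\,\|Mc\|^2,
\]
which is the claim. The main obstacle is the matrix Bernstein step with unbounded Gaussian summands: the truncation and the sizes of the variance proxy must produce exactly the two terms of $\tau_{d,q}$. If this fails, I would instead use a $1/4$-net on the unit sphere of $\mathbb R^m$ (of size $9^{m}$) combined with scalar Hanson--Wright at $t\asymp m+\log d$. That yields deviation $\sqrt{(m+\log d)/d}+(m+\log d)/d$. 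This form needs a more careful block structure (sum over $a$ of independent parts) to reach the $(q+1)$ rather than $(q+1)^2$ dependence in the leading term.
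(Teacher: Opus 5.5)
Your proposal is correct and follows essentially the paper's route. The paper's matrix $B$, with rows $d^{-1/2}(\gamma_i\otimes t_i)^\top$ and $t_i=(1,\sqrt2\cos\theta_i,\ldots,\sqrt2\cos(q\theta_i))$, is exactly your $M\Sigma^{-1/2}$, and the identities $\mathbb E[B^\top B]=I_{r^2}$, $\mathbb E[B^\top D_nB]=0$ (with $r=q+1$) come from the same discrete orthogonality; note the top frequency is $n+j+l\le 4q+1<2d$, not $3q+1$. The paper then truncates at $\|\gamma_i\|^2\le L=2r+3\log(1000d)$, applies matrix Bernstein to both sums, and finishes with $\tau/(1-\tau)\le2\tau$, just as you do.

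Two bookkeeping points need correcting. First, the variance proxy is of order $r^2/d$, not $(q+1)/d$: the paper gets $\sum_i\mathbb E Y_i^2\preceq\frac{2r(r+2)}{d}I_{r^2}$ from $\mathbb E[\|\gamma_i\|^2\gamma_i\gamma_i^\top]=(r+2)I_r$, $\|t_i\|^2\le2r$ and $\sum_it_it_i^\top=dI_r$. This is what produces the $(q+1)\sqrt{\log d/d}$ term of $\tau_{d,q}$, rather than your $\sqrt{(q+1)\log d/d}$. Second, truncation shifts the Gram expectation to $aI_{r^2}$ with $1-a\le(L+3)/(1000rd)$, which must be carried into the bound, while the $D_n$ sum stays exactly centered by radial symmetry.
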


The proof is deferred to
Appendix~\ref{app:openai-cosine}.
Since $\lambda_i$ is a nonconstant affine function of
$\cos\theta_i$, the space $\mathcal K_q$ is equivalently spanned
by the vectors
$\bigl((g_a)_i\cos(j\theta_i)\bigr)_{i=1}^d$,
with $a\le q+1$ and $j\le q$.
Consequently, the estimate is independent of $\kappa$.

On the event in Lemma~\ref{lem:openai-cosine},
the output representation in Lemma~\ref{lem:openai-rotation}
therefore gives
\[
\bigl|X^\top OD_nO^\top X\bigr|
\le 2\tau_{d,q}\|X\|^2.
\]

\paragraph{Target moments.}
We now evaluate the same cosine statistic under the target Gaussian.

\begin{lemma}[\bf Moments under the target]
\label{lem:openai-target-bias}
Let $\kappa\ge2$, $U\sim N(0,\Lambda^{-1})$, and
$\varrho_\kappa=\tfrac{\sqrt\kappa-1}{\sqrt\kappa+1}$.
For an integer $1\le n<d$, put
$D_n=\operatorname{diag}(\cos(n\theta_i))$,
$M=\mathbb E\|U\|^2$, and
$\mu_n=\mathbb E[U^\top D_nU]$.
Then
\begin{align*}
M
&=\frac{d}{\sqrt\kappa}
  \frac{1-\varrho_\kappa^{2d}}
       {1+\varrho_\kappa^{2d}},\quad\mbox{and }
(-1)^n\mu_n
=\frac{d}{\sqrt\kappa}
  \frac{\varrho_\kappa^n-\varrho_\kappa^{2d-n}}
       {1+\varrho_\kappa^{2d}}.
\end{align*}
Moreover, both $\operatorname{Var}(\|U\|^2)$ and
$\operatorname{Var}(U^\top D_nU)$ are at most
$2M\le2d/\sqrt\kappa$.

If $d-n\ge\sqrt\kappa$, then
\[
(1-e^{-4})e^{-3n/\sqrt\kappa}
\le\frac{|\mu_n|}{M}
\le e^{-2n/\sqrt\kappa}.
\]
\end{lemma}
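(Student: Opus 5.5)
The plan is to compute both moments exactly from the Fourier expansion of $1/\lambda(\theta)$ together with the discrete orthogonality of cosines at the Chebyshev nodes; the variance and ratio bounds then follow from elementary estimates. Since $U\sim N(0,\Lambda^{-1})$ has independent coordinates with variances $1/\lambda_i$, we have $M=\sum_i 1/\lambda_i$ and $\mu_n=\sum_i\cos(n\theta_i)/\lambda_i$.

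\emph{Step 1 (Fourier expansion of $1/\lambda$).} Write $\lambda(\theta)=a+b\cos\theta$ with $a=\tfrac{\kappa+1}{2}$ and $b=\tfrac{\kappa-1}{2}$. Then $a^2-b^2=\kappa$, and the classical Poisson-kernel identity gives
\[
\frac{1}{a+b\cos\theta}=\frac{1}{\sqrt\kappa}\Bigl(1+2\sum_{m\ge1}(-\varrho)^m\cos(m\theta)\Bigr),
\qquad
\varrho=\frac{a-\sqrt\kappa}{b}=\frac{(\sqrt\kappa-1)^2}{(\sqrt\kappa-1)(\sqrt\kappa+1)}=\varrho_\kappa .
\]
The series converges absolutely because $\varrho_\kappa<1$.

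\emph{Step 2 (discrete orthogonality at the nodes).} For $\theta_i=(i-\tfrac12)\pi/d$ and any integer $m\ge0$,
\[
\sum_{i=1}^d\cos(m\theta_i)=
\begin{cases}
(-1)^j d, & m=2dj,\\
0, & \text{otherwise}.
\end{cases}
\]
This follows by summing the geometric series $\operatorname{Re}\sum_i e^{\mathrm{i}m(i-1/2)\pi/d}$. Applying it term by term to the expansion from Step 1 gives
\[
M=\frac{d}{\sqrt\kappa}\Bigl(1+2\sum_{j\ge1}(-1)^j\varrho^{2dj}\Bigr)=\frac{d}{\sqrt\kappa}\,\frac{1-\varrho^{2d}}{1+\varrho^{2d}}.
\]

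\emph{Step 3 (the cosine-weighted sum).} For $\mu_n$, write $\cos(n\theta)\cos(m\theta)=\tfrac12[\cos((m+n)\theta)+\cos((m-n)\theta)]$. With $1\le n<d$, the surviving indices are $m=2dj\pm n$; the case $m=n$ comes from $j=0$. Collecting these terms gives
\[
\mu_n=\frac{d}{\sqrt\kappa}\sum_{j\in\mathbb Z}(-1)^j(-\varrho)^{|2dj+n|}.
\]
This is two geometric series, and they sum to
\[
(-1)^n\mu_n=\frac{d}{\sqrt\kappa}\,\frac{\varrho^n-\varrho^{2d-n}}{1+\varrho^{2d}}.
\]
Keeping track of the signs $(-1)^j$ and the parity of $2dj\pm n$ is the only delicate bookkeeping in the proof. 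I expect this step to be the main obstacle, and I would verify it by checking the cases $n=0$ and $d$ small.

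\emph{Step 4 (variances).} For a Gaussian quadratic form with weights $w_i$, $\operatorname{Var}\bigl(\sum_i w_iU_i^2\bigr)=2\sum_i w_i^2/\lambda_i^2$. Using $|w_i|\le1$ and $\lambda_i\ge1$, each term satisfies $w_i^2/\lambda_i^2\le 1/\lambda_i$, so the variance is at most $2M$. Moreover $M\le d/\sqrt\kappa$, since $(1-\varrho^{2d})/(1+\varrho^{2d})\le1$.

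\emph{Step 5 (ratio bounds).} From Steps 2 and 3,
\[
\frac{|\mu_n|}{M}=\frac{\varrho^n-\varrho^{2d-n}}{1-\varrho^{2d}}=\varrho^n\,\frac{1-\varrho^{2(d-n)}}{1-\varrho^{2d}}.
\]
The last fraction lies in $[1-\varrho^{2(d-n)},1]$. Put $x=1/\sqrt\kappa\le1/\sqrt2$, so that $\log\varrho=-2\operatorname{artanh}x$.
\begin{itemize}
\item Upper bound: $\operatorname{artanh}x\ge x$ gives $\varrho\le e^{-2/\sqrt\kappa}$, hence $|\mu_n|/M\le\varrho^n\le e^{-2n/\sqrt\kappa}$.
\item Lower bound: $\operatorname{artanh}x\le\tfrac32x$ on $[0,1/\sqrt2]$, by convexity and the endpoint check $0.881\le1.06$. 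Hence $\varrho^n\ge e^{-3n/\sqrt\kappa}$. Also, since $d-n\ge\sqrt\kappa$, the upper bound on $\varrho$ gives $\varrho^{2(d-n)}\le e^{-4(d-n)/\sqrt\kappa}\le e^{-4}$. Together these yield $|\mu_n|/M\ge(1-e^{-4})e^{-3n/\sqrt\kappa}$.
\end{itemize}
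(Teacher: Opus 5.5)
Your proof is correct and follows essentially the same route as the paper. Both expand $1/\lambda_i$ in the Poisson-kernel cosine series with ratio $-\varrho_\kappa$ and apply the discrete orthogonality $\sum_i\cos(m\theta_i)\in\{0,\pm d\}$ at the nodes; both get the variances from $\operatorname{Var}(U_i^2)=2/\lambda_i^2$ with $\lambda_i\ge1$, and the ratio from two-sided bounds on $\log(1/\varrho_\kappa)$. The only cosmetic differences are that you write $\mu_n$ as one sum over $j\in\mathbb Z$, and you get $\log(1/\varrho_\kappa)=2\operatorname{artanh}(1/\sqrt\kappa)\le 3/\sqrt\kappa$ from convexity of $\operatorname{artanh}$ and an endpoint check, where the paper uses an integral estimate giving $8/(3\sqrt\kappa)$.
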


The proof is deferred to
Appendix~\ref{app:openai-target-bias}.
For an exact sample $Y\mid O\sim\pi_A$, the vector $U=O^\top Y$ has law $N(0,\Lambda^{-1})$, with
$\|Y\|^2=\|U\|^2$ and $Y^\top OD_nO^\top Y=U^\top D_nU$.
The lemma therefore gives the means and variances of $\|Y\|^2$ and $Y^\top OD_nO^\top Y$ under the target.

\paragraph{Admissibility and oracle equivalence.}
We finally verify that the hard family belongs to the prescribed
class. 
The proof is deferred to
Appendix~\ref{app:openai-admissibility}.

\begin{lemma}[\bf Admissibility]
\label{lem:openai-admissibility}
For every $O\in\mathrm O(d)$, the matrix $A=O\Lambda O^\top$
satisfies $I_d\preceq A\preceq\kappa I_d$.
\end{lemma}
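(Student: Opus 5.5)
This is a direct spectral computation. The plan is to use that conjugation by an orthogonal matrix preserves the spectrum, and then check that every eigenvalue in the grid of Eq.~\eqref{eq:openai-hard-spectrum} lies in $[1,\kappa]$.

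First I reduce to the diagonal case. For any $O\in\mathrm O(d)$, the matrix $A=O\Lambda O^\top$ is symmetric because $\Lambda$ is diagonal. Its eigenvalues are exactly $\lambda_1,\ldots,\lambda_d$, with eigenvectors $Oe_1,\ldots,Oe_d$.

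Equivalently, for any $x\in\mathbb R^d$ put $y=O^\top x$. Then $\|y\|=\|x\|$ and
\[
x^\top A x=\sum_{i=1}^d\lambda_i y_i^2 .
\]
So the two-sided bound $\|x\|^2\le x^\top Ax\le\kappa\|x\|^2$, which is the statement $I_d\preceq A\preceq\kappa I_d$, reduces to showing $1\le\lambda_i\le\kappa$ for every $i$.

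Next I check the eigenvalue range. Since $\cos\theta_i\in[-1,1]$ and $\kappa\ge2$ gives $\frac{\kappa-1}{2}\ge0$, the map $c\mapsto\frac{\kappa+1}{2}+\frac{\kappa-1}{2}c$ is nondecreasing on $[-1,1]$. It takes the value $1$ at $c=-1$ and the value $\kappa$ at $c=1$. Hence $\lambda_i\in[1,\kappa]$ for all $i$.

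The inequality is in fact strict, because $\theta_i\in(0,\pi)$ for $1\le i\le d$. Strictness is not needed, however.

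There is no real obstacle here. The only points to state explicitly are that $\kappa\ge2$ makes the affine coefficient nonnegative, and that orthogonal conjugation preserves both symmetry and the quadratic-form bounds.
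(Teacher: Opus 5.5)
Your proposal is correct and follows the paper's proof: both reduce the claim to the eigenvalues $\lambda_i$ via orthogonal conjugation, and both note that $\cos\theta_i\in[-1,1]$ (strictly inside, since $0<\theta_i<\pi$) places $\lambda_i$ in $[1,\kappa]$. The paper's extra remarks about the potential's minimizer and the $O$-independent normalizing constant go beyond the stated lemma and are not needed for it.
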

\subsection{Proof of Theorem~\ref{thm:openai-general-kappa}}
\label{subsec:openai-main-proof}

We compare the probabilities of a single test event
under algorithmic outputs and target samples.

\begin{proof}[Proof of Theorem~\ref{thm:openai-general-kappa}]
Fix $\delta\in(0,1)$, $2\le\kappa\le d^{1-\delta}$,
and $0<\varepsilon\le1/10$.
Suppose that an algorithm uses at most
$q\le(\delta/256)\sqrt\kappa\log d$ queries and satisfies
$\operatorname{TV}(\rho_A^{\mathrm{Alg}},\pi_A)\le\varepsilon$
for every admissible $A$.
Lemma~\ref{lem:openai-admissibility} makes the hard family
admissible.

Set $n=2q+1$ and $\tau=\tau_{d,q}$.
Define statistic $S_A$ and the associated
test event $E_A$ as
\begin{align*}
S_A(x)
&:=(-1)^n x^\top
T_n\!\left(\frac{2A-(\kappa+1)I_d}{\kappa-1}\right)x
-2\tau\|x\|^2,\\
E_A&:=\{x:S_A(x)>0\},
\end{align*}
where $T_n$ is the degree-$n$ Chebyshev polynomial of the
first kind.

Geometrically, for $x\ne0$, the weights
$w_i(x)=|\langle x,Oe_i\rangle|^2/\|x\|^2$
describe the fractions of its energy $\|x\|^2$ carried by the
eigenvectors $Oe_i$ of $A$.
Using $T_n(\cos\theta_i)=\cos(n\theta_i)$, the event
$x\in E_A$ is equivalent to
$(-1)^n\sum_i w_i(x)\cos(n\theta_i)>2\tau$.
So $E_A$ is the event that the cosine coeﬀicient
$\sum_i w_i(x) \cos(n\theta_i)$ of these weights at frequency $n$
exceeds $2\tau$ in absolute value and has the sign that it
has under the target.

\paragraph{Algorithmic outputs rarely pass the test.}
The bound on $q$ gives
$q+1\le d^{(1-\delta)/2}\log d$, so $n=o(d)$.
Substituting this bound into the definition of $\tau$ yields
\[
\tau \lesssim
d^{-\delta/2}(\log d)^{3/2}
+d^{-\delta}(\log d)^3
\lesssim d^{-\delta/2}(\log d)^3
\longrightarrow0
\]
Thus $n<d$ and $\tau\le1/2$ for sufficiently large $d$
depending only on $\delta$, and the Lemma~\ref{lem:openai-rotation} and Lemma~\ref{lem:openai-cosine} apply.
Let $X$ be the algorithmic output on the random input
$A=O\Lambda O^\top$.
Lemma~\ref{lem:openai-rotation} places $O^\top X$ in
$\mathcal K_q$ on an extension preserving the
joint law of $(O,X)$.
Lemma~\ref{lem:openai-cosine} then gives
\[|X^\top OD_nO^\top X|\le2\tau\|X\|^2\]
with probability at least $0.99$.
On this event, $S_A(X)\le0$.
Consequently,
\[
\mathbb E_A[\rho_A^{\mathrm{Alg}}(E_A)]\le0.01.
\]

\paragraph{Target samples typically pass the test.}
Let $M=\operatorname{tr}(\Lambda^{-1})$ denote the target
mean squared norm, and let
$\mu_n=\operatorname{tr}(D_n\Lambda^{-1})$ denote the target
mean of the cosine statistic.
Write $\beta=|\mu_n|/M$ for the normalized target bias.

Since $n=o(d)$ and
$\sqrt\kappa\le d^{(1-\delta)/2}=o(d)$,
we have $d-n\ge\sqrt\kappa$ for sufficiently large $d$.
Lemma~\ref{lem:openai-target-bias} and the the bound on $q$ therefore give
\[
\beta
\ge(1-e^{-4})e^{-3n/\sqrt\kappa}
\ge b\,d^{-\delta/4},
\qquad b=(1-e^{-4})e^{-3},
\]
where we used
$3n/\sqrt\kappa\le3+(3\delta/128)\log d$.
Moreover,
$\varrho_\kappa\le e^{-1/\sqrt\kappa}$ and
$d\ge\sqrt\kappa$ imply
$\varrho_\kappa^{2d}\le e^{-2}<1/3$.
The formula for $M$ in the same lemma yields
$M\ge d/(2\sqrt\kappa)\ge d^{(1+\delta)/2}/2$.
Together with the bound on $\tau$ above, these estimates imply
\[
\frac{\tau}{\beta}
\le\frac{C'}{b}d^{-\delta/4}(\log d)^3\longrightarrow0,
\qquad
M\beta^2\ge\frac{b^2}{2}\sqrt d\longrightarrow\infty.
\]
All estimates are uniform over the stated parameter range.
We may therefore choose $d_0(\delta)$ so that the conditions
for the algorithmic bound above hold and, additionally,
$\tau\le\beta/4$ and $M\beta^2\ge320$.

Now fix $A$, draw $Y\sim\pi_A$, and set $Z=S_A(Y)$.
Lemma~\ref{lem:openai-target-bias} gives mean
$(-1)^n\mu_n=\beta M$ for the signed cosine quadratic form
and mean $M$ for $\|Y\|^2$.
Both variances are at most $2M$.
Hence, using the $L^2$ triangle inequality for centered variables,
\[
\mathbb E Z=(\beta-2\tau)M\ge\frac{\beta M}{2},
\qquad
\operatorname{Var}(Z)
\le\bigl(\sqrt{2M}+2\tau\sqrt{2M}\bigr)^2
\le8M.
\]
Chebyshev's inequality gives, for every $A$ in the hard family,
\[
\pi_A(E_A)
=\mathbb P(Z>0)
\ge1-\frac{\operatorname{Var}(Z)}{(\mathbb E Z)^2}
\ge1-\frac{32}{M\beta^2}
\ge0.9.
\]

\paragraph{Contradiction.} Finally, the per-input TV guarantee implies
$\pi_A(E_A)-\rho_A^{\mathrm{Alg}}(E_A)\le\varepsilon$.
Averaging over the random input and using the two probability
bounds gives
\[
0.89
\le\mathbb E_A\!\left[
\pi_A(E_A)-\rho_A^{\mathrm{Alg}}(E_A)
\right]
\le\varepsilon
\le0.1,
\]
a contradiction.
\end{proof}
\subsection{Limitations of the Analysis}
\label{subsec:openai-limitations}

Theorem~\ref{thm:openai-general-kappa} is restricted to
$2\le\kappa\le d^{1-\delta}$ for fixed $\delta\in(0,1)$.
Although it applies to every $\varepsilon\le1/10$,
its lower bound does not grow as $\varepsilon$ decreases.
We discuss two limitations of the analysis that affect
extensions to larger condition numbers and higher accuracy.

\paragraph{Limitation on $\kappa$.}
Our approach does not extend the $\Omega(\sqrt{\kappa}\log d)$ lower bound to condition numbers
$\kappa\gtrsim d/\log^2d$.
Our proof controls the output of the algorithm by first applying
Lemma~\ref{lem:openai-rotation} to place $O^\top X$ in
$\mathcal K_q$, and then bounding the cosine statistic
uniformly over this space.
However, the enlarged space  $\mathcal{K}_q$ has $(q+1)^2$ generating vectors.
Since the eigenvalues of $\Lambda$ are distinct,
these vectors span a space of the following dimension.

\begin{lemma}[\bf Dimension of the Krylov space]
\label{lem:openai-krylov-rank}
For every integer $q\ge0$ and independent
$g_1,\ldots,g_{q+1}\allowbreak\sim N(0,I_d)$,
\[
\dim\mathcal K_q
=\min\{d,(q+1)^2\}
\qquad\text{almost surely}.
\]
\end{lemma}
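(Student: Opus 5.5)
Two inequalities suffice: $\dim\mathcal K_q\le\min\{d,(q+1)^2\}$ holds trivially, since $\mathcal K_q\subseteq\mathbb R^d$ and is spanned by $(q+1)^2$ vectors. The substance is the reverse inequality almost surely. I will recast $\mathcal K_q$ as the image of a linear map. Writing $G=[g_1,\ldots,g_{q+1}]$ and $V\in\mathbb R^{d\times(q+1)}$ for the Vandermonde matrix $V_{ij}=\lambda_i^{\,j}$ ($0\le j\le q$), the generator $\Lambda^jg_a$ has $i$-th coordinate $(g_a)_i\lambda_i^j$. So $\mathcal K_q$ is the column space of the $d\times(q+1)^2$ matrix $M(G)$ whose $i$-th row is the Kronecker product $G_{i,\cdot}\otimes V_{i,\cdot}$, i.e.\ the row-wise (face-splitting) Khatri--Rao product of $G$ and $V$.

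The plan is the standard polynomial-nonvanishing argument. Set $m=\min\{d,(q+1)^2\}$. Every $m\times m$ minor of $M(G)$ is a polynomial in the Gaussian entries of $G$. If one such minor is not identically zero, its zero set has Lebesgue measure zero, hence probability zero under the Gaussian law, and so $\operatorname{rank}M(G)\ge m$ almost surely. It therefore suffices to exhibit a single deterministic $G$ for which $M(G)$ has rank $m$.

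For the witness, I will use the distinctness of the $\lambda_i$: the $\theta_i$ are distinct in $(0,\pi)$, $\cos$ is injective there, and $\kappa>1$ makes the map affine and nonconstant. Partition the row indices $\{1,\ldots,d\}$ into $q+1$ consecutive groups $S_1,\ldots,S_{q+1}$, each of size $q+1$ as far as possible. When $d<(q+1)^2$, fill the groups greedily so that the group sizes sum to $d$, with each group having at most $q+1$ rows. Set $G_{i,a}=\mathbf 1[i\in S_a]$. Then row $i\in S_a$ of $M(G)$ equals $e_a\otimes V_{i,\cdot}$, so after permutation $M(G)$ is block diagonal with blocks $V_{S_a,\cdot}$. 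Each block is a $|S_a|\times(q+1)$ Vandermonde matrix with distinct nodes and $|S_a|\le q+1$, so it has full row rank $|S_a|$. The total rank is therefore $\sum_a|S_a|=m$. This works both when $d\ge(q+1)^2$, where the unused rows receive zero rows of $G$ and rank is unaffected, and when $d<(q+1)^2$.

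The only delicate point is the genericity step: confirming that the minor, regarded as a polynomial in $G$ with $\Lambda$ fixed, is nonzero. The witness handles this, and the bookkeeping for the case $d<(q+1)^2$ (the group sizes) is the step I expect to need the most care.
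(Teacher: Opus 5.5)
Your proof is correct. The trivial upper bound, the reduction to showing that a single $m\times m$ minor of the face-splitting matrix $M(G)$ is not identically zero as a polynomial in $G$, and the indicator witness together establish $\dim\mathcal K_q=\min\{d,(q+1)^2\}$ almost surely. The minor is then nonzero almost surely under the absolutely continuous Gaussian law. The witness makes $M(G)$ block diagonal with Vandermonde blocks $V_{S_a,\cdot}$ of full row rank $|S_a|\le q+1$, and this covers both $d\ge(q+1)^2$ and $d<(q+1)^2$.

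The paper gives no written proof of this lemma. It only cites the distinctness of the eigenvalues of $\Lambda$, and your argument makes that remark rigorous: distinctness enters exactly through the nonsingularity of the Vandermonde blocks. You also correctly note that this needs $\kappa>1$, since for $\Lambda=I_d$ the dimension drops to $\min\{d,q+1\}$.
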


Once $(q+1)^2\ge d$ with $q =\Omega(\sqrt{\kappa}\log d)$,  we have
$\mathcal K_q=\mathbb R^d$ almost surely.
For $n=2q+1<d$, the definition of $\theta_i$ gives
$\sum_i^d\cos^2(n\theta_i)=d/2$, so at least one coordinate
satisfies $|\cos(n\theta_i)|\ge1/\sqrt2$.
Taking its coordinate unit vector yields
\[
\sup_{u\in\mathcal K_q\setminus\{0\}}
\frac{|u^\top D_nu|}{\|u\|^2}
\ge\frac1{\sqrt2}.
\]
Our separating test in Lemma~\ref{lem:openai-cosine} requires this supremum to be at most
$1/2$, which is incompatible with the displayed bound.
Thus an argument based on a uniform bound over $\mathcal K_q$
cannot apply  once
$(q+1)^2\ge d$.
%

\paragraph{Limitation on accuracy.}
Our test compares the ratio $(-1)^nx^\top OD_nO^\top x/\|x\|^2$ with a threshold $t$.
Lemma~\ref{lem:openai-cosine} puts every output of the algorithm below $t$ only if $t\ge a$, where $a$ is the
uniform bound on $\mathcal{K}_q$ that the lemma provides. Chebyshev's inequality with
Lemma~\ref{lem:openai-target-bias} puts samples from the target above $t$ only if $t<|\mu_n|/M$. Thus
$|\mu_n|/M>a$, whatever the threshold.
At the same time, the Bernstein calculation in the proof of
Fact~\ref{fact:openai-gram-concentration}
requires $a\ge c/\sqrt d$ even to certify a failure
probability below one.
Finally, Lemma~\ref{lem:openai-target-bias} gives
$|\mu_n|/M\le e^{-2n/\sqrt\kappa}$.
Since $n=2q+1$, we must have
\[
\frac{c}{\sqrt d}
\le a
<\frac{|\mu_n|}{M}
\le e^{-2(2q+1)/\sqrt\kappa}.
\]
Taking logarithms yields $q=O(\sqrt\kappa\log d)$,
independently of $\varepsilon$.
A different choice of the threshold or of the constants therefore cannot give a bound that grows as
$\varepsilon\to0$.

\begin{remark}[\bf Scope of the limitations]
These limitations concern our proof. We do not know whether the same fixed-spectrum family supports
stronger lower bounds.
Our analysis therefore leaves open whether the same
fixed-spectrum family supports stronger lower bounds.
\end{remark}
\section{Conclusion}

We proved an $\Omega\bigl(\min\{d,\sqrt{\kappa}\log(1+\frac{d}{\sqrt{\kappa}\varepsilon})\}\bigr)$ query
lower bound for Gaussian sampling.
The bound comes from a construction with a hidden determinant sign and matches the known upper bound
whenever $\kappa\le d^{2-\delta}$ or $\varepsilon\le d^{-c}$, for fixed $\delta\in(0,2)$ and $c>0$.
We also extended the hidden-eigenbasis construction to growing
condition numbers.

Closing the remaining gap for all condition numbers $\kappa\le d^{2-\delta}$
at constant accuracy remains open.
Our lower bounds also apply to strongly log-concave sampling, where the known upper bounds are much
larger. It is open whether non-quadratic potentials are harder than Gaussians.


\section*{AI Disclosure}
AI tools were used to assist with literature searches, the development and verification of proof
arguments, the presentation of proofs, the polishing of the exposition, and typesetting. The authors
independently verified all mathematical arguments and take full responsibility for the content of the
paper.

\section*{Acknowledgment}
We thank Anming Gu and Taoli Zheng for helpful discussions. 
TS was partially supported by
JST CREST (JPMJCR2115) and JSPS KAKENHI (24K02905). 
This work was partially supported by JST ERATO (JPMJER2601). 
This research is supported by the
National Research Foundation, Singapore and the Ministry of Digital Development and Information
under the AI Visiting Professorship Programme (award number AIVP-2024-004). Any opinions,
findings and conclusions or recommendations expressed in this material are those of the author(s)
and do not reflect the views of National Research Foundation, Singapore and the Ministry of Digital
Development and Information.

\newpage
\bibliography{reference}
\bibliographystyle{alpha}

\newpage

\appendix

\section{Useful Facts}
\subsection{Two Spherical Poisson Identities}

Both identities below are instances of the Poisson integral formula for the
unit ball of $\mathbb R^n$, applied at the point $re_1$ to the harmonic
functions $1$ and $y\mapsto y_1$.
\begin{lemma}[\bf Spherical Poisson identities]
\label{app:spherical-poisson}
\label{lem:spherical-poisson}
Let $n\ge2$, and $\zeta=(\zeta_1,\dots,\zeta_n)$ be a uniform random point on
$\mathbb S^{n-1}$, and let $U=\zeta_1$ be its first coordinate., where $n\ge3$.
For every $0<r<1$,
\begin{equation}
    \mathbb E\left[(1+r^2-2rU)^{-n/2}\right]
    =\frac1{1-r^2},
    \qquad
    \mathbb E\left[U(1+r^2-2rU)^{-n/2}\right]
    =\frac r{1-r^2}.
    \label{eq:spherical-poisson}
\end{equation}
\end{lemma}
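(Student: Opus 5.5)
The plan is to derive both identities from the Poisson integral formula on the unit ball $B\subset\mathbb R^n$. For a function $h$ that is harmonic on a neighbourhood of $\bar B$ and a point $x$ with $\|x\|<1$, this formula reads
\[
h(x)=\int_{\mathbb S^{n-1}}\frac{1-\|x\|^2}{\|x-y\|^n}\,h(y)\,d\sigma(y),
\]
where $\sigma$ is the normalized surface measure, i.e.\ the law of $\zeta$. We take $x=re_1$ with $0<r<1$. Then $\|x-\zeta\|^2=1+r^2-2r\zeta_1=1+r^2-2rU$, so the Poisson kernel at $x$ evaluated at $\zeta$ equals $(1-r^2)(1+r^2-2rU)^{-n/2}$.

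\textbf{Constant function.} Applying the formula with $h\equiv1$ gives $1=(1-r^2)\,\mathbb E[(1+r^2-2rU)^{-n/2}]$. Dividing by $1-r^2>0$ yields the first identity.

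\textbf{Coordinate function.} Applying the formula with $h(y)=y_1$, which is harmonic, gives $h(re_1)=r=(1-r^2)\,\mathbb E[U(1+r^2-2rU)^{-n/2}]$. This yields the second identity.

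\textbf{Justification of the formula.} The formula holds for every $n\ge2$, and the statement allows $n\ge2$ despite the stray ``$n\ge3$''. Both test functions are polynomials, hence harmonic on all of $\mathbb R^n$, so the classical formula applies with no boundary-regularity issues. The integrand is bounded because $1+r^2-2rU\ge(1-r)^2>0$.

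\textbf{Main obstacle.} The only real point is whether we may cite the Poisson formula in the normalized-measure form. For a self-contained derivation, I would do the following.

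1. Show that $P(x,y)=(1-\|x\|^2)\|x-y\|^{-n}$ is harmonic in $x$ for each fixed $y$ on the sphere; this is a direct computation.
2. Note that $u(x)=\int P(x,y)\,d\sigma(y)$ is then harmonic and radial in $x$, so it is constant on $B$.
3. Evaluate at $x=0$, where $P(0,y)=1$, to get $u\equiv1$. This is the first identity.
4. For the second identity, observe that $v(x)=\int P(x,y)\,y_1\,d\sigma(y)$ is harmonic and has boundary values $y_1$. This follows from the approximate-identity property of $P$, which comes from the first identity together with the concentration of $P(x,\cdot)$ near $x/\|x\|$ as $\|x\|\to1$.
5. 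Conclude $v(x)=x_1$ by the maximum principle.

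Alternatively, for the second identity alone, one can differentiate the first identity with respect to $r$ and combine the result linearly. However, citing the Poisson formula is the cleanest route.
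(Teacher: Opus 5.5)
Your argument is correct and is essentially the paper's proof: the paper also applies the Poisson representation formula with normalized surface measure at the point $re_1$ to $h\equiv 1$ and $h(x)=x_1$, uses $\|\zeta-re_1\|^2=1+r^2-2rU$, and divides by $1-r^2$. One caveat concerns your optional aside. Differentiating the first identity in $r$ produces terms with $(1+r^2-2rU)^{-n/2-1}$, so no linear combination of it yields the second identity. An elementary alternative would write $U=(1+r^2-D)/(2r)$ with $D=1+r^2-2rU$ and use $\mathbb E[D^{-(n-2)/2}]=1$, which is the spherical mean of $\zeta\mapsto\|re_1-\zeta\|^{2-n}$; none of this is needed for your main argument.
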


\begin{proof}
Let $\sigma$ be normalized surface measure on $\mathbb S^{n-1}$.
By the Poisson representation formula
\cite[Theorem~1.21]{axler2001harmonic},
every function $h$ harmonic on the unit ball and continuous
on its closure satisfies
\[
    h(a)
    =
    \int_{\mathbb S^{n-1}}
        h(u)\frac{1-\|a\|^2}{\|u-a\|^n}\,d\sigma(u),
    \qquad \|a\|<1.
\]
Apply this formula with $a=re_1$ and, respectively,
$h(x)=1$ and $h(x)=x_1$.
Since $\|u-re_1\|^2=1+r^2-2ru_1$, the two applications prove the lemma
after division by $1-r^2$.
\end{proof}

\section{Proofs for Section~\ref{sec:construction}}
\label{app:b}
\subsection{Indistinguishability: Proof of Lemma~\ref{lem:construction-indistinguishability} }
\label{sec:indistinguishability}

In this section, we prove Lemma~\ref{lem:construction-indistinguishability}.
Throughout this section, $2T<n$.

\begin{lemma}[\bf Indistinguishability (Lemma 5 restated)]
For every randomized first-order algorithm $\mathrm{Alg}$
with query budget $T<n/2$,
\begin{equation}
    \mathbb E_{A\sim\mathcal D_{+1}}
        [\rho_A^{\mathrm{Alg}}]
    =
    \mathbb E_{A\sim\mathcal D_{-1}}
        [\rho_A^{\mathrm{Alg}}].
\end{equation}
\end{lemma}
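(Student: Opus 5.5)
We will prove the statement by a deferred-decision (coupling) argument. We first fix the algorithm's internal randomness, so it suffices to treat deterministic adaptive algorithms. Since the output is a Borel function of the seed and the transcript, equality of averaged output laws follows once we show the following: the joint law of the transcript of oracle responses is the same under $\mathcal D_{+1}$ and $\mathcal D_{-1}$.

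\textbf{Reducing to block observations.} Each query $x=(x^{(1)},\ldots,x^{(k)},x^{(0)})$ returns
\[
A_{\boldsymbol O}x=\Bigl(\tfrac{\kappa+1}{2}x^{(j)}-\tfrac{\kappa-1}{4}\bigl(O_jx^{(j)}+O_j^\top x^{(j)}\bigr)\Bigr)_{j},
\]
together with $x^{(0)}$ on the identity block; the value $f_A(x)$ is determined by $x$ and $Ax$. We therefore replace each query by the more informative pair of observations $O_jv$ and $O_j^\top v$ with $v=x^{(j)}$, for every block $j$. A stronger oracle whose transcript law is sign-independent implies the same for the original oracle, because the original responses are measurable functions of the stronger ones. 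We then order these $2Tk$ observations sequentially and prove inductively that the conditional law of each new observation, given the history, does not depend on $b$.

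\textbf{Conditional Haar structure.} The blocks are independent given $b$. An observation in block $j$ is a deterministic function of the history, and the history depends on the other blocks only through their previously observed responses. Hence, conditionally on the history, $O_j$ is distributed as $\mu_{n,b}$ conditioned on the constraints already revealed about it. After at most $2T$ observations on block $j$, these constraints have the form $O_jU=V$ for orthonormal frames of dimension $m$: each observation $O_jv$ fixes $O_j$ on $v$, and each $O_j^\top v$ fixes $O_j^{-1}$ on $v$, i.e.\ fixes a preimage pair. We Gram--Schmidt both sides, so $m\le 2T<n$ and hence $n-m\ge1$. The conditional law of $O_j$ is then $O_j=R_V\,\mathrm{diag}(I_m,W)\,R_U^\top$, where $R_U,R_V$ complete the frames and $W\sim\mu_{n-m,b'}$ with $b'=b\cdot\det R_V\det R_U$. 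This follows from invariance of Haar measure on each coset component. We then check that the next observation, $O_jv$ or $O_j^\top v$, has a law determined by $WP$ applied to the component of $v$ orthogonal to the explored frame. Its law is that of $W w$ for a fixed vector $w$, which is uniform on a sphere of radius $\|w\|$ in $\mathbb R^{n-m}$ under either determinant sign, provided $n-m\ge2$. The reason is that $SO(n-m)$ acts transitively on spheres when $n-m\ge2$, and the $\det=-1$ coset is a translate of $SO$ by a reflection fixing $w$.

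\textbf{Main obstacle.} The main obstacle is the counting: we must guarantee $n-m\ge 2$ at the moment of every observation, and we must handle degenerate cases correctly. These include $v$ lying in the span of the explored frame, where the response is already determined and so is sign-free, and the bookkeeping that one $O_j^\top v$ observation adds one dimension to both the domain and range frames. Before the $t$-th query, at most $2(t-1)$ directions are explored in each block. The first observation of query $t$ therefore sees $m\le 2t-2\le 2T-2$, and the second sees $m\le 2T-1\le n-2$ because $2T<n$, i.e.\ $2T\le n-1$. Thus the inequality $2T<n$ is exactly what is needed. Once each conditional step is sign-independent, the chain rule gives equal transcript laws. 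Integrating over the seed and applying the output map then yields $\mathbb E_{\mathcal D_{+1}}[\rho_A^{\mathrm{Alg}}]=\mathbb E_{\mathcal D_{-1}}[\rho_A^{\mathrm{Alg}}]$.
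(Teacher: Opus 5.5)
Your proposal is correct and follows essentially the same route as the paper. You reduce to primitive forward and inverse block observations (at most $2T$ per block), keep a residual Haar map on the unexplored complement whose next observation is uniform on a sphere under either sign when the residual dimension is at least $2$, use the same count $m\le 2T-1\le n-2$, and conclude by the chain rule on transcripts. The only cosmetic difference is that the paper picks its completing frames in $SO(n)$ so the residual keeps sign $b$, whereas you allow arbitrary frames and track $b'=b\det R_V\det R_U$; both work.
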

\paragraph{Oracle reduction.}
Consider the oracle
\[
\mathcal O_{\mathbf O}(j,v,\tau)
:=
\begin{cases}
O_jv, & \tau=\mathsf f,\\
O_j^\top v, & \tau=\mathsf i,
\end{cases}
\qquad
j\in\{1,\ldots,k\},\quad v\in\mathbb R^n.
\]
We call a query $(j, v, \tau)$ to
this oracle a primitive query; it is forward if $\tau = \mathsf f$,
and inverse if $\tau=\mathsf i$.
For a first-order query $x=(x_1,\ldots,x_k,z)$, we have
\begin{align*}
A_{O_j}x_j
&=
\frac{\kappa+1}{2}x_j
-\frac{\kappa-1}{4}
\bigl(
\mathcal O_{\mathbf O}(j,x_j,\mathsf f)
+
\mathcal O_{\mathbf O}(j,x_j,\mathsf i)
\bigr),\\
\nabla f_{A_{\mathbf O}}(x)
&=
(A_{O_1}x_1,\ldots,A_{O_k}x_k,z),\\
f_{A_{\mathbf O}}(x)
&=
\frac12\langle x,\nabla f_{A_{\mathbf O}}(x)\rangle.
\end{align*}
Thus $T$ first-order queries can be simulated using at most $2T$
primitive queries per block and it suffices to prove transcript
indistinguishability for adaptive primitive queries when $2T<n$.

\paragraph{Single-observation Haar laws.}
When at least two dimensions remain, a forward or inverse observation
has a sign-independent law. The following lemma also identifies the
conditional Haar law of the remaining map.
\begin{lemma}[\bf Conditional law of $W  $given $Wu$]
\label{lem:haar-observation}
Let $m\ge2$, $b\in\{-1,+1\}$, and $W\sim\mu_{m,b}$.
Write $\sigma_{m-1}$ for uniform probability measure on $S^{m-1}$.
Choose Borel maps $z\mapsto Q_z\in SO(m)$ satisfying $Q_ze_1=z$.
For every $u\in S^{m-1}$,
\[
Wu\sim\sigma_{m-1},
\qquad
\mathcal L(W\mid Wu=z)
=
\mathcal L\!\left(
Q_z
\begin{pmatrix}1&0\\0&V\end{pmatrix}
Q_u^\top
\right),
\quad V\sim\mu_{m-1,b},
\]
for $\sigma_{m-1}$-almost every $z$.
The same identities hold with $W$ replaced by $W^\top$.
\end{lemma}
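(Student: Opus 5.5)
The statement to prove is the Haar lemma: the law of $Wu$ and the conditional law of $W$ given $Wu$. I would first reduce to $u=e_1$ by right-invariance, then decompose $W$ given $We_1$ via the stabilizer of $e_1$, and finally pass to $W^\top$ by inversion invariance.

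\textbf{Step 1 (invariance of $\mu_{m,b}$).} Both cosets $\{\det=\pm1\}$ are invariant under left and right multiplication by elements of $SO(m)$. Haar measure on $O(m)$ is bi-invariant, so $\mu_{m,b}$ is invariant under $W\mapsto RWS$ for all $R,S\in SO(m)$. It is also invariant under $W\mapsto W^\top$, since inversion preserves Haar measure and the determinant.

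\textbf{Step 2 (reduction to $u=e_1$).} Since $Q_u\in SO(m)$, the matrix $W' := WQ_u$ has law $\mu_{m,b}$, and $Wu = W'e_1$. Hence it suffices to show two things: $W'e_1\sim\sigma_{m-1}$, and, conditionally on $W'e_1=z$, the matrix $W'$ is distributed as $Q_z\operatorname{diag}(1,V)$ with $V\sim\mu_{m-1,b}$. Multiplying on the right by $Q_u^\top$ then gives the claimed formula.

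\textbf{Step 3 (the marginal).} For $R\in SO(m)$ we have $RW'e_1\overset{d}{=}W'e_1$. Since $SO(m)$ acts transitively on $S^{m-1}$ for $m\ge2$, the uniform measure is the unique invariant probability, so $W'e_1\sim\sigma_{m-1}$.

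\textbf{Step 4 (the conditional law).} I will construct the joint law directly rather than disintegrate abstractly. Let $Z\sim\sigma_{m-1}$ and $V\sim\mu_{m-1,b}$ be independent, and set $\tilde W := Q_Z\operatorname{diag}(1,V)$. Then $\det\tilde W=\det V=b$ and $\tilde W e_1=Z$. It remains to show $\tilde W\sim\mu_{m,b}$; once that holds, $(\tilde We_1,\tilde W)$ has the same joint law as $(W'e_1,W')$. The conditional law of $\tilde W$ given $\tilde We_1=z$ is then, by independence of $Z$ and $V$, the law of $Q_z\operatorname{diag}(1,V)$. This identifies it for $\sigma$-a.e. $z$ by uniqueness of disintegration.

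To show $\tilde W\sim\mu_{m,b}$, I will check left $SO(m)$-invariance, since $\mu_{m,b}$ is the unique left-$SO(m)$-invariant probability on its coset. For $R\in SO(m)$, write
\[
RQ_Z = Q_{RZ}\,\bigl(Q_{RZ}^{-1}RQ_Z\bigr).
\]
The element $S:=Q_{RZ}^{-1}RQ_Z$ fixes $e_1$ and has determinant $1$, so $S=\operatorname{diag}(1,S')$ with $S'\in SO(m-1)$, and $S'$ is a measurable function of $Z$. Therefore
\[
R\tilde W = Q_{RZ}\operatorname{diag}(1,S'V).
\]
Conditionally on $Z$, the matrix $S'V\sim\mu_{m-1,b}$ by left invariance of $\mu_{m-1,b}$, and this conditional law does not depend on $Z$. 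So $S'V$ is independent of $Z$, and hence of $RZ\sim\sigma_{m-1}$. Thus $R\tilde W\overset{d}{=}\tilde W$. When $m=2$, $\mu_{1,b}$ is the point mass at $b$ and the argument still applies.

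\textbf{Step 5 (the transposed case).} Since $W^\top\sim\mu_{m,b}$ by Step 1, the same statements hold with $W$ replaced by $W^\top$.

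The main obstacle is Step 4, namely identifying the conditional law rigorously. I handle it by building the explicit joint law above and appealing to uniqueness of regular conditional distributions. This requires the measurable choice of $z\mapsto Q_z$ given in the statement, which makes $S'$ measurable in $Z$.
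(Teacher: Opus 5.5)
Your proof is correct and follows essentially the same route as the paper: you build $\widetilde W=Q_Z\operatorname{diag}(1,V)$ from independent $Z\sim\sigma_{m-1}$ and $V\sim\mu_{m-1,b}$, check left $SO(m)$-invariance through $Q_{RZ}^\top RQ_Z=\operatorname{diag}(1,S')$, invoke uniqueness of the invariant probability on the determinant component, and read off the conditional law. The only differences are cosmetic: you first reduce to $u=e_1$ (the paper carries the factor $Q_u^\top$ throughout), and you derive the marginal separately by transitivity instead of reading it off from $\widetilde Wu=Z$.
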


\begin{proof}
The Borel completions $Q_z$ can be obtained by fixed-order
Gram--Schmidt, followed by a sign correction on the last column.
Let $Z\sim\sigma_{m-1}$ and $V\sim\mu_{m-1,b}$ be independent, and set
$\widetilde W:=Q_Z\operatorname{diag}(1,V)Q_u^\top$.
Then $\widetilde Wu=Z$ and $\det\widetilde W=b$.

For $R\in SO(m)$,
\[
Q_{Rz}^\top RQ_z
=
\begin{pmatrix}1&0\\0&S_R(z)\end{pmatrix},
\qquad S_R(z)\in SO(m-1).
\]
Rotational invariance of $Z$ and Haar invariance of $V$ give
$(RZ,S_R(Z)V)\stackrel{d}{=}(Z,V)$, hence
\[
R\widetilde W
=
Q_{RZ}\operatorname{diag}(1,S_R(Z)V)Q_u^\top
\stackrel{d}{=}\widetilde W.
\]
Thus $\widetilde W\sim\mu_{m,b}$ by uniqueness of invariant
probability measure on the determinant component.
Since $\widetilde Wu=Z$, conditioning on $Z=z$ yields both identities.
Finally, $W^\top\sim\mu_{m,b}$, so the same argument applies to
inverse observations.
\end{proof}
\paragraph{Adaptive posterior laws.}
We now apply Lemma~\ref{lem:haar-observation} to adaptive queries
across blocks, maintaining the conditional Haar law of each residual map.

\begin{lemma}[\bf Adaptive posterior factorization]
\label{lem:adaptive-posterior}
Fix $b\in\{-1,+1\}$ and let $O_1,\ldots,O_k$ be independent
with law $\mu_{n,b}$.
Consider a deterministic adaptive primitive-query algorithm making
at most $2T<n$ queries to each block.
Let $H_t$ be its complete history and $\ell_{j,t}$ the number of
queries to block $j$.
There exist history-measurable frames
$Q_{D,j,t},Q_{R,j,t}\in SO(n)$ and integers
$r_{j,t}\le\ell_{j,t}$ such that
\[
O_j
=
Q_{R,j,t}\operatorname{diag}(I_{r_{j,t}},W_{j,t})
Q_{D,j,t}^\top,
\qquad
\mathcal L_b\bigl((W_{j,t})_{j=1}^k\mid H_t=h\bigr)
=
\bigotimes_{j=1}^k\mu_{n-r_{j,t}(h),b}.
\]
The frames and exposed dimensions are chosen by rules independent of $b$.
\end{lemma}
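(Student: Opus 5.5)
We argue by induction on the number $t$ of primitive queries made so far. At $t=0$ we set $Q_{D,j,0}=Q_{R,j,0}=I_n$, $r_{j,0}=0$ and $W_{j,0}=O_j$. The prior is the product $\bigotimes_j\mu_{n,b}$, so the claim holds. None of these choices depend on $b$.

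For the inductive step, fix a history $h$ at time $t$ and suppose the representation and the product posterior hold. The deterministic algorithm chooses the next query $(j,v,\tau)$ as a function of $h$ alone. Only block $j$ changes; every other block keeps its frames, its $r$, and its residual $W_{j',t}$.

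\emph{Forward query.} We have $O_jv=Q_R\operatorname{diag}(I_r,W)Q_D^\top v$. Write $w=Q_D^\top v=(w',w'')$ with $w'\in\mathbb R^r$ and $w''\in\mathbb R^{n-r}$.
\begin{itemize}
\item If $w''=0$, the reply is already determined by $h$. We keep the frames and $r$, and the posterior is unchanged.
\item Otherwise, let $u=w''/\|w''\|$. The reply is a history-measurable affine bijection of $Wu$. By the inductive hypothesis, $W\sim\mu_{n-r,b}$ given $h$, independently of the other blocks. Since $n-r\ge n-2T\ge1$ and in fact $n-r\ge2$ before this query (as $r<\ell_{j,t}+1\le 2T<n$, so $n-r\ge2$), Lemma~\ref{lem:haar-observation} applies.
\end{itemize}
It gives $Wu=z$ with $z\sim\sigma_{n-r-1}$, and
\[
W=Q_z\operatorname{diag}(1,V)Q_u^\top,\qquad V\sim\mu_{n-r-1,b}.
\]
We then absorb the frames:
\[
Q_{R,j,t+1}=Q_R\operatorname{diag}(I_r,Q_z),\qquad
Q_{D,j,t+1}=Q_D\operatorname{diag}(I_r,Q_u),
\]
with $r_{j,t+1}=r+1$ and $W_{j,t+1}=V$. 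These frames are Borel functions of $h$ and the new reply, so they are history-measurable. They lie in $SO(n)$ because $Q_z,Q_u\in SO$.

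\emph{Inverse query.} The same argument applies with $O_j^\top=Q_D\operatorname{diag}(I_r,W^\top)Q_R^\top$, using the $W^\top$ version of Lemma~\ref{lem:haar-observation}. The new frames are absorbed symmetrically: $Q_z$ goes into $Q_D$ and $Q_u$ into $Q_R$.

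\emph{The main obstacle: the joint posterior.} We must show that conditioning on the new reply preserves independence across blocks and yields exactly $\mu_{n-r-1,b}$ for the updated block. The key point is that the reply depends only on $W_{j,t}$, and the remaining blocks are independent of $W_{j,t}$ under the current posterior. Bayes' rule therefore factorizes: the other blocks keep their laws, and block $j$ gets the conditional law from Lemma~\ref{lem:haar-observation}.

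This conditional law holds only for $\sigma$-a.e.\ $z$. We handle this by working with regular conditional distributions and making the Borel choice of $Q_z$ from the lemma. The resulting statement then holds for almost every history.

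Finally, every choice made along the way (the query, $u$, $Q_u$, $Q_z$, and the increment of $r$) is a function of the history only and not of $b$. This gives the final claim that the frames and exposed dimensions are chosen by rules independent of $b$. Since $r_{j,t}$ increases by at most one per query to block $j$, we also get $r_{j,t}\le\ell_{j,t}$.
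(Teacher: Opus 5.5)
Your proposal is correct and follows the paper's own proof: induction on primitive queries, a split of $Q_D^\top v$ (resp.\ $Q_R^\top v$) into exposed and residual coordinates, an application of Lemma~\ref{lem:haar-observation} when the residual part is nonzero (using $\ell_{j,t}\le 2T-1$ to get $n-r\ge2$), absorption of $Q_u,Q_z$ into the frames, and factorization of the posterior because only the selected residual map enters the reply. Two details stay implicit but are harmless: in the inverse case the new residual map is $V^\top$ rather than $V$ (it is still $\mu_{n-r-1,b}$-distributed), and terminal histories are handled by padding with an absorbing state.
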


\begin{proof}
At $t=0$, take
$Q_{D,j,0}=Q_{R,j,0}=I_n$, $r_{j,0}=\ell_{j,0}=0$,
and $W_{j,0}=O_j$.
The assertion follows from the independence of the blocks.

Assume the assertion holds at time $t$.
Fix a nonterminal history $h$ and its next query $(j,v,\tau)$.
For the selected block, suppress $j,t$ and write
$Q_D=[U_D,E]$, $Q_R=[U_R,F]$, and $m=n-r$.
The two query types give
\[
\begin{array}{lll}
\tau=\mathsf f:
&v=U_Da+Ec,
&Y=U_Ra+FWc,\\
\tau=\mathsf i:
&v=U_Ra+Fc,
&Y=U_Da+EW^\top c.
\end{array}
\]
Here $a\in\mathbb R^r$ and $c\in\mathbb R^m$ are the coordinates of $v$ in the corresponding frame: $a=U_D^\top v$, $c=E^\top v$ for a forward
query, and $a=U_R^\top v$, $c=F^\top v$ for an inverse query.
If $c=0$, the response $Y$ is determined by $h$, so the query reveals
nothing about $W$. In this case we leave the frames, the exposed
dimension, and the residual map unchanged.

Suppose $c\ne0$. The algorithm makes at most $2T$ queries to each block,
and the next query is addressed to block $j$, so $\ell_{j,t}\le2T-1$.
Together with $r\le\ell_{j,t}$ and $2T<n$, this gives
\[
m=n-r\ge n-2T+1\ge2.
\]
Write $c=\|c\|u$ with $u\in S^{m-1}$. For a forward query,
$Y=U_Ra+\|c\|\,F(Wu)$, so given $h$ the response $Y$ and the vector
$z:=Wu=F^\top Y/\|c\|$ determine each other. For an inverse query the
same holds with $z:=W^\top u=E^\top Y/\|c\|$.
By Lemma~\ref{lem:haar-observation}, conditionally on this response,
\[
\begin{array}{ll}
W=Q_z\operatorname{diag}(1,V)Q_u^\top,
&\tau=\mathsf f,\\
W^\top=Q_z\operatorname{diag}(1,V)Q_u^\top,
&\tau=\mathsf i,
\end{array}
\qquad V\sim\mu_{m-1,b}.
\]
Put $B_u=\operatorname{diag}(I_r,Q_u)$ and
$B_z=\operatorname{diag}(I_r,Q_z)$, and update
\[
(Q_D^+,Q_R^+,W^+)
=
\begin{cases}
(Q_DB_u,Q_RB_z,V), & \tau=\mathsf f,\\
(Q_DB_z,Q_RB_u,V^\top), & \tau=\mathsf i.
\end{cases}
\]
In both cases,
\[
O_j
=
Q_R^+\operatorname{diag}(I_{r+1},W^+)(Q_D^+)^\top,
\qquad
\mathcal L_b(W^+\mid H_t=h,Y)=\mu_{m-1,b}.
\]
All updated frames belong to $SO(n)$ and depend only on $(h,Y)$.

Leave the other blocks unchanged and set
$r_i^+=r_i+\mathbf1_{\{i=j\}}\mathbf1_{\{c\ne0\}}$.
These dimensions are determined by $h$.
Let $K_b(h,\cdot)$ be the conditional response law.
The induction hypothesis and the single-block update yield
\[
\mathcal L_b\bigl(Y,(W_i^+)_{i=1}^k\mid H_t=h\bigr)
=
K_b(h,\cdot)\otimes
\bigotimes_{i=1}^k\mu_{n-r_i^+,b}.
\]
Indeed, only the selected residual map enters the response,
and the other residual maps are conditionally independent of it.
Disintegrating with respect to $Y$, and using
$H_{t+1}=(H_t,(j,v,\tau),Y)$, proves the posterior identity
at time $t+1$.
Moreover, $r_i^+\le\ell_{i,t+1}$.

After stopping, pad the history with an absorbing symbol and
leave all states unchanged.
This completes the induction.
\end{proof}

\paragraph{Transcript indistinguishability.}
The residual Haar laws give a response kernel independent of the
determinant sign. Iterating this kernel yields identical transcript laws.

\begin{lemma}[\bf Transcript indistinguishability]
\label{lem:primitive-indistinguishability}
In the setting of Lemma~\ref{lem:adaptive-posterior},
the padded transcript $H_M$, with $M=2Tk$, satisfies
\[
\mathcal L_{+1}(H_M)=\mathcal L_{-1}(H_M).
\]
\end{lemma}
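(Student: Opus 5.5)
The plan is to prove by induction on $t$ that the law of $H_t$ does not depend on $b$, using the posterior factorization of Lemma~\ref{lem:adaptive-posterior} to show that the conditional law of each new response given the history is the same under both signs.

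\textbf{Base case and setup.} The base case $t=0$ is trivial, since $H_0$ is empty or deterministic; randomness of the algorithm is handled by conditioning on the seed, which is independent of $b$. For the inductive step, I would show that the one-step kernel $K_b(h,\cdot)=\mathcal L_b(Y\mid H_t=h)$ equals $K_{-b}(h,\cdot)$ for every nonterminal history $h$. Since $\mathcal L_b(H_{t+1})=\mathcal L_b(H_t)\otimes K_b$ (the next query is a deterministic function of $h$), the identity then propagates to $t+1$. Terminal histories are padded with an absorbing symbol, which trivially has a sign-free kernel. Finally, $M=2Tk$ bounds the total number of primitive queries, so $H_M$ is the complete padded transcript.

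\textbf{The kernel identity.} Fix $h$ and the next query $(j,v,\tau)$. By Lemma~\ref{lem:adaptive-posterior}, $O_j=Q_R\operatorname{diag}(I_r,W)Q_D^\top$ with frames, $r$, $a$, and $c$ all measurable in $h$ and chosen by sign-free rules, and $W\sim\mu_{m,b}$ given $H_t=h$. The argument then splits on $c$.

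If $c=0$, then $Y$ is a deterministic function of $h$ and does not depend on $b$.

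If $c\neq0$, then $Y=U_Ra+\|c\|F(Wu)$ in the forward case, or $Y=U_Da+\|c\|E(W^\top u)$ in the inverse case, with $u=c/\|c\|$. The counting in the proof of Lemma~\ref{lem:adaptive-posterior} gives $m\ge2$. Lemma~\ref{lem:haar-observation} then says $Wu\sim\sigma_{m-1}$, and likewise $W^\top u\sim\sigma_{m-1}$, regardless of $b$. Hence $Y$ is the pushforward of $\sigma_{m-1}$ under an $h$-measurable affine map, so $K_{+1}(h,\cdot)=K_{-1}(h,\cdot)$.

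\textbf{Measure-theoretic care.} The only delicate step is making this rigorous at the level of regular conditional distributions. The posterior in Lemma~\ref{lem:adaptive-posterior} holds only for $\mathcal L_b(H_t)$-almost every $h$. Because the induction hypothesis gives $\mathcal L_{+1}(H_t)=\mathcal L_{-1}(H_t)$, the two null sets live under a common measure, so the kernels agree almost everywhere with respect to that common measure. That is exactly what the product-measure identity $\mathcal L(H_{t+1})=\mathcal L(H_t)\otimes K$ requires.

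I expect the main obstacle to be this bookkeeping: verifying that the frames and the dimension $m$ are indeed functions of $h$ alone, with no hidden dependence on $b$. Lemma~\ref{lem:adaptive-posterior} asserts this explicitly, so I would simply invoke it. Once both kernels are written as the same explicit pushforward of $\sigma_{m-1}$, the result follows by induction up to $t=M$.
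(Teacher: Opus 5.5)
Your proposal is correct and follows essentially the same route as the paper: an induction on $t$ in which the one-step response kernel is sign-free. It is written via the history-measurable frames of Lemma~\ref{lem:adaptive-posterior} as a pushforward of $\sigma_{m-1}$, using $m\ge2$ and Lemma~\ref{lem:haar-observation}. The only differences are cosmetic: the paper's single explicit kernel formula also covers your $c=0$ case, and you spell out the almost-everywhere bookkeeping that the paper leaves implicit.
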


\begin{proof}
Fix a nonterminal history $h$ and its next query
$\alpha_t(h)=(j,v,\tau)$.
Use the history-measurable frames from
Lemma~\ref{lem:adaptive-posterior}, and write
$Q_D=[U_D,E]$, $Q_R=[U_R,F]$ for the selected block.
Its residual dimension satisfies
\[
m=n-r_{j,t}
\ge n-\ell_{j,t}
\ge n-2T+1
\ge2.
\]
By Lemmas~\ref{lem:adaptive-posterior}
and~\ref{lem:haar-observation}, the conditional response law
under either sign is the kernel
\[
K_t(h,B)
:=
\begin{cases}
\displaystyle
\int_{S^{m-1}}
\mathbf1_B\!\left(
U_RU_D^\top v+\|E^\top v\|Fz
\right)\,\sigma_{m-1}(dz),
& \tau=\mathsf f,\\[2mm]
\displaystyle
\int_{S^{m-1}}
\mathbf1_B\!\left(
U_DU_R^\top v+\|F^\top v\|Ez
\right)\,\sigma_{m-1}(dz),
& \tau=\mathsf i.
\end{cases}
\]
These formulas also cover zero residual query vectors.
At terminal histories, set
$\alpha_t(h)=\dagger$ and $K_t(h,\cdot)=\delta_\dagger$.
The resulting Borel kernel is independent of $b$ and
extend it arbitrarily outside legal histories.

Since $H_{t+1}=(H_t,\alpha_t(H_t),Y_{t+1})$,
for every measurable set $B$ of histories,
\[
\mathbb P_b(H_{t+1}\in B)
=
\int\!\int
\mathbf1_B(h,\alpha_t(h),y)\,
K_t(h,dy)\,\mathcal L_b(H_t)(dh).
\]
Thus $\mathcal L_{+1}(H_t)=\mathcal L_{-1}(H_t)$ implies
$\mathcal L_{+1}(H_{t+1})=\mathcal L_{-1}(H_{t+1})$.
Starting from the common empty history proves the assertion.
\end{proof}


\begin{proof}[Proof of Lemma~\ref{lem:construction-indistinguishability}]
Let $\xi$ be the algorithm's entire random seed.
For each fixed seed, the oracle reduction gives a deterministic
primitive-query algorithm with at most $2T<n$ queries per block.
Applying Lemma~\ref{lem:primitive-indistinguishability}
and integrating over the common seed law yields
\[
\mathcal L_{+1}(\xi,H_M)=\mathcal L_{-1}(\xi,H_M),
\]
where $M=2Tk$.
The output is a Borel function $\Psi(\xi,H_M)$,
so its law is identical under both signs. Hence
\[
\mathbb E_{A\sim D_{+1}}[\rho_A^{\mathrm{Alg}}]
=
\mathbb E_{A\sim D_{-1}}[\rho_A^{\mathrm{Alg}}].
\]
\end{proof}

\subsection{Separation: Proof of Lemma~\ref{lem:construction-separation}}
\label{sec:mixture-separation}

In this section, we prove Lemma~\ref{lem:construction-separation}.

\begin{lemma}[\bf Separation (Lemma 6 restated)]
The averaged targets satisfy
\begin{equation}
    \operatorname{TV}(\bar\pi_{+1},\bar\pi_{-1})
    \ge
    1-\exp\left\{
        -\frac{k}{8}
        \exp\left(-\frac{10n}{\sqrt{\kappa}}\right)
    \right\}.
\end{equation}
\end{lemma}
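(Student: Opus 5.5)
The proof has two stages. First, a single-block bound $\operatorname{TV}(P_{+1},P_{-1})\ge\frac12e^{-5n/\sqrt\kappa}$, obtained from a radial test function. Second, amplification over the $k$ independent blocks through the Bhattacharyya (Hellinger) affinity.

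\emph{Amplification step.} The identity factor $N(0,I_{d-kn})$ is common to both $\bar\pi_{+1}$ and $\bar\pi_{-1}$, so by Eq.~\eqref{eq:averaged-targets} the affinity $\mathrm{BC}(P,Q)=\int\sqrt{pq}$ factorizes as $\mathrm{BC}(\bar\pi_{+1},\bar\pi_{-1})=\mathrm{BC}(P_{+1},P_{-1})^k$. Two standard inequalities connect affinity and total variation: $\mathrm{BC}\le\sqrt{1-\operatorname{TV}^2}\le 1-\operatorname{TV}^2/2$, and $\operatorname{TV}\ge 1-\mathrm{BC}$. Combining them gives
\[
\operatorname{TV}(\bar\pi_{+1},\bar\pi_{-1})\ge 1-\bigl(1-\tfrac12\operatorname{TV}(P_{+1},P_{-1})^2\bigr)^k\ge 1-\exp\bigl\{-\tfrac k2\operatorname{TV}(P_{+1},P_{-1})^2\bigr\}.
\]
Inserting the single-block bound yields $\tfrac k2\cdot\tfrac14e^{-10n/\sqrt\kappa}=\tfrac k8e^{-10n/\sqrt\kappa}$, which is exactly Eq.~\eqref{eq:averaged-target-separation}. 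What remains is the single-block estimate.

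\emph{Single-block estimate.} I use the test $h_\lambda(x)=e^{-\lambda\|x\|^2}\in[0,1]$, so that $\operatorname{TV}(P_{+1},P_{-1})\ge|\mathbb E_{P_{+1}}h_\lambda-\mathbb E_{P_{-1}}h_\lambda|$.

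For a fixed $O$, the Gaussian integral gives $\mathbb E h_\lambda=\det(A_O)^{1/2}\det(A_O+2\lambda I)^{-1/2}$. I then parametrize both matrices in the form $\gamma(1+r^2)I-\gamma r(O+O^\top)$. For $A_O$ itself this means $\gamma(1+r^2)=\frac{\kappa+1}2$ and $\gamma r=\frac{\kappa-1}4$, which gives $r=\varrho_\kappa=\frac{\sqrt\kappa-1}{\sqrt\kappa+1}$. For $A_O+2\lambda I$ the same form holds with a smaller ratio $r'$ and a larger scale $\gamma'$. In this form, every eigenvalue pair $e^{\pm i\varphi}$ of $O$ contributes $\gamma^2|1-re^{i\varphi}|^4$, and the leftover real eigenvalue $\pm1$ (forced when $n$ is odd, and determined by $\det O$) contributes $\gamma(1\mp r)^2$. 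Hence $\det(A_O)=\gamma^n\det(I-rO)^2\cdot(\text{sign factor})$. Concretely, I plan to use $\det(I-rO)\det(I-rO^\top)=\det((1+r^2)I-r(O+O^\top))$ together with $\det(I-rO^\top)=\det(O)\det(O^\top-rI)\cdot(\pm1)$, so that the determinant sign $b$ enters through a term of size $r^n$, in the same way that $\varrho_\kappa^n$ appears in Lemma~\ref{lem:openai-target-bias}.

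To evaluate the Haar averages of these characteristic-polynomial ratios, I would not compute $\mathbb E_O$ of the determinant ratio directly. Instead I would write the mixture density $p_b(x)\propto\mathbb E_O[\det(A_O)^{1/2}e^{-x^\top A_Ox/2}]$, reduce the $O$-dependence of the exponent to $\langle\hat x,O\hat x\rangle$ via Lemma~\ref{lem:haar-observation}, and then apply the spherical Poisson identities of Lemma~\ref{lem:spherical-poisson}. Those identities carry exactly the exponent $-n/2$ that a Gaussian integral of the form $(1+r^2-2rU)^{-n/2}$ produces. The target is an explicit formula of the shape $\mathbb E_{P_b}h_\lambda=F(\lambda)\bigl(1+b\,\phi(r,r')^n\bigr)$, or a comparable one. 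Choosing $\lambda\asymp\kappa$, so that $r'$ is a constant fraction of $r$, should then give a difference of at least $\frac12(r/\text{const})^n\ge\frac12e^{-5n/\sqrt\kappa}$, using $\log(1/\varrho_\kappa)\le 2/(\sqrt\kappa-1)$ and $\kappa\ge4$.

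\emph{Main obstacle.} The hardest step is this exact Haar computation: isolating how the determinant-sign component of $O(n)$ changes $\mathbb E\,h_\lambda$, given that the determinant normalization couples the whole spectrum of $O$. If the Poisson-identity route does not close, my fallback is to expand $\det(I-rO)^{-1}$ in characters (power sums $\operatorname{tr}O^j$). Between the two components $\mu_{n,\pm1}$, the moments of the $\operatorname{tr}O^j$ agree for all degrees below $n$ and first differ at degree $n$, through $\det O$. This exhibits the sign dependence as a term of order $r^n$, and the remaining work is to bound the tail of the expansion so that this term dominates.
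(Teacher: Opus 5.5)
Your amplification step is correct and is essentially the paper's argument: affinity $\le\sqrt{1-\operatorname{TV}^2}\le e^{-\operatorname{TV}^2/2}$, tensorization over blocks, and $\operatorname{TV}\ge1-$affinity. The gap is in the single-block bound, which is the substance of the lemma. The spherical Poisson identities only evaluate $\mathbb E[(1+r^2-2rU)^{-n/2}]$ and $\mathbb E[U(1+r^2-2rU)^{-n/2}]$ with $U=O_{11}$. They become usable only after the Haar average of the normalization $\sqrt{\det A_O}$, taken jointly with a function of $U$, has been reduced to something affine in $U$. Reducing the exponent to $U$ does nothing for this prefactor, which depends on the whole spectrum of $O$. (The paper does this reduction by conjugation invariance, $\mathbb E[\det(O)p_O(x)]=\mathbb E[\det(O)p_O(\|x\|e_1)]$.) This is exactly the obstacle you name and leave open.

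The missing idea is the determinant-cancellation identity (Lemma~\ref{lem:determinant-cancellation}). One has $\sqrt{\det A_O}=2^{-n}\det((\sqrt\kappa+1)I_n-(\sqrt\kappa-1)O)$, so $\det(O)\sqrt{\det A_O}=2^{-n}\det((\sqrt\kappa+1)O-(\sqrt\kappa-1)I_n)$. Expand the latter in principal minors $\det O_{S,S}$. The reflections $D_i=I_n-2e_ie_i^\top$, $i\ge2$, preserve Haar measure on $O(n)$ and the entry $O_{11}$, but negate $\det O_{S,S}$ whenever $i\in S$. Hence every term except $S=\varnothing$ and $S=\{1\}$ averages to zero, giving
\[
\mathbb E\bigl[\det(O)\sqrt{\det A_O}\,h(U)\bigr]=\frac{(-1)^{n-1}(\sqrt\kappa-1)^{n-1}}{2^n}\,\mathbb E\bigl[\bigl((\sqrt\kappa+1)U-(\sqrt\kappa-1)\bigr)h(U)\bigr]
\]
for every integrable $h$. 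Working with this signed expectation, i.e.\ with half of $\mathbb E_{P_{+1}}h_\lambda-\mathbb E_{P_{-1}}h_\lambda$, rather than with each $P_b$ separately, is what makes the computation exact. Your heuristic that the sign enters at order $r^n$, and your hoped-for shape $F(\lambda)(1+b\,\phi^n)$, do not replace this identity. The character-expansion fallback would also need tail control that you do not supply.

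Second, your parameter choice fails. In your own parametrization $\mathbb E h_\lambda=(\gamma/\gamma')^{n/2}\det(I-rO)/\det(I-r'O)$ with $\gamma/\gamma'=r'/r$. So taking $r'$ a constant fraction $c$ of $r=\varrho_\kappa$, which is equivalent to $\lambda\asymp\kappa$, costs a factor $c^{n/2}$. Carrying the computation through with the identity above gives exactly $\tfrac12\bigl|\mathbb E_{P_{+1}}h_\lambda-\mathbb E_{P_{-1}}h_\lambda\bigr|=\varrho_\kappa^{n/2-1}r'^{n/2}(\varrho_\kappa-r')/(1-r'^2)\le\varrho_\kappa^n(r'/\varrho_\kappa)^{n/2}$. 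For $r'=c\varrho_\kappa$ this is $e^{-\Theta(n)}$, far below $e^{-5n/\sqrt\kappa}$ once $\kappa$ is large. In particular, your inequality $(r/\mathrm{const})^n\ge e^{-5n/\sqrt\kappa}$ is false for any fixed constant larger than $1$ and large $\kappa$. You cannot let $\lambda\to0$ either, since the difference vanishes at $\lambda=0$.

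To reach $e^{-5n/\sqrt\kappa}$ with this test one needs $\log(\varrho_\kappa/r')=O(1/n+1/\sqrt\kappa)$. The paper takes $r'=(1-\tfrac1n)\varrho_\kappa$, i.e.\ $\lambda=\frac{4n\sqrt\kappa+(\sqrt\kappa-1)^2}{8n(n-1)}$. Then $(1-\tfrac1n)^{n/2}\ge\tfrac12$ and $n(1-r'^2)\le2+4n\sqrt\kappa/(\sqrt\kappa+1)^2$, which yield $\operatorname{TV}(P_{+1},P_{-1})\ge\tfrac12e^{-5n/\sqrt\kappa}$.
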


We first separate the single-block mixtures using a bounded radial
test, and then amplify this separation across independent blocks.
The single-block argument combines a cancellation identity for
Gaussian normalization factors with two spherical Poisson identities.
Throughout this section, $n\ge3$, $\kappa\ge4$, and $O$ is Haar
distributed on the full orthogonal group $O(n)$.
Write $U:=O_{11}$, let $p_O$ be the density of $N(0,A_O^{-1})$,
and let $p_b$ be the density of $P_b$, for $b\in\{-1,+1\}$.

\subsubsection{Determinant Cancellation}
\label{subsec:determinant-cancellation}

The difference between averages over the two determinant components
can be expressed as a Haar expectation weighted by $\det(O)$.
The next lemma evaluates this signed expectation when the integrand
contains the Gaussian normalization factor $\sqrt{\det A_O}$ and
otherwise depends only on $U$.

\begin{lemma}[\bf  Determinant cancellation]
\label{lem:determinant-cancellation}
For every Borel function $h$ with $\mathbb E|h(U)|<\infty$,
\begin{equation}
\begin{aligned}
    &\mathbb E\left[\det(O)\sqrt{\det A_O}\,h(U)\right]=
    \frac{(-1)^{n-1}(\sqrt{\kappa}-1)^{n-1}}{2^n}
    \mathbb E\left[
        \bigl((\sqrt{\kappa}+1)U-(\sqrt{\kappa}-1)\bigr)h(U)
    \right].
\end{aligned}
    \label{eq:determinant-cancellation}
\end{equation}
\end{lemma}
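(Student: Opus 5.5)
The plan is to turn $\sqrt{\det A_O}$ into a polynomial in the entries of $O$, expand it into principal minors, and then kill all but two terms by a Haar-averaging argument conditional on $U=O_{11}$.

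\textbf{Step 1: factor $A_O$.} Set $\alpha:=(\sqrt{\kappa}+1)/2$ and $\beta:=(\sqrt{\kappa}-1)/2$. Then $\alpha^2+\beta^2=(\kappa+1)/2$ and $\alpha\beta=(\kappa-1)/4$, so by \eqref{eq:single-block-precision}
\[
A_O=(\alpha I_n-\beta O)^\top(\alpha I_n-\beta O).
\]
Every eigenvalue of $O$ has modulus one and $\alpha>\beta\ge0$. So the eigenvalues of $\alpha I_n-\beta O$ are either conjugate pairs, contributing $|\alpha-\beta e^{i\varphi}|^2>0$, or real numbers $\alpha\mp\beta>0$. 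Hence $\det(\alpha I_n-\beta O)>0$, and
\[
\sqrt{\det A_O}=\det(\alpha I_n-\beta O).
\]
With this factorization, the claimed identity \eqref{eq:determinant-cancellation} is equivalent to
\[
\mathbb E\bigl[\det(O)\det(\alpha I_n-\beta O)\,\big|\,U\bigr]=(-\beta)^{n-1}(\alpha U-\beta),
\]
since $(-1)^{n-1}(\sqrt\kappa-1)^{n-1}2^{-n}\bigl((\sqrt\kappa+1)U-(\sqrt\kappa-1)\bigr)=(-\beta)^{n-1}(\alpha U-\beta)$. Multiplying by $h(U)$ and taking expectations then gives the lemma; integrability is fine because $|\det(O)\det(\alpha I_n-\beta O)|\le(\alpha+\beta)^n$.

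\textbf{Step 2: expand into minors and use Jacobi.} Expand the determinant as
\[
\det(\alpha I_n-\beta O)=\sum_{S\subseteq[n]}\alpha^{n-|S|}(-\beta)^{|S|}\det O_{S,S}.
\]
For orthogonal $O$ we have $O^{-1}=O^\top$, so Jacobi's complementary-minor identity gives $\det(O)\det O_{S^c,S^c}=\det O_{S,S}$, equivalently $\det(O)\det O_{S,S}=\det O_{S^c,S^c}$. Therefore
\[
\det(O)\sqrt{\det A_O}=\sum_{S}\alpha^{n-|S|}(-\beta)^{|S|}\det O_{S^c,S^c}.
\]

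\textbf{Step 3: condition on $U$.} Let $R\sim\mathrm{Haar}(O(n-1))$ be independent of $O$. The matrix $\operatorname{diag}(1,R)O$ is again Haar on $O(n)$ and has the same $(1,1)$ entry. So for any polynomial $F$,
\[
\mathbb E[F(O)\mid U]=\mathbb E\bigl[F(\operatorname{diag}(1,R)O)\mid U\bigr].
\]
Now take a principal minor $\det O_{T,T}$ with $T\setminus\{1\}\neq\emptyset$. After left multiplication, the rows indexed by $T\setminus\{1\}$ are linear combinations given by $R$. By Cauchy--Binet the minor is a sum of products of $|T\setminus\{1\}|$-minors of $R$ with quantities depending only on $O$. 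Each such minor of $R$ has zero mean, because flipping the sign of one of its rows is a Haar-preserving map that negates it. Thus these terms vanish conditionally on $U$.

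The only survivors are $T=S^c=\emptyset$, which contributes $(-\beta)^n$, and $T=\{1\}$, which contributes $\alpha(-\beta)^{n-1}U$. Their sum is $(-\beta)^{n-1}(\alpha U-\beta)$, which is the reduced identity from Step 1.

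\textbf{Main obstacle.} I expect Step 3 to need the most care. One must make the conditioning rigorous, and check that the row-sign-flip argument covers every $T$ containing $1$ with $|T|\ge2$ as well as every nonempty $T\subseteq\{2,\dots,n\}$. One must also confirm that the Jacobi identity gives the correct sign for both determinant components of $O(n)$. The case $n=1$, where the identity reads $O(\alpha-\beta O)=\alpha U-\beta$, serves as a sanity check.
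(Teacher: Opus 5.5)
Your proof is correct and follows the same skeleton as the paper's: factor $A_O$ so that $\sqrt{\det A_O}=\det(\alpha I_n-\beta O)$, expand $\det(O)\sqrt{\det A_O}$ in principal minors, and kill every minor whose index set meets $\{2,\ldots,n\}$ with a Haar-preserving row reflection that fixes $U=O_{11}$. The paper takes two shortcuts that you may find cleaner. First, it absorbs $\det O$ before expanding, via $\det(O)\det(\alpha I_n-\beta O)=\det(\alpha O^\top-\beta I_n)=\det(\alpha O-\beta I_n)$, instead of invoking Jacobi's complementary-minor identity. Second, it applies the deterministic reflection $D_i=I_n-2e_ie_i^\top$, for some $i\ge2$ in the index set, directly to $O$; this removes the need for the auxiliary random $R$, Cauchy--Binet, and the conditioning step you flagged as the main obstacle.
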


\begin{proof}
We first express the normalization factor as a determinant.
The definition of $A_O$ gives
\begin{equation}
    A_O
    =
    \frac14
    \bigl((\sqrt{\kappa}+1)I_n-(\sqrt{\kappa}-1)O\bigr)^\top
    \bigl((\sqrt{\kappa}+1)I_n-(\sqrt{\kappa}-1)O\bigr).
    \label{eq:precision-factorization}
\end{equation}
For $0\le t\le\sqrt{\kappa}-1$ and $v\in\mathbb R^n$,
\[
    \|((\sqrt{\kappa}+1)I_n-tO)v\|
    \ge(\sqrt{\kappa}+1-t)\|v\|
    \ge2\|v\|.
\]
Thus $(\sqrt{\kappa}+1)I_n-tO$ is invertible throughout this
interval. Its determinant is positive at $t=0$ and remains
positive by continuity. Hence Eq.~\eqref{eq:precision-factorization}
implies
\begin{equation}
    \sqrt{\det A_O}
    =
    \frac1{2^n}
    \det\bigl((\sqrt{\kappa}+1)I_n-(\sqrt{\kappa}-1)O\bigr).
    \label{eq:gaussian-normalization-factor}
\end{equation}
Multiplying the matrix inside the determinant by $O^\top$,
and using $\det(O^\top)=\det(O)$, yields
\begin{equation}
\begin{aligned}
    \det(O)\sqrt{\det A_O}
    &=
    \frac1{2^n}
    \det\bigl(
        (\sqrt{\kappa}+1)O^\top-(\sqrt{\kappa}-1)I_n
    \bigr)\\
    &=
    \frac1{2^n}
    \det\bigl(
        (\sqrt{\kappa}+1)O-(\sqrt{\kappa}-1)I_n
    \bigr),
\end{aligned}
    \label{eq:signed-normalization-factor}
\end{equation}
where the last equality follows by transposing.

We next identify the principal minors that survive averaging.
For $S\subseteq\{1,\ldots,n\}$, write $O_{S,S}$ for the
corresponding principal submatrix, with
$\det O_{\varnothing,\varnothing}=1$.
If $S$ contains an index $i\ge2$, the row reflection
$D_i:=I_n-2e_ie_i^\top$ satisfies
\[
    D_iO\overset{d}=O,\qquad
    (D_iO)_{11}=U,\qquad
    \det((D_iO)_{S,S})=-\det O_{S,S}.
\]
Consequently,
\[
\begin{aligned}
    \mathbb E[\det O_{S,S}h(U)]
    &=
    \mathbb E\left[
        \det((D_iO)_{S,S})h((D_iO)_{11})
    \right]\\
    &=
    -\mathbb E[\det O_{S,S}h(U)],
\end{aligned}
\]
so this expectation is zero.
All expectations are finite because $|\det O_{S,S}|\le1$.
Thus only $S=\varnothing$ and $S=\{1\}$ contribute when
we expand Eq.~\eqref{eq:signed-normalization-factor} in principal minors:
\[
\begin{aligned}
    &\mathbb E[\det(O)\sqrt{\det A_O}\,h(U)]\\
    &=
    \frac1{2^n}
    \sum_{S\subseteq\{1,\ldots,n\}}
    (\sqrt{\kappa}+1)^{|S|}
    \bigl(-(\sqrt{\kappa}-1)\bigr)^{n-|S|}
    \mathbb E[\det O_{S,S}h(U)]\\
    &=
    \frac1{2^n}\left[
        \bigl(-(\sqrt{\kappa}-1)\bigr)^n\mathbb E[h(U)]
        +(\sqrt{\kappa}+1)
        \bigl(-(\sqrt{\kappa}-1)\bigr)^{n-1}\mathbb E[Uh(U)]
    \right]\\
    &=
    \frac{(-1)^{n-1}(\sqrt{\kappa}-1)^{n-1}}{2^n}
    \mathbb E\left[
        \bigl((\sqrt{\kappa}+1)U-(\sqrt{\kappa}-1)\bigr)h(U)
    \right].
\end{aligned}
\]
\end{proof}

\subsubsection{Single-block Separation}
\label{subsec:radial-test}

We distinguish $P_{+1}$ and $P_{-1}$ using a test of the form
$x\mapsto e^{-\lambda\|x\|^2}$.
After Gaussian integration and determinant cancellation, its
expectation difference reduces to a spherical integral.
The following identities evaluate that integral.



Choosing $\lambda$ so that these identities apply gives the
following quantitative separation.

\begin{proposition}[\bf Single-block separation]
\label{prop:single-block-separation}
For $n\ge3$ and $\kappa\ge4$,
\begin{equation}
    \operatorname{TV}(P_{+1},P_{-1})
    \ge
    \frac{
        \left(\frac{\sqrt{\kappa}-1}{\sqrt{\kappa}+1}\right)^n
    }{
        2+\frac{4n\sqrt{\kappa}}{(\sqrt{\kappa}+1)^2}
    }
    \ge
    \frac12e^{-5n/\sqrt{\kappa}}.
    \label{eq:single-block-separation}
\end{equation}
\end{proposition}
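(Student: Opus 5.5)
The plan is to test the two mixtures against the bounded radial function $h_\lambda(x)=e^{-\lambda\|x\|^2}$ with a suitable $\lambda>0$. Since $h_\lambda$ takes values in $[0,1]$, the preliminaries give $\operatorname{TV}(P_{+1},P_{-1})\ge|\mathbb E_{P_{+1}}h_\lambda-\mathbb E_{P_{-1}}h_\lambda|$. Each component has Haar probability $1/2$, so
\[
\tfrac12\bigl(p_{+1}(x)-p_{-1}(x)\bigr)=\mathbb E\bigl[\det(O)\,p_O(x)\bigr],
\]
with $O$ Haar on all of $O(n)$. Everything therefore reduces to computing a single signed Haar expectation in closed form.

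\textbf{Step 1: the signed density pointwise.} From the definition, $x^\top A_Ox=\frac{\kappa+1}2\|x\|^2-\frac{\kappa-1}2x^\top Ox$, so
\[
p_O(x)=(2\pi)^{-n/2}\sqrt{\det A_O}\,\exp\Bigl(-\tfrac{\kappa+1}4\|x\|^2+\tfrac{\kappa-1}4\|x\|^2\,\tfrac{x^\top Ox}{\|x\|^2}\Bigr).
\]
Fix $x\neq0$ and pick $R\in SO(n)$ with $Re_1=x/\|x\|$. Left and right invariance of Haar measure under $SO(n)$ preserve $\det O$, so $(\det O,\,x^\top Ox/\|x\|^2,\,\det A_O)$ has the same joint law as $(\det O,\,U,\,\det A_O)$. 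Lemma~\ref{lem:determinant-cancellation} with $h(U)=\exp(\frac{\kappa-1}4\|x\|^2U)$ then gives
\[
\mathbb E[\det(O)\,p_O(x)]=c_n\,e^{-\frac{\kappa+1}4\|x\|^2}\,\mathbb E\Bigl[\bigl((\sqrt\kappa+1)U-(\sqrt\kappa-1)\bigr)e^{\frac{\kappa-1}4\|x\|^2U}\Bigr],
\]
where $c_n=(2\pi)^{-n/2}(-1)^{n-1}(\sqrt\kappa-1)^{n-1}2^{-n}$.

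\textbf{Step 2: integrate against $h_\lambda$.} Integrate over $x$ in polar coordinates and apply Fubini. The radial integral is
\[
\int_0^\infty s^{n-1}e^{-(\lambda+\frac{\kappa+1}4-\frac{\kappa-1}4U)s^2}\,ds\;\propto\;\Bigl(\lambda+\tfrac{\kappa+1}4-\tfrac{\kappa-1}4U\Bigr)^{-n/2}.
\]
Choose $r\in(0,1)$ with $(1+r^2)/(2r)=(\lambda+\frac{\kappa+1}4)/(\frac{\kappa-1}4)$. This is possible with $r<\varrho:=\frac{\sqrt\kappa-1}{\sqrt\kappa+1}$ exactly when $\lambda>0$; at $\lambda=0$ one gets $r=\varrho$. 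With this choice the bracket becomes a constant multiple of $1+r^2-2rU$. The spherical Poisson identities (Lemma~\ref{lem:spherical-poisson}) then evaluate the $U$-expectation exactly:
\[
\mathbb E\Bigl[\bigl((\sqrt\kappa+1)U-(\sqrt\kappa-1)\bigr)(1+r^2-2rU)^{-n/2}\Bigr]=\frac{(\sqrt\kappa+1)(r-\varrho)}{1-r^2}.
\]
This vanishes at $r=\varrho$, consistent with both mixtures having total mass one. For $r<\varrho$ it is nonzero, which produces the separation. Collecting the Gaussian constants $(2\pi)^{n/2}$, the powers of $(\kappa-1)/4$ and $(\sqrt\kappa-1)^{n-1}$ yields an explicit formula for $|\mathbb E_{P_{+1}}h_\lambda-\mathbb E_{P_{-1}}h_\lambda|$ of the shape $\varrho^{\,n}$-type factor times $(\varrho-r)(\cdots)^{n/2}/(1-r^2)$, as a function of $r$.

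\textbf{Step 3: optimize and simplify.} This step is the main obstacle: choosing $r$ (equivalently $\lambda$) so that the expression is at least the first bound in \eqref{eq:single-block-separation}. The factor $(\varrho-r)$ wants $r$ well below $\varrho$, while the $(r/\varrho)^{n/2}$-type factor penalizes that. The natural first choice is $\varrho-r\asymp\varrho/n$, i.e. $\lambda$ of order $\sqrt\kappa/n$ relative to the quadratic scale. This should give the denominator $2+\frac{4n\sqrt\kappa}{(\sqrt\kappa+1)^2}$, and I would tune the exact $r$ to match it. Finally, to get $\frac12e^{-5n/\sqrt\kappa}$, use $\frac{4n\sqrt\kappa}{(\sqrt\kappa+1)^2}\le\frac{4n}{\sqrt\kappa}\le 2e^{n/\sqrt\kappa}$, so the denominator is at most $4e^{n/\sqrt\kappa}$ up to constants, and use $\log\frac{\sqrt\kappa+1}{\sqrt\kappa-1}\le\frac{2}{\sqrt\kappa-1}\le\frac{4}{\sqrt\kappa}$, valid for $\kappa\ge4$. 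Absorbing the remaining constant factor into the exponent is then possible because $n\ge3$ and $\kappa\ge4$.
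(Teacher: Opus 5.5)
Your Steps 1 and 2 follow the paper's route. Applying Lemma~\ref{lem:determinant-cancellation} pointwise in $x$ before the Gaussian integration is only a reordering. Carried through, they give $\operatorname{TV}(P_{+1},P_{-1})\ge 2\varrho^{\,n-1}(r/\varrho)^{n/2}(\varrho-r)/(1-r^2)$ with $\varrho=\frac{\sqrt\kappa-1}{\sqrt\kappa+1}$. Step 3, however, stops at ``this should give'', and no tuning is actually needed. The paper takes exactly $r=(1-\tfrac1n)\varrho$, which turns the bound into $\frac{2}{n}\varrho^{\,n}(1-\tfrac1n)^{n/2}/(1-r^2)$. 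Then $(1-\tfrac1n)^{n/2}\ge\tfrac12$ by Bernoulli's inequality, since $n/2\ge1$. Also $n(1-r^2)=\frac{4n\sqrt\kappa}{(\sqrt\kappa+1)^2}+\varrho^2(2-\tfrac1n)\le\frac{4n\sqrt\kappa}{(\sqrt\kappa+1)^2}+2$. Together these give the first inequality, and you should write them out rather than leave the step heuristic.

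The step that actually fails is the second inequality. Put $t:=n/\sqrt\kappa$. Your estimates bound the denominator by $2+2e^{t}\le 4e^{t}$ and give $\varrho^{\,n}\ge e^{-4t}$, hence only $\tfrac14e^{-5t}$. The missing factor $2$ cannot be ``absorbed into the exponent'': your exponent constant is already $5$, and $n\ge3$, $\kappa\ge4$ give no lower bound on $t$. In fact there is no slack at all. For $n=3$ and $\kappa\to\infty$ the middle expression tends to $\tfrac12$, which is also the limit of $\tfrac12e^{-5t}$.

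You must keep the leading $2$ exactly. Bound $2+\frac{4n\sqrt\kappa}{(\sqrt\kappa+1)^2}\le 2(1+2t)\le 2e^{2t}$. Pair this with the sharper estimate $\log\frac{\sqrt\kappa+1}{\sqrt\kappa-1}=2\int_0^{1/\sqrt\kappa}\frac{ds}{1-s^2}\le\frac{2}{\sqrt\kappa(1-\kappa^{-1})}\le\frac{8}{3\sqrt\kappa}\le\frac{3}{\sqrt\kappa}$ for $\kappa\ge4$. This gives $\varrho^{\,n}\ge e^{-3t}$ and hence $\tfrac12e^{-5t}$. Your cruder bound $\frac{2}{\sqrt\kappa-1}\le\frac{4}{\sqrt\kappa}$ would only yield $\tfrac12 e^{-6t}$ with this denominator estimate.
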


\begin{proof}
For $\lambda\ge0$, let $h_\lambda(x):=e^{-\lambda\|x\|^2}$
and define
\[
    H(\lambda)
    :=
    \mathbb E\left[
        \det(O)\int_{\mathbb R^n}h_\lambda(x)p_O(x)\,dx
    \right].
\]
Since the two determinant components have Haar probability $1/2$,
\[
    H(\lambda)
    =
    \frac12\left(
        \mathbb E_{P_{+1}}h_\lambda
        -\mathbb E_{P_{-1}}h_\lambda
    \right).
\]
As $0<h_\lambda\le1$, it follows that
\begin{equation}
    \operatorname{TV}(P_{+1},P_{-1})\ge2|H(\lambda)|.
    \label{eq:single-block-test-gap}
\end{equation}
We compute $H(\lambda)$ and then choose $\lambda>0$.

For every $Q\in O(n)$, conjugation preserves Haar measure and
the determinant sign, while $A_{QOQ^\top}=QA_OQ^\top$ and
$p_{QOQ^\top}(Qx)=p_O(x)$.
Consequently, for every $x\in\mathbb R^n$,
\[
    \mathbb E[\det(O)p_O(x)]
    =
    \mathbb E[\det(O)p_O(\|x\|e_1)].
\]
At $\|x\|e_1$, the Gaussian exponent depends on $O$ only
through $U$, since
$e_1^\top A_Oe_1=(\kappa+1-(\kappa-1)U)/2$.
Moreover, $I_n\preceq A_O\preceq\kappa I_n$ gives the
integrable bound
\[
    h_\lambda(x)p_O(\|x\|e_1)
    \le
    \left(\frac{\kappa}{2\pi}\right)^{n/2}e^{-\|x\|^2/2}.
\]
Using $|\det(O)|=1$, Fubini's theorem, and Gaussian integration,
we obtain
\[
\begin{aligned}
    H(\lambda)
    &=
    \frac1{(2\pi)^{n/2}}
    \mathbb E\left[
        \det(O)\sqrt{\det A_O}
        \int_{\mathbb R^n}
        e^{-[\kappa+1+4\lambda-(\kappa-1)U]\|x\|^2/4}\,dx
    \right]\\
    &=
    2^{n/2}
    \mathbb E\left[
        \frac{\det(O)\sqrt{\det A_O}}
        {(\kappa+1+4\lambda-(\kappa-1)U)^{n/2}}
    \right],
\end{aligned}
\]
where $\int_{\mathbb R^n}e^{-a\|x\|^2/4}\,dx=(4\pi/a)^{n/2}$
for $a>0$.
The denominator is at least $2+4\lambda$, so
Lemma~\ref{lem:determinant-cancellation} applies with
$h(u)=(\kappa+1+4\lambda-(\kappa-1)u)^{-n/2}$ and gives
\begin{equation}
    H(\lambda)
    =
    \frac{(-1)^{n-1}(\sqrt{\kappa}-1)^{n-1}}{2^{n/2}}
    \mathbb E\left[
        \frac{(\sqrt{\kappa}+1)U-(\sqrt{\kappa}-1)}
        {(\kappa+1+4\lambda-(\kappa-1)U)^{n/2}}
    \right].
    \label{eq:radial-laplace-transform}
\end{equation}
Since $Oe_1$ is uniform on $\mathbb S^{n-1}$,
Lemma~\ref{lem:spherical-poisson} applies to $U$.

We now match the denominator with a positive multiple of
$1+r^2-2rU$.
At $\lambda=0$, this matching gives
$r=(\sqrt{\kappa}-1)/(\sqrt{\kappa}+1)$, but $H(0)=0$
because both target distributions have total mass one.
We therefore decrease $r$ by a factor $1-1/n$ and set
\begin{equation}
\begin{aligned}
    r
    &:=
    \left(1-\frac1n\right)
    \frac{\sqrt{\kappa}-1}{\sqrt{\kappa}+1},\\
    \lambda
    &:=
    \frac{\kappa-1}{8r}(1+r^2)-\frac{\kappa+1}{4}
    =
    \frac{4n\sqrt{\kappa}+(\sqrt{\kappa}-1)^2}{8n(n-1)}.
\end{aligned}
    \label{eq:radial-test-parameters}
\end{equation}
Then $0<r<1$, $\lambda>0$, and
\begin{equation}
\begin{aligned}
    \kappa+1+4\lambda-(\kappa-1)U
    &=
    \frac{\kappa-1}{2r}(1+r^2-2rU)\\
    &=
    \frac{n(\sqrt{\kappa}+1)^2}{2(n-1)}(1+r^2-2rU).
\end{aligned}
    \label{eq:poisson-denominator}
\end{equation}
The two Poisson identities yield
\[
\begin{aligned}
    \mathbb E\left[
        \frac{(\sqrt{\kappa}+1)U-(\sqrt{\kappa}-1)}
             {(1+r^2-2rU)^{n/2}}
    \right]
    &=
    \frac{(\sqrt{\kappa}+1)r-(\sqrt{\kappa}-1)}{1-r^2}\\
    &=
    -\frac{\sqrt{\kappa}-1}{n(1-r^2)}.
\end{aligned}
\]
Substitution into Eq.~\eqref{eq:radial-laplace-transform} gives
\begin{equation}
\begin{aligned}
    H(\lambda)
    &=
    \frac{(-1)^{n-1}(\sqrt{\kappa}-1)^{n-1}}
         {(\sqrt{\kappa}+1)^n}
    \left(1-\frac1n\right)^{n/2}
    \left(-\frac{\sqrt{\kappa}-1}{n(1-r^2)}\right)\\
    &=
    \frac{(-1)^n}{n}
    \left(\frac{\sqrt{\kappa}-1}{\sqrt{\kappa}+1}\right)^n
    \frac{(1-1/n)^{n/2}}{1-r^2}.
\end{aligned}
    \label{eq:exact-radial-transform}
\end{equation}

It remains to lower-bound the test gap.
Since $n/2\ge1$, convexity of $t\mapsto t^{n/2}$ implies
\[
    \left(1-\frac1n\right)^{n/2}
    \ge1-\frac n2\cdot\frac1n=\frac12.
\]
Also,
\[
\begin{aligned}
    n(1-r^2)
    &=
    \frac{4n\sqrt{\kappa}}{(\sqrt{\kappa}+1)^2}
    +
    \left(\frac{\sqrt{\kappa}-1}{\sqrt{\kappa}+1}\right)^2
    \left(2-\frac1n\right)\\
    &\le
    \frac{4n\sqrt{\kappa}}{(\sqrt{\kappa}+1)^2}+2.
\end{aligned}
\]
Together with Eq.~\eqref{eq:single-block-test-gap} and
Eq.~\eqref{eq:exact-radial-transform}, these estimates give
\[
\begin{aligned}
    \operatorname{TV}(P_{+1},P_{-1})
    &\ge
    \frac2n
    \left(\frac{\sqrt{\kappa}-1}{\sqrt{\kappa}+1}\right)^n
    \frac{(1-1/n)^{n/2}}{1-r^2}\\
    &\ge
    \frac{
        \left(\frac{\sqrt{\kappa}-1}{\sqrt{\kappa}+1}\right)^n
    }{
        2+\frac{4n\sqrt{\kappa}}{(\sqrt{\kappa}+1)^2}
    }.
\end{aligned}
\]
This proves the first claimed bound.
For the second, $\kappa\ge4$ implies
\[
\begin{aligned}
    \log\left(\frac{\sqrt{\kappa}+1}{\sqrt{\kappa}-1}\right)
    &=
    2\int_0^{1/\sqrt{\kappa}}\frac{dt}{1-t^2}\\
    &\le
    \frac{2}{\sqrt{\kappa}(1-\kappa^{-1})}
    \le\frac{8}{3\sqrt{\kappa}}
    \le\frac3{\sqrt{\kappa}}.
\end{aligned}
\]
Hence
$((\sqrt{\kappa}-1)/(\sqrt{\kappa}+1))^n
\ge e^{-3n/\sqrt{\kappa}}$.
Using $1+t\le e^t$, we also have
\[
    2+\frac{4n\sqrt{\kappa}}{(\sqrt{\kappa}+1)^2}
    \le2+\frac{4n}{\sqrt{\kappa}}
    \le2e^{2n/\sqrt{\kappa}}.
\]
Therefore,
\[
    \operatorname{TV}(P_{+1},P_{-1})
    \ge
    \frac{e^{-3n/\sqrt{\kappa}}}{2e^{2n/\sqrt{\kappa}}}
    =
    \frac12e^{-5n/\sqrt{\kappa}}.
\]
\end{proof}

\subsubsection{Amplification Across Blocks}
\label{subsec:mixture-amplification}

Under each prior, the $k$ blocks of a target sample are independent
and have the same single-block mixture law.
We convert the preceding separation into an upper bound on
Hellinger affinity, which multiplies across independent blocks.

\begin{proof}[Proof of Lemma~\ref{lem:construction-separation}]
Let $a$ denote the single-block Hellinger affinity:
\[
    a:=\int_{\mathbb R^n}\sqrt{p_{+1}(x)p_{-1}(x)}\,dx.
\]
Cauchy--Schwarz and $\int p_{+1}=\int p_{-1}=1$ give $0\le a\le1$
and
\[
\begin{aligned}
    \operatorname{TV}(P_{+1},P_{-1})
    &=
    \frac12\int_{\mathbb R^n}
    |\sqrt{p_{+1}}-\sqrt{p_{-1}}|
    (\sqrt{p_{+1}}+\sqrt{p_{-1}})\\
    &\le
    \frac12
    \left(\int_{\mathbb R^n}
        (\sqrt{p_{+1}}-\sqrt{p_{-1}})^2\right)^{1/2}
    \left(\int_{\mathbb R^n}
        (\sqrt{p_{+1}}+\sqrt{p_{-1}})^2\right)^{1/2}\\
    &=
    \frac12\sqrt{(2-2a)(2+2a)}
    =
    \sqrt{1-a^2}.
\end{aligned}
\]
Proposition~\ref{prop:single-block-separation} therefore implies
\begin{equation}
    a
    \le
    \sqrt{1-\frac14e^{-10n/\sqrt{\kappa}}}
    \le
    \exp\left\{-\frac18e^{-10n/\sqrt{\kappa}}\right\},
    \label{eq:single-block-affinity}
\end{equation}
where the last inequality follows from
$\log(1-t)\le-t$ for $0\le t<1$.

Let $\bar p_b$ be the density of $\bar\pi_b$.
When $kn<d$, write $\varphi_{d-kn}$ for the standard Gaussian
density on $\mathbb R^{d-kn}$.
By Eq.~\eqref{eq:averaged-targets},
\[
    \bar p_b(x_1,\ldots,x_k,z)
    =
    \prod_{j=1}^k p_b(x_j)\,\varphi_{d-kn}(z).
\]
Tonelli's theorem and $\int\varphi_{d-kn}=1$ give
\[
\begin{aligned}
    &\int_{\mathbb R^d}
    \sqrt{\bar p_{+1}(x)\bar p_{-1}(x)}\,dx\\
    &\qquad=
    \prod_{j=1}^k
    \left(
        \int_{\mathbb R^n}
        \sqrt{p_{+1}(x_j)p_{-1}(x_j)}\,dx_j
    \right)
    \int_{\mathbb R^{d-kn}}\varphi_{d-kn}(z)\,dz\\
    &\qquad=a^k.
\end{aligned}
\]
When $kn=d$, the same identity holds with the Gaussian
factor omitted.
Finally, $\min\{u,v\}\le\sqrt{uv}$ for $u,v\ge0$ implies
\[
\begin{aligned}
    \operatorname{TV}(\bar\pi_{+1},\bar\pi_{-1})
    &=
    1-\int_{\mathbb R^d}
    \min\{\bar p_{+1}(x),\bar p_{-1}(x)\}\,dx\\
    &\ge
    1-\int_{\mathbb R^d}
    \sqrt{\bar p_{+1}(x)\bar p_{-1}(x)}\,dx\\
    &=1-a^k\\
    &\ge
    1-\exp\left\{-\frac{k}{8}e^{-10n/\sqrt{\kappa}}\right\},
\end{aligned}
\]
where the last step uses Eq.~\eqref{eq:single-block-affinity}.
\end{proof}

\subsection{Regularity}
In this section, we verify that the precision matrices in our construction belong
to $\mathcal A_{d,\kappa}$.
The spectral bounds follow from orthogonality within each block
and are preserved by the block-diagonal construction.
\label{sec:regularity}
\begin{lemma}[\bf Regularity]
For every $\boldsymbol O\in O(n)^k$,
$A_{\boldsymbol O}\in\mathcal A_{d,\kappa}$.
The potential $f_{A_{\boldsymbol O}}$ has unique minimizer $0$
and minimum value $0$.
\end{lemma}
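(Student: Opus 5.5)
The plan is to verify the spectral bounds block by block and then pass to the block-diagonal matrix. Fix $O\in O(n)$. The matrix $S:=(O+O^\top)/2$ is symmetric, so it suffices to locate its spectrum. For any unit vector $v\in\mathbb R^n$,
\[
v^\top S v=\langle v,Ov\rangle .
\]
By Cauchy--Schwarz and $\|Ov\|=\|v\|=1$, this quantity lies in $[-1,1]$. Hence $-I_n\preceq S\preceq I_n$. Since
\[
A_O=\tfrac{\kappa+1}{2}I_n-\tfrac{\kappa-1}{2}S
\]
and $\kappa\ge 4>1$, the map $s\mapsto \frac{\kappa+1}{2}-\frac{\kappa-1}{2}s$ is decreasing and sends $[-1,1]$ onto $[1,\kappa]$. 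Therefore $I_n\preceq A_O\preceq\kappa I_n$. The matrix $A_O$ is also visibly symmetric.

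Next, for $\boldsymbol O=(O_1,\dots,O_k)$, the matrix $A_{\boldsymbol O}$ is block diagonal with symmetric blocks $A_{O_1},\dots,A_{O_k}$ and $I_{d-kn}$. Its quadratic form splits as
\[
x^\top A_{\boldsymbol O}x=\sum_j x_j^\top A_{O_j}x_j+\|z\|^2 .
\]
Each term lies between $\|x_j\|^2$ (resp. $\|z\|^2$) and $\kappa$ times that norm, using $\kappa\ge 1$ for the identity block. Summing gives $\|x\|^2\le x^\top A_{\boldsymbol O}x\le\kappa\|x\|^2$, that is, $A_{\boldsymbol O}\in\mathcal A_{d,\kappa}$. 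When $kn=d$ the identity block is absent and the same argument applies verbatim.

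Finally, $f_{A_{\boldsymbol O}}(x)=\frac12x^\top A_{\boldsymbol O}x\ge\frac12\|x\|^2$. This is strictly positive for $x\neq 0$ and equals $0$ at $x=0$, so $0$ is the unique minimizer and the minimum value is $0$. No step is a real obstacle. The only point needing care is the sign of $\kappa-1$ in the affine map, which guarantees that the endpoints $\pm1$ of the spectrum of $S$ map to $1$ and $\kappa$ rather than outside this interval.
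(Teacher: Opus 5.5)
Your proposal is correct and follows essentially the same route as the paper's proof. Both bound $\langle v,Ov\rangle$ by Cauchy--Schwarz and push this through the affine formula for $v^\top A_Ov$. Both then sum the block quadratic forms using $\kappa\ge1$ for the identity block, and read off the unique minimizer from $f_{A_{\boldsymbol O}}(x)\ge\tfrac12\|x\|^2$.
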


\begin{proof}[Proof of Lemma~\ref{lem:construction-admissibility}]
For $O\in O(n)$ and $v\in\mathbb R^n$, orthogonality gives
$\|Ov\|=\|v\|$. Hence, by Cauchy--Schwarz,
$|\langle v,Ov\rangle|\le\|v\|^2$.
Since
\[
    v^\top A_Ov
    =
    \frac{\kappa+1}{2}\|v\|^2
    -
    \frac{\kappa-1}{2}\langle v,Ov\rangle,
\]
we obtain
$\|v\|^2\le v^\top A_Ov\le\kappa\|v\|^2$.
Together with $A_O=A_O^\top$, this proves
\begin{equation}
    I_n\preceq A_O\preceq\kappa I_n.
    \label{eq:single-block-regularity}
\end{equation}

Now let $x=(x_1,\ldots,x_k,z)\in\mathbb R^d$, where
$x_j\in\mathbb R^n$ and $z\in\mathbb R^{d-kn}$.
Applying Eq.~\eqref{eq:single-block-regularity} to each block yields
\[
\begin{aligned}
    \|x\|^2
    &=
    \sum_{j=1}^k\|x_j\|^2+\|z\|^2\\
    &\le
    \sum_{j=1}^k x_j^\top A_{O_j}x_j+\|z\|^2
    =
    x^\top A_{\boldsymbol O}x\\
    &\le
    \kappa\sum_{j=1}^k\|x_j\|^2+\|z\|^2
    \le
    \kappa\|x\|^2.
\end{aligned}
\]
The last inequality uses $\kappa\ge1$.
When $kn=d$, the terms involving $z$ are omitted.
Since $A_{\boldsymbol O}$ is symmetric, these inequalities imply
$I_d\preceq A_{\boldsymbol O}\preceq\kappa I_d$, and therefore
$A_{\boldsymbol O}\in\mathcal A_{d,\kappa}$.

Finally, $f_{A_{\boldsymbol O}}(0)=0$, whereas for every $x\ne0$,
\[
    f_{A_{\boldsymbol O}}(x)
    =
    \frac12x^\top A_{\boldsymbol O}x
    \ge
    \frac12\|x\|^2
    >0.
\]
Thus $0$ is the unique minimizer and the minimum value is $0$.
\end{proof}

\section{Proofs for Section~\ref{sec:5}}
\label{app:c}

\subsection{Proof of Lemma~\ref{lem:openai-rotation}}
\label{app:openai-properties}

To explain the argument, distinguish the original coordinates from the
hidden eigenbasis: $u=O^\top x$ expresses $x$ in the latter coordinates,
where the matrix acts by $u\mapsto\Lambda u$. Thus a query at $x_i$ has
hidden coordinates $u_i=O^\top x_i$ and returns
$Ax_i=O\Lambda u_i$. The algorithm observes $x_i$ and $Ax_i$ but the latent
vector $u_i$ is introduced for the analysis. The function value
$f(x_i)=\langle x_i,Ax_i\rangle/2$ is determined by this pair and adds no
information.

The difficulty is that $x_i$ may depend on earlier replies and hence on
$O$. We handle this dependence by revealing $O$ successively. Conditional
on an enlarged history, $O$ is known on a revealed latent subspace and
remains a Haar isometry between the complementary subspaces. Whenever
the unrevealed component of $O^\top x_i$ is nonzero, its direction is
uniform in the latent complement.
We convert these directions into independent standard Gaussian vectors
and propagate the resulting span relations under multiplication by
$\Lambda$. The extra vector $g_{q+1}$ accounts for the output $X$, which
is treated as one final preimage revelation.

Throughout, for subspaces $F,G\subseteq\mathbb R^d$ of equal dimension,
we use the following notation:
\begin{itemize}
\item $P_F$ is the orthogonal projection onto $F$, $\mathbb S(F)$ is the unit
  sphere of $F$, and $\operatorname{Unif}(\mathbb S(F))$ is the uniform law
  on it.
\item $\operatorname{Iso}(F,G)$ is the set of linear isometries from $F$
  onto $G$, and $\mathrm O(F)=\operatorname{Iso}(F,F)$. For
  $J\in\operatorname{Iso}(F,G)$, $J^\top\colon G\to F$ denotes the adjoint,
  so $J^\top=J^{-1}$.
\item $\operatorname{Haar}(\operatorname{Iso}(F,G))$ is the unique
  probability law on $\operatorname{Iso}(F,G)$ invariant under
  $J\mapsto UJQ^\top$ for all $U\in\mathrm O(G)$ and $Q\in\mathrm O(F)$.
  Equivalently, it is the law of $UJ_0$ for
  $U\sim\operatorname{Haar}(\mathrm O(G))$ and any fixed
  $J_0\in\operatorname{Iso}(F,G)$. If $J$ has this law, then $J^{-1}$ has
  law $\operatorname{Haar}(\operatorname{Iso}(G,F))$.
\item $\mathcal L(Y\mid\mathcal H)$ is the conditional law of a random
  element $Y$ given a $\sigma$-algebra $\mathcal H$.
\end{itemize}

The key observation is that revealing one image--preimage pair of a
Haar isometry preserves rotational symmetry on the orthogonal
complements. Once $Jv=w$ is known, orthogonal transformations fixing
$v$ and $w$ preserve both this identity and the Haar law. Consequently,
the remaining restriction is still a Haar isometry from
$F\cap v^\perp$ onto $G\cap w^\perp$.
The following fact makes this observation precise, conditionally on
the information already available.
\begin{fact}[\bf Haar revelation]
\label{fact:openai-haar}
Let $F,G$ be $\mathcal H$-measurable subspaces with $\dim F=\dim G$, and let
$\mathcal L(J\mid\mathcal H)=\operatorname{Haar}(\operatorname{Iso}(F,G))$.
For an $\mathcal H$-measurable $w\in\mathbb S(G)$, put $v=J^\top w$. Then
\begin{align*}
\mathcal L(v\mid\mathcal H)
&=\operatorname{Unif}(\mathbb S(F)),\\
\mathcal L(J|_{F\cap v^\perp}\mid\mathcal H,v)
&=\operatorname{Haar}
  \bigl(\operatorname{Iso}(F\cap v^\perp,G\cap w^\perp)\bigr).
\end{align*}
Symmetrically, for an $\mathcal H$-measurable $s\in\mathbb S(F)$ and $t=Js$,
the vector $t$ is uniform on $\mathbb S(G)$ and, given $(\mathcal H,t)$,
$J|_{F\cap s^\perp}$ is Haar on
$\operatorname{Iso}(F\cap s^\perp,G\cap t^\perp)$.
\end{fact}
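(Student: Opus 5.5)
The statement to prove is Fact~\ref{fact:openai-haar}. I would reduce everything to a conditional version of the standard invariance argument, working throughout on a regular conditional distribution given $\mathcal H$. Since $F,G,w$ are $\mathcal H$-measurable, I would fix a realization and treat them as deterministic. The claim then becomes a statement about a Haar isometry $J\in\operatorname{Iso}(F,G)$ between fixed subspaces and a fixed unit vector $w\in\mathbb S(G)$. Choosing $\mathcal H$-measurable orthonormal bases identifies $F$ and $G$ with $\mathbb R^m$, $m=\dim F$, and $J$ with a Haar element of $\mathrm O(m)$. The symmetric statement about $s$ and $t=Js$ follows from the first applied to $J^{-1}$, which is Haar on $\operatorname{Iso}(G,F)$ by the last bullet of the notation list. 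So only the preimage statement needs proof.

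\textbf{Uniformity of $v$.} For any $Q\in\mathrm O(F)$, the map $J\mapsto JQ^\top$ preserves the Haar law. Under it $v=J^\top w$ becomes $QJ^\top w=Qv$. Hence the law of $v$ on $\mathbb S(F)$ is $\mathrm O(F)$-invariant, and so it is $\operatorname{Unif}(\mathbb S(F))$ by uniqueness of the rotation-invariant probability on the sphere.

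\textbf{Conditional law of the restriction.} I would mimic the proof of Lemma~\ref{lem:haar-observation}. Fix Borel sections $x\mapsto R_x\in\mathrm O(F)$ with $R_xv_0=x$ for a reference unit vector $v_0$. Let $J_w$ be a fixed isometry with $J_wv_0=w$. Take $Z\sim\operatorname{Unif}(\mathbb S(F))$ and an independent $K$ that is Haar on $\mathrm O(G\cap w^\perp)$. Define
\[
\widetilde J := (1\oplus K)\,J_w\,R_Z^\top .
\]
Then $\widetilde J^\top w = R_Z J_w^\top w = Z$, and the restriction $\widetilde J|_{F\cap Z^\perp}$ given $Z$ is Haar on $\operatorname{Iso}(F\cap Z^\perp,G\cap w^\perp)$, because $K$ is Haar and independent of $Z$. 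It remains to show that $\widetilde J$ is Haar on $\operatorname{Iso}(F,G)$. Then $(v,J)\stackrel{d}{=}(Z,\widetilde J)$, and disintegrating with respect to $v$ gives the claim.

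\textbf{Main obstacle: showing $\widetilde J$ is Haar.} The difficulty is that right-invariance must absorb the change of section. For $Q\in\mathrm O(F)$, write $R_{QZ}^\top Q R_Z = 1\oplus S_Q(Z)$ with $S_Q(Z)\in\mathrm O(v_0^\perp)$. Then
\[
\widetilde J Q^\top = (1\oplus K)\,J_w\,(1\oplus S_Q(Z))^\top R_{QZ}^\top .
\]
Moving $S_Q(Z)$ through $J_w$ turns it into an orthogonal map on $G\cap w^\perp$. By independence and Haar invariance, $K$ composed with this map is again Haar and independent of $Z$, and $QZ\stackrel{d}{=}Z$. Hence $\widetilde JQ^\top\stackrel{d}{=}\widetilde J$.

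This gives right-invariance. Right-invariance alone characterizes the Haar law on $\operatorname{Iso}(F,G)$, since the equivalent description $UJ_0$ shows it is the unique $\mathrm O(F)$-right-invariant law via $J\mapsto J_0^{-1}J$. This completes the argument. The only technical care left is measurability of the sections and of the conditional laws, handled as in Lemma~\ref{lem:haar-observation}.
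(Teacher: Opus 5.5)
Your argument is correct, but it reaches the conditional law of the restriction by a different route from the paper. The uniformity of $v$ and the reduction of the symmetric statement to $J^{-1}$ coincide with the paper's proof.

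For the restriction, you build an explicit model $\widetilde J=(1\oplus K)\,J_w\,R_Z^\top$, in the spirit of Lemma~\ref{lem:haar-observation}, and then have to verify that $\widetilde J$ is Haar. That verification costs you three things:
\begin{enumerate}
\item Borel sections $R_x$;
\item the cocycle identity $R_{QZ}^\top Q R_Z=1\oplus S_Q(Z)$;
\item the fact that right $\mathrm O(F)$-invariance alone determines the Haar law on $\operatorname{Iso}(F,G)$.
\end{enumerate}
All of these check out.

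The paper constructs nothing. It takes $U$ Haar on the stabilizer of $w$ in $\mathrm O(G)$, independent of $J$, and sets $J'=UJ$. Left invariance gives $J'\stackrel{d}{=}J$, and $J'^\top w=v$, so $(J',v)\stackrel{d}{=}(J,v)$. Conditionally on $J$, the restriction $J'|_{F\cap v^\perp}=U|_{G\cap w^\perp}\circ J|_{F\cap v^\perp}$ is Haar by definition. This kernel depends on $J$ only through $v$, so it is also the conditional law given $v$, and it transfers back to $J$.

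The paper's stabilizer-randomization trick is shorter and coordinate-free; it needs no sections, no cocycle computation, and no uniqueness argument. Your construction is heavier. In return, it gives an explicit parameterization of $J$ as an independent uniform direction together with a residual Haar isometry, which is the form the paper uses in Section~\ref{sec:indistinguishability}.
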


\begin{proof}
Condition on $\mathcal H$. For any $Q\in\mathrm O(F)$, invariance gives
$JQ^\top\stackrel{\mathrm d}{=}J$, hence $Qv\stackrel{\mathrm d}{=}v$, so
$v$ is uniform on $\mathbb S(F)$.

Next, let $U$ be Haar on the stabilizer
$\{U\in\mathrm O(G):Uw=w\}\cong\mathrm O(G\cap w^\perp)$, independent of
$J$, and set $J'=UJ$. Then $J'\stackrel{\mathrm d}{=}J$ and
$J'^\top w=J^\top U^\top w=v$, so
$(J',v)\stackrel{\mathrm d}{=}(J,v)$. Given $J$, the restriction
$J'|_{F\cap v^\perp}=U|_{G\cap w^\perp}\circ J|_{F\cap v^\perp}$ is Haar on
$\operatorname{Iso}(F\cap v^\perp,G\cap w^\perp)$ by definition. This law
depends on $J$ only through $v$, so it is also the law given $v$, and since
$(J',v)\stackrel{\mathrm d}{=}(J,v)$ the same holds for $J$.

The symmetric statement follows by applying the above to $J^\top=J^{-1}$,
which is Haar on $\operatorname{Iso}(G,F)$.
\end{proof}

\paragraph{Haar states.}
To apply Fact~\ref{fact:openai-haar} repeatedly along an adaptive
transcript, we need to keep track of which part of $O$ has been
revealed and how the remaining part is distributed.
We record the available information by a $\sigma$-algebra
$\mathcal H$ and the revealed directions in the hidden eigenbasis
by a subspace $E$.
The history should determine both $E$ and the action of $O$ on $E$,
represented by $OP_E$, while the restriction of $O$ to $E^\perp$
should remain Haar between the corresponding orthogonal complements.
The following definition packages these requirements into a state
that we will preserve after each revelation.

\begin{definition}[\bf Haar state]
    Let $\mathcal H$ be a $\sigma$-algebra and $E\subseteq\mathbb R^d$ a random
subspace. We call $(\mathcal H,E)$ a \emph{Haar state} if $E$ and $OP_E$ are
$\mathcal H$-measurable and
\[
\mathcal L(O|_{E^\perp}\mid\mathcal H)
=\operatorname{Haar}
 \bigl(\operatorname{Iso}(E^\perp,(OE)^\perp)\bigr).
\]
\end{definition}

We next show that the Haar-state property is preserved by the
revelations needed to process an oracle query.
For a query point $x$ chosen using the current history, we first
reveal its hidden coordinates $u=O^\top x$, and then reveal the
reply $Ax=O\Lambda u$. These two steps add $u$ and $\Lambda u$,
respectively, to the revealed latent subspace.
We will also introduce independent auxiliary randomness for the
Gaussian construction below.
The following fact verifies that all three updates preserve a
Haar state and identifies the uniform direction arising from
a preimage revelation.
\begin{fact}[\bf One revelation step]
\label{fact:openai-step}
Let $(\mathcal H,E)$ be a Haar state, and write $T=OP_E$ and $W=OE$.
\begin{enumerate}
\item[(i)] \emph{(Auxiliary randomness.)}
If $Z$ is independent of $\sigma(\mathcal H,O)$, then
$(\mathcal H\vee\sigma(Z),E)$ is a Haar state.
\item[(ii)] \emph{(Revealing an image.)}
If $b\in\mathbb R^d$ is $\mathcal H$-measurable, then
$(\mathcal H\vee\sigma(Ob),\operatorname{span}(E,b))$ is a Haar state.
\item[(iii)] \emph{(Revealing a preimage.)}
If $x\in\mathbb R^d$ is $\mathcal H$-measurable, then
$(\mathcal H\vee\sigma(O^\top x),\operatorname{span}(E,O^\top x))$ is a Haar
state. Moreover, writing $x^\perp=P_{W^\perp}x$,
\begin{equation}
\label{eq:openai-preimage}
O^\top x=T^\top x+\|x^\perp\|\,v,
\qquad
v:=
\begin{cases}
O^\top x^\perp/\|x^\perp\|, & x^\perp\ne0,\\
0, & x^\perp=0,
\end{cases}
\end{equation}
and $\mathcal L(v\mid\mathcal H)=\operatorname{Unif}(\mathbb S(E^\perp))$
on the event $\{x^\perp\ne0\}$.
\end{enumerate}
\end{fact}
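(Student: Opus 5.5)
The three parts will be proved in order, each by reduction to Fact~\ref{fact:openai-haar} or to independence. Throughout, write $R:=O|_{E^\perp}$. By the Haar-state hypothesis, $\mathcal L(R\mid\mathcal H)=\operatorname{Haar}(\operatorname{Iso}(E^\perp,W^\perp))$, and $O=T+RP_{E^\perp}$ because $E$ and $W$ are $\mathcal H$-measurable.

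\textbf{Part (i).} Since $Z$ is independent of $\sigma(\mathcal H,O)$, conditioning additionally on $Z$ does not change the conditional law of $R$ given $\mathcal H$. The measurability of $E$ and $OP_E$ is inherited from $\mathcal H$.

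\textbf{Part (ii).} Decompose $b=P_Eb+P_{E^\perp}b$. Then $Ob=Tb+R(P_{E^\perp}b)$, and $Tb$ is $\mathcal H$-measurable. On the $\mathcal H$-measurable event $\{P_{E^\perp}b=0\}$, the image $Ob$ is already known, the enlarged subspace equals $E$, and nothing changes.

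Otherwise set $s=P_{E^\perp}b/\|P_{E^\perp}b\|\in\mathbb S(E^\perp)$. Observing $Ob$ is then equivalent, given $\mathcal H$, to observing $t=Rs$. The symmetric half of Fact~\ref{fact:openai-haar}, applied with $F=E^\perp$, $G=W^\perp$ and $J=R$, shows that $R|_{E^\perp\cap s^\perp}$ is Haar on $\operatorname{Iso}(E^\perp\cap s^\perp,W^\perp\cap t^\perp)$ given $(\mathcal H,t)$. It remains to identify these subspaces with the new state's:
\begin{itemize}
\item $E^\perp\cap s^\perp=\operatorname{span}(E,b)^\perp$;
\item $W^\perp\cap t^\perp=(O\operatorname{span}(E,b))^\perp$, since $O\operatorname{span}(E,b)=\operatorname{span}(W,t)$.
\end{itemize}
Finally, $OP_{\operatorname{span}(E,b)}=T+t\,s^\top$ is measurable with respect to the enlarged $\sigma$-algebra.

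\textbf{Part (iii).} Apply the same argument to $J^\top=O^\top|_{W^\perp}$. Since $x=P_Wx+x^\perp$ and $O^\top P_Wx=T^\top x$, we have
\[
O^\top x=T^\top x+\|x^\perp\|\,R^\top\!\left(\frac{x^\perp}{\|x^\perp\|}\right)
\]
on the event $\{x^\perp\ne0\}$. This is exactly \eqref{eq:openai-preimage} with $v=R^\top w$, where $w=x^\perp/\|x^\perp\|$ is $\mathcal H$-measurable in $\mathbb S(W^\perp)$.

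The first half of Fact~\ref{fact:openai-haar} then gives $v\sim\operatorname{Unif}(\mathbb S(E^\perp))$ given $\mathcal H$, and shows that $R|_{E^\perp\cap v^\perp}$ is Haar onto $W^\perp\cap w^\perp$ given $(\mathcal H,v)$. Observing $O^\top x$ is equivalent to observing $v$ given $\mathcal H$, because $T^\top x$ and $\|x^\perp\|$ are known. Moreover,
\[
\operatorname{span}(E,O^\top x)=\operatorname{span}(E,v),\qquad O\operatorname{span}(E,v)=\operatorname{span}(W,w),
\]
so the complements match the new state. On $\{x^\perp=0\}$, $O^\top x=T^\top x\in E$ is $\mathcal H$-measurable and the state is unchanged.

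\textbf{Main obstacle.} The only delicate point is measure-theoretic. The conclusions must hold on $\mathcal H$-measurable events, for example $\{P_{E^\perp}b=0\}$, and Fact~\ref{fact:openai-haar} must be applied with random, $\mathcal H$-measurable subspaces $F$ and $G$. I would handle both by working with regular conditional distributions given $\mathcal H$ and applying the Fact pointwise for fixed $\mathcal H$-data, splitting on the event where the projected vector vanishes. Checking the subspace identities above is routine.
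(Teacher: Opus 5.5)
Your proposal is correct and follows the paper's proof essentially step for step: you split on the $\mathcal H$-measurable event where the projected vector vanishes, apply the appropriate half of Fact~\ref{fact:openai-haar} to $O|_{E^\perp}$, and check that the new orthogonal complements and the updated map ($T+ts^\top$ in (ii), and $T+wv^\top$ in (iii), which you should state explicitly there) match the new Haar state. The only difference is expository: you work out (ii) in detail and reduce (iii) to it, whereas the paper does (iii) first and refers (ii) back to it.
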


\begin{proof}
(i) Since $O|_{E^\perp}$ is $\sigma(\mathcal H,O)$-measurable and $Z$ is
independent of $\sigma(\mathcal H,O)$, conditioning additionally on $Z$ does
not change the conditional law of $O|_{E^\perp}$.

(iii) Since $OP_E=P_WO$, we have $T^\top x=P_EO^\top x=O^\top P_Wx$, which
gives \eqref{eq:openai-preimage}. In particular
$\mathcal H^+:=\mathcal H\vee\sigma(O^\top x)=\mathcal H\vee\sigma(v)$.
The event $\{x^\perp=0\}$ belongs to $\mathcal H$, so we may treat it and its
complement separately. On $\{x^\perp=0\}$, the vector $O^\top x=T^\top x\in E$
is $\mathcal H$-measurable and nothing changes. On $\{x^\perp\ne0\}$, the
unit vector $w=x^\perp/\|x^\perp\|\in\mathbb S(W^\perp)$ is
$\mathcal H$-measurable and $v=(O|_{E^\perp})^\top w$.
Fact~\ref{fact:openai-haar} with $J=O|_{E^\perp}$, $F=E^\perp$,
$G=W^\perp$ gives
$\mathcal L(v\mid\mathcal H)=\operatorname{Unif}(\mathbb S(E^\perp))$ and
\[
\mathcal L\bigl(O|_{E^\perp\cap v^\perp}\bigm|\mathcal H^+\bigr)
=\operatorname{Haar}
 \bigl(\operatorname{Iso}(E^\perp\cap v^\perp,W^\perp\cap w^\perp)\bigr).
\]
Let $E^+=\operatorname{span}(E,O^\top x)=E\oplus\operatorname{span}\{v\}$.
Then $(E^+)^\perp=E^\perp\cap v^\perp$,
$(OE^+)^\perp=W^\perp\cap w^\perp$, and $OP_{E^+}=T+wv^\top$ is
$\mathcal H^+$-measurable. Hence $(\mathcal H^+,E^+)$ is a Haar state.

(ii) Let $r=P_{E^\perp}b$, and set $t=Or/\|r\|$ if $r\ne0$ and $t=0$
otherwise. Then $Ob=Tb+\|r\|\,t$, so
$\mathcal H\vee\sigma(Ob)=\mathcal H\vee\sigma(t)$. The rest is as in (iii),
using the symmetric part of Fact~\ref{fact:openai-haar} with
$s=r/\|r\|\in\mathbb S(E^\perp)$ on $\{r\ne0\}$: the new revealed space is
$E\oplus\operatorname{span}\{s\}=\operatorname{span}(E,b)$, and
$OP_{\operatorname{span}(E,b)}=T+ts^\top$.
\end{proof}

\begin{fact}[\bf Gaussianizing a uniform direction]
\label{fact:openai-gaussianize}
Let $E\subseteq\mathbb R^d$ be a fixed subspace with $\dim E<d$, and let
$Z\sim N(0,I_d)$ and $v\sim\operatorname{Unif}(\mathbb S(E^\perp))$ be
independent. Then
\[
g:=P_EZ+\|P_{E^\perp}Z\|\,v\sim N(0,I_d),
\qquad
v\in E+\operatorname{span}\{g\}\quad\text{almost surely}.
\]
\end{fact}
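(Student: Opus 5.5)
The plan is to verify the two claims separately, using the polar decomposition of $Z$. Write $Z=P_EZ+P_{E^\perp}Z$. Since $E$ is fixed, $P_EZ$ and $P_{E^\perp}Z$ are independent centered Gaussians, distributed as $N(0,P_E)$ and $N(0,P_{E^\perp})$.

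\textbf{Step 1: rotational invariance on $E^\perp$.} Let $m:=\dim E^\perp\ge1$. Restricted to $E^\perp$, the vector $P_{E^\perp}Z$ is a standard Gaussian on an $m$-dimensional space. By rotational invariance on $E^\perp$, its norm $R:=\|P_{E^\perp}Z\|$ and its direction $P_{E^\perp}Z/R$ are independent. The direction is uniform on $\mathbb S(E^\perp)$, and this is well defined because $R>0$ almost surely when $m\ge1$. Moreover, $(P_EZ,R)$ is independent of that direction.

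\textbf{Step 2: the law of $g$.} The vector $v$ is uniform on $\mathbb S(E^\perp)$ and independent of $Z$. Hence $(P_EZ,R,v)$ has the same joint law as $(P_EZ,R,P_{E^\perp}Z/R)$. Applying the measurable map $(a,r,s)\mapsto a+rs$ to both triples gives $g\stackrel{d}{=}P_EZ+P_{E^\perp}Z=Z\sim N(0,I_d)$. Concretely, one checks that $P_EZ$ is independent of $P_{E^\perp}Z$, so the joint law factorizes as stated, and then pushes forward.

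\textbf{Step 3: the span relation.} Since $v\in E^\perp$, we have $P_Eg=P_EZ\in E$ and $P_{E^\perp}g=Rv$. Therefore $v=(g-P_EZ)/R$ whenever $R>0$, which holds almost surely. This shows $v\in E+\operatorname{span}\{g\}$ almost surely.

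\textbf{Main obstacle.} The only delicate point is the independence of the norm and direction of a standard Gaussian on a subspace, together with its independence from $P_EZ$. This follows from the orthogonal invariance of $N(0,I_m)$ under $\mathrm O(E^\perp)$: for any $Q\in\mathrm O(E^\perp)$, $(R,Q\,\cdot\,\text{direction})$ has the same law as $(R,\text{direction})$, so the conditional law of the direction given $R$ is invariant, hence uniform. Everything else is routine.
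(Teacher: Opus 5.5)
Your proof is correct and follows the same route as the paper: you use the independence of $P_EZ$ and $P_{E^\perp}Z$ together with the independence of the norm and the uniform direction of the Gaussian on $E^\perp$ to identify the law of $g$, and then $\|P_{E^\perp}Z\|\,v=g-P_EZ$ with $\|P_{E^\perp}Z\|>0$ a.s.\ to get the span relation. Your pushforward of the triple $(P_EZ,R,v)$ is a more explicit phrasing of the paper's step that $\|P_{E^\perp}Z\|\,v\sim N(0,P_{E^\perp})$ independently of $P_EZ$.
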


\begin{proof}
The vectors $P_EZ\sim N(0,P_E)$ and $P_{E^\perp}Z\sim N(0,P_{E^\perp})$ are
independent. A Gaussian vector with law $N(0,P_{E^\perp})$ has independent
norm and direction, with the direction uniform on $\mathbb S(E^\perp)$.
Hence $\|P_{E^\perp}Z\|\,v\sim N(0,P_{E^\perp})$, independently of $P_EZ$,
and $g\sim N(0,P_E+P_{E^\perp})=N(0,I_d)$. Finally,
$\|P_{E^\perp}Z\|>0$ almost surely and
$\|P_{E^\perp}Z\|\,v=g-P_EZ$ with $P_EZ\in E$.
\end{proof}

\begin{proof}[Proof of Lemma~\ref{lem:openai-rotation}]
\emph{Setup.}
Collect the algorithm's private randomness in a seed $R$ independent of $O$.
Padding with dummy queries $x=0$, we may assume that the algorithm makes
exactly $q$ queries $x_1,\ldots,x_q$. A query at $x$ returns
$Ax=O\Lambda O^\top x$ and $\langle x,Ax\rangle/2$, and the latter is a
function of $(x,Ax)$. Write $x_{q+1}=X$ and $u_i=O^\top x_i$ for
$1\le i\le q+1$. Then $x_i$ is a measurable function of $R$ and the replies
$O\Lambda u_1,\ldots,O\Lambda u_{i-1}$. Extend the probability space by
i.i.d.\ vectors $Z_1,\ldots,Z_{q+1}\sim N(0,I_d)$ independent of $(R,O)$ and
this does not change the joint law of $O$, $R$, the transcript, and $X$.

\emph{Histories.}
Let $E_0=\{0\}$ and $\mathcal H_0=\sigma(R)$, and for $1\le i\le q$ let
\[
\mathcal H_i=\mathcal H_{i-1}\vee\sigma(Z_i,u_i,O\Lambda u_i),
\qquad
E_i=\operatorname{span}(E_{i-1},u_i,\Lambda u_i).
\]
Also let $\mathcal H_{q+1}=\mathcal H_q\vee\sigma(Z_{q+1},u_{q+1})$. Then
$x_{i+1}$ is $\mathcal H_i$-measurable for $0\le i\le q$. We claim that
$(\mathcal H_i,E_i)$ is a Haar state for $0\le i\le q$. For $i=0$ this holds
because $O$ is Haar and independent of $R$. For the induction step, apply
Fact~\ref{fact:openai-step} three times: part~(i) with $Z=Z_i$, part~(iii)
with $x=x_i$, and part~(ii) with $b=\Lambda u_i$. Note that
$\dim E_i\le2i\le2q<d$.

\emph{Gaussian vectors.}
Fix $1\le i\le q+1$ and write $E=E_{i-1}$. Let $x_i^\perp$ and $v_i$ be the
vectors $x^\perp$ and $v$ of \eqref{eq:openai-preimage} for the Haar state
$(\mathcal H_{i-1}\vee\sigma(Z_i),E)$ and $x=x_i$, and define
\[
g_i=
\begin{cases}
P_EZ_i+\|P_{E^\perp}Z_i\|\,v_i, & x_i^\perp\ne0,\\
Z_i, & x_i^\perp=0.
\end{cases}
\]
Then $g_i$ is $\mathcal H_i$-measurable. On $\{x_i^\perp\ne0\}$,
Fact~\ref{fact:openai-step}(iii) gives
$\mathcal L(v_i\mid\mathcal H_{i-1}\vee\sigma(Z_i))
=\operatorname{Unif}(\mathbb S(E^\perp))$.
Thus, conditionally on $\mathcal H_{i-1}$, the vectors $Z_i$ and $v_i$ are
independent with laws $N(0,I_d)$ and $\operatorname{Unif}(\mathbb S(E^\perp))$,
and Fact~\ref{fact:openai-gaussianize} gives
\begin{equation}
\label{eq:openai-gi}
\mathcal L(g_i\mid\mathcal H_{i-1})=N(0,I_d),
\qquad
u_i\in E_{i-1}+\operatorname{span}\{g_i\}\quad\text{almost surely},
\end{equation}
where the inclusion uses \eqref{eq:openai-preimage}. On
$\{x_i^\perp=0\}$, \eqref{eq:openai-gi} holds trivially since
$u_i\in E_{i-1}$.

\emph{Independence.}
By \eqref{eq:openai-gi}, the conditional law of $g_i$ given
$\mathcal H_{i-1}$ is non-random, so $g_i$ is independent of
$\mathcal H_{i-1}$. Since $g_1,\ldots,g_{i-1}$ are
$\mathcal H_{i-1}$-measurable, the vectors $g_1,\ldots,g_{q+1}$ are
mutually independent.

\emph{Krylov span.}
Let $\mathcal S_0=\{0\}$ and
$\mathcal S_i=\operatorname{span}\{\Lambda^jg_a:1\le a\le i,\ 0\le j\le i\}$.
If $E_{i-1}\subseteq\mathcal S_{i-1}$, then by \eqref{eq:openai-gi}
\[
E_i
\subseteq
\mathcal S_{i-1}+\Lambda\mathcal S_{i-1}
+\operatorname{span}\{g_i,\Lambda g_i\}
\subseteq\mathcal S_i .
\]
Hence $E_q\subseteq\mathcal S_q$ by induction, and
\[
O^\top X=u_{q+1}
\in E_q+\operatorname{span}\{g_{q+1}\}
\subseteq\mathcal S_q+\operatorname{span}\{g_{q+1}\}
\subseteq\mathcal K_q(g_1,\ldots,g_{q+1}).
\]
\end{proof}

\subsection{Proof of Lemma~\ref{lem:openai-cosine}}
\label{app:openai-cosine}
Set $r=q+1$ and $n=2q+1$. For $1\le i\le d$, define
\[
t_i=(1,\sqrt2\cos\theta_i,\ldots,
          \sqrt2\cos(q\theta_i))^\top,
\qquad
\gamma_i=((g_1)_i,\ldots,(g_r)_i)^\top.
\]
The vectors $\gamma_1,\ldots,\gamma_d$ are independent
$N(0,I_r)$. Let $B\in\mathbb R^{d\times r^2}$ have
$i$th row $d^{-1/2}(\gamma_i\otimes t_i)^\top$.

Our strategy is to write $u=Bc$ and control $\|u\|^2$
through $B^\top B$, and $u^\top D_nu$ through
$B^\top D_nB$. The following fact verifies that the
columns of $B$ span $\mathcal K_q$ and shows that these
two matrices have expectations $I_{r^2}$ and $0$,
respectively. We will then prove concentration around
these expectations to obtain a uniform bound on
$\mathcal K_q$.

\begin{fact}[\bf Cosine basis and moment identities]
\label{fact:openai-cosine-basis}
The matrix $B$ satisfies
\[
\operatorname{range}(B)=\mathcal K_q,
\qquad
\mathbb E[B^\top B]=I_{r^2},
\qquad
\mathbb E[B^\top D_nB]=0.
\]
Moreover,
\[
\|t_i\|^2\le2r,
\qquad
\sum_i t_it_i^\top=dI_r,
\qquad
\sum_i\cos(n\theta_i)t_it_i^\top=0.
\]
\end{fact}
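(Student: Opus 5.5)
The proof has three parts: the range identity, the trigonometric facts, and the two moment identities, which follow from the first two.

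\emph{Range.} Since $\lambda_i=\frac{\kappa+1}{2}+\frac{\kappa-1}{2}\cos\theta_i$ is a nonconstant affine function of $\cos\theta_i$, for each $a$ we have
\[
\operatorname{span}\{\Lambda^jg_a:0\le j\le q\}=\operatorname{span}\{C^jg_a:0\le j\le q\},
\qquad C=\operatorname{diag}(\cos\theta_i).
\]
Because $\cos(j\theta)=T_j(\cos\theta)$ with $\deg T_j=j$ and leading coefficient nonzero, this span also equals $\operatorname{span}\{(\cos(j\theta_i)(g_a)_i)_{i=1}^d:0\le j\le q\}$. Now index the columns of $B$ by pairs $(a,j)$. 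The $(a,j)$ column is $d^{-1/2}\bigl((g_a)_i(t_i)_j\bigr)_i$, which is a nonzero multiple of the vector $(\cos(j\theta_i)(g_a)_i)_i$. Hence $\operatorname{range}(B)=\mathcal K_q$.

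\emph{Trigonometric facts.} The bound $\|t_i\|^2=1+2\sum_{j=1}^q\cos^2(j\theta_i)\le 1+2q\le 2r$ is immediate. The remaining two identities reduce to discrete orthogonality on the Chebyshev nodes $\theta_i=(i-\frac12)\pi/d$:
\[
\sum_{i=1}^d\cos(m\theta_i)=0\quad\text{for every integer } m \text{ with } 0<|m|<2d .
\]
To prove this, write the sum as $\operatorname{Re}\sum_i e^{im\theta_i}$. This is a geometric series equal to $e^{im\pi/(2d)}(e^{im\pi}-1)/(e^{im\pi/d}-1)$, which is purely imaginary. (If $m$ is even the numerator vanishes; if $m$ is odd, simplify to $i/\sin(m\pi/2d)$ up to sign.) Its real part is therefore $0$.

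Next, use product-to-sum:
\[
\cos(j\theta)\cos(l\theta)=\tfrac12\bigl[\cos((j-l)\theta)+\cos((j+l)\theta)\bigr].
\]
For $0\le j,l\le q$ we have $j+l\le 2q<2d$. With the normalization $1,\sqrt2,\dots$ this gives $\sum_it_it_i^\top=dI_r$; the diagonal entries use $\sum_i\cos^2(j\theta_i)=d/2$ for $j\ge1$.

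For the $D_n$ identity, expand the triple product
\[
\cos(n\theta)\cos(j\theta)\cos(l\theta)
\]
into cosines of frequencies $n\pm j\pm l$. Since $j+l\le 2q=n-1$, each frequency satisfies $1\le |n\pm j\pm l|\le n+2q=4q+1$. This is the one point that needs care, because we must stay below $2d$. The hypothesis only gives $n<d$, so $4q+1<2d$, and the needed bound holds. Every term therefore sums to zero, giving $\sum_i\cos(n\theta_i)t_it_i^\top=0$.

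\emph{Moment identities.} We have
\[
B^\top B=d^{-1}\sum_i(\gamma_i\gamma_i^\top)\otimes(t_it_i^\top),
\]
and $\mathbb E\gamma_i\gamma_i^\top=I_r$. Hence
\[
\mathbb E B^\top B=d^{-1}I_r\otimes\sum_it_it_i^\top=I_{r^2}.
\]
The same computation with weights $\cos(n\theta_i)$ gives
\[
\mathbb E B^\top D_nB=d^{-1}I_r\otimes\sum_i\cos(n\theta_i)t_it_i^\top=0 .
\]
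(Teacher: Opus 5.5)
Your proposal is correct and follows the paper's proof essentially step by step: the same triangular Chebyshev change of basis for the range, the same product-to-sum reductions with the frequency bound $1\le n\pm j\pm l\le 4q+1<2d$, and the same Kronecker computation for the two expectations. The only difference is cosmetic. You prove $\sum_i\cos(m\theta_i)=0$ for $0<|m|<2d$ via a geometric series, whereas the paper uses a telescoping sine identity.
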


\begin{proof}
\noindent
$\boxed{\operatorname{range}(B)=\mathcal K_q}$.
Since
$\lambda_i=(\kappa+1)/2+(\kappa-1)\cos\theta_i/2$,
the identity $\cos(j\theta)=T_j(\cos\theta)$ expresses
$\cos(j\theta_i)$ as a degree-$j$ polynomial in $\lambda_i$
with nonzero leading coefficient.
For each $a$, the change of basis between
$\bigl((g_a)_i\cos(j\theta_i)\bigr)_{i=1}^d$
and $\Lambda^jg_a$, $0\le j\le q$, is therefore
triangular and invertible.
The nonzero column normalizations in $B$ do not change
its range, proving the first identity.

\medskip
\noindent
$\boxed{\mathbb E[B^\top B]=I_{r^2}}$.
Recall that $\theta_i=(i-\tfrac12)\pi/d$.
For every integer $k$ with $0<|k|<2d$,
\begin{align*}
2\sin\!\left(\frac{k\pi}{2d}\right)
\sum_{i=1}^d\cos(k\theta_i)
&=\sum_{i=1}^d
\left[
\sin\!\left(\frac{ki\pi}{d}\right)
-\sin\!\left(\frac{k(i-1)\pi}{d}\right)
\right]\\
&=\sin(k\pi)-\sin(0)=0.
\end{align*}
Since $\sin(k\pi/(2d))\ne0$, this gives
$\sum_i\cos(k\theta_i)=0$.

Index the coordinates of $t_i$ by $0,\ldots,q$.
The entries involving the constant coordinate satisfy
\[
\sum_i(t_i)_0^2=d,
\qquad
\sum_i(t_i)_0(t_i)_j
=\sqrt2\sum_i\cos(j\theta_i)=0
\quad (1\le j\le q).
\]
For $1\le j,k\le q$, the product-to-sum identity gives
\begin{align*}
\sum_i(t_i)_j(t_i)_k
&=2\sum_i\cos(j\theta_i)\cos(k\theta_i)\\
&=\sum_i\cos((j-k)\theta_i)
 +\sum_i\cos((j+k)\theta_i)\\
&=
\begin{cases}
d,&j=k,\\
0,&j\ne k.
\end{cases}
\end{align*}
Indeed, $0<j+k\le2q<2d$, while $j-k=0$ exactly
when $j=k$; otherwise $0<|j-k|<2d$.
By symmetry, these calculations cover all matrix entries,
and hence $\sum_i t_it_i^\top=dI_r$.

Since $\mathbb E[\gamma_i\gamma_i^\top]=I_r$,
the row definition of $B$ now gives
\begin{align*}
\mathbb E[B^\top B]
&=\frac1d\sum_i
  \bigl(\mathbb E[\gamma_i\gamma_i^\top]
        \otimes t_it_i^\top\bigr)\\
&=\frac1d\left(I_r\otimes\sum_i t_it_i^\top\right)
=I_{r^2}.
\end{align*}

\medskip
\noindent
$\boxed{\mathbb E[B^\top D_nB]=0}$.
Since $n=2q+1$, for $0\le j,k\le q$ we have
\[
1\le n-j-k\le n\pm j\pm k
\le n+j+k\le4q+1<2d.
\]
Thus the cosine-sum identity above gives
\begin{align*}
&\sum_i\cos(n\theta_i)\cos(j\theta_i)\cos(k\theta_i)\\
=\quad&\frac14\sum_i\bigl[
  \cos((n+j+k)\theta_i)
 +\cos((n+j-k)\theta_i)
 +\cos((n-j+k)\theta_i)
 +\cos((n-j-k)\theta_i)
 \bigr]\\
=\quad &0.
\end{align*}
Each entry of $\sum_i\cos(n\theta_i)t_it_i^\top$
is a constant multiple of such a sum, so
$\sum_i\cos(n\theta_i)t_it_i^\top=0$.
Consequently,
\begin{align*}
\mathbb E[B^\top D_nB]
&=\frac1d\sum_i\cos(n\theta_i)
  \bigl(\mathbb E[\gamma_i\gamma_i^\top]
        \otimes t_it_i^\top\bigr)\\
&=\frac1d\left(
  I_r\otimes\sum_i\cos(n\theta_i)t_it_i^\top
\right)
=0.
\end{align*}

Finally,
\[
\|t_i\|^2
=1+2\sum_{j=1}^q\cos^2(j\theta_i)
\le1+2q\le2r.
\]
\end{proof}

We next bound the deviations of the two matrices from
their expectations in operator norm. These simultaneous
bounds will control both $\|Bc\|^2$ and
$c^\top B^\top D_nBc$ for every coefficient vector $c$.

\begin{fact}[\bf Simultaneous Gram-matrix concentration]
\label{fact:openai-gram-concentration}
For a sufficiently large absolute constant $C$ in the
definition of $\tau_{d,q}$, with probability at least
$0.995$,
\[
\|B^\top B-I_{r^2}\|_{\rm op}\le\tau_{d,q},
\qquad
\|B^\top D_nB\|_{\rm op}\le\tau_{d,q}.
\]
\end{fact}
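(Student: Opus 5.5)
We write $B^\top B-I_{r^2}$ and $B^\top D_nB$ as sums of independent, mean-zero random matrices and apply matrix Bernstein to each. We then take a union bound, allotting failure probability $0.0025$ to each.

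Concretely, put $w_i:=d^{-1/2}(\gamma_i\otimes t_i)\in\mathbb R^{r^2}$. Then
\[
B^\top B=\sum_i w_iw_i^\top,
\qquad
B^\top D_nB=\sum_i\cos(n\theta_i)\,w_iw_i^\top .
\]
By Fact~\ref{fact:openai-cosine-basis}, the centered summands are
\[
Y_i:=w_iw_i^\top-\mathbb E w_iw_i^\top
\quad\text{and}\quad
\cos(n\theta_i)\,Y_i,
\]
and these sum to the two deviation matrices. The $\gamma_i$ are independent, so the summands are independent.

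\textbf{Truncation.} The summands are unbounded because $\|\gamma_i\|^2$ is chi-squared. We therefore work on the event
\[
\mathcal G:=\bigl\{\max_i\|\gamma_i\|^2\le L\bigr\},
\qquad L:=C_0(r+\log d).
\]
By chi-squared tail bounds and a union bound over $i\le d$, $\mathcal G$ has probability at least $0.999$. On $\mathcal G$, Fact~\ref{fact:openai-cosine-basis} gives
\[
\|w_i\|^2=\frac{\|\gamma_i\|^2\|t_i\|^2}{d}\le\frac{2rL}{d}=:R .
\]
We apply Bernstein to the truncated variables $\widetilde w_i:=w_i\mathbf 1\{\|\gamma_i\|^2\le L\}$ and recenter them. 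The mean shift this causes is $\|\mathbb E[w_iw_i^\top\mathbf 1\{\|\gamma_i\|^2>L\}]\|$. By Cauchy--Schwarz and the exponentially small tail, it is at most $d^{-2}$ per summand for $C_0$ large. Summed over $i$ it is negligible against $\tau_{d,q}$.

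\textbf{Variance proxy.} We need
\[
\Bigl\|\sum_i\mathbb E Y_i^2\Bigr\|\le\sum_i\bigl\|\mathbb E\|w_i\|^2w_iw_i^\top\bigr\|.
\]
Gaussian fourth moments give
\[
\mathbb E\bigl[\|\gamma\|^2\gamma\gamma^\top\bigr]=(r+2)I_r .
\]
Hence
\[
\mathbb E\|w_i\|^2w_iw_i^\top=d^{-2}(r+2)\|t_i\|^2\bigl(I_r\otimes t_it_i^\top\bigr).
\]
Summing with $\|t_i\|^2\le2r$ and $\sum_i t_it_i^\top=dI_r$ gives
\[
\sigma^2\lesssim\frac{r^2}{d}.
\]
The cosine-weighted sum has the same bound, since $|\cos|\le1$.

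\textbf{Matrix Bernstein.} In dimension $r^2$, with failure probability $0.0025$, Bernstein yields a deviation of order
\[
\sigma\sqrt{\log(r^2)+O(1)}+R\bigl(\log(r^2)+O(1)\bigr)
\lesssim r\sqrt{\frac{\log d}{d}}+\frac{r(r+\log d)\log d}{d},
\]
using $r<d$, so $\log r^2\lesssim\log d$. This is exactly the shape of $\tau_{d,q}$ with a large constant $C$. The two applications, the truncation event and the mean-shift bound together leave total failure probability at most $0.005$.

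\textbf{Main obstacle.} The delicate step is the variance computation. Naively bounding it by $R\,\|\mathbb E B^\top B\|$ gives $2rL/d$, which produces $\sqrt{r(r+\log d)/d}$ instead of $r\sqrt{\log d/d}$. That would still suffice when $\log d\lesssim r$, but to match $\tau_{d,q}$ exactly the Gaussian fourth-moment identity above is needed, and I would compute it first. The truncation bookkeeping is routine, but it must be done with care so that conditioning does not break independence; truncating each $\gamma_i$ separately keeps the summands independent.
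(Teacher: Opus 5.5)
Your proposal is correct and is essentially the paper's proof: truncating each $\|\gamma_i\|^2$ separately at level $O(r+\log d)$, then matrix Bernstein in dimension $r^2$ with uniform bound $O(rL/d)$ and variance proxy of order $r^2/d$ from $\mathbb E[\|\gamma\|^2\gamma\gamma^\top]=(r+2)I_r$, then a union bound. The one difference is that the paper computes the truncated means exactly by radial symmetry ($aI_{r^2}$ for the first sum and $0$ for the cosine-weighted one), where you bound the mean shift by the tail; that is equally fine.

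Two bookkeeping fixes are needed. First, write the variance step as the PSD bound $\sum_i\mathbb E Y_i^2\preceq\sum_i\mathbb E\|w_i\|^2w_iw_i^\top$ and take the norm only after summing; the sum of norms you display is of order $r^3/d$. Second, allot $0.002$ rather than $0.0025$ to each Bernstein event, since $2(0.0025)+0.001>0.005$.
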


\begin{proof}
\emph{Truncation and its bias.}
Set
$L=2r+3\log(1000d)$ and $R_i=\|\gamma_i\|^2$.
Since $\mathbb E e^{R_i/3}=3^{r/2}$,
Markov's inequality gives, for $x\ge0$,
\[
\mathbb P(R_i>2r+3x)
\le
\exp\!\left(\frac r2\log3-\frac{2r}{3}-x\right)
\le e^{-x},
\]
where the last inequality uses $(\log3)/2<2/3$.
Consequently, the event
\[
\mathcal T=\{R_i\le L\text{ for every }i\}
\]
has probability at least
$1-d/(1000d)=0.999$.

For $s\ge0$, the same tail bound gives
$\mathbb P(R_i>L+s)\le e^{-s/3}/(1000d)$.
Hence
\begin{align*}
\mathbb E[R_i\mathbf1_{\{R_i>L\}}]
&=L\mathbb P(R_i>L)
  +\int_0^\infty\mathbb P(R_i>L+s)\,ds\le\frac{L+3}{1000d}.
\end{align*}

Define the independent truncated matrices
\[
Y_i=\frac1d
(\gamma_i\gamma_i^\top\otimes t_it_i^\top)
\mathbf1_{\{R_i\le L\}}.
\]
By radial symmetry,
\[
\mathbb E[
  \gamma_i\gamma_i^\top\mathbf1_{\{R_i\le L\}}
]=aI_r,
\qquad
a=\frac1r\mathbb E[R_i\mathbf1_{\{R_i\le L\}}].
\]
The scalar $a$ is the same for every $i$.
Since $\mathbb E R_i=r$, we have $0\le a\le1$ and
\[
1-a
=\frac1r\mathbb E[R_i\mathbf1_{\{R_i>L\}}]
\le\frac{L+3}{1000rd}.
\]
Moreover,
$\mathbb E Y_i=(a/d)(I_r\otimes t_it_i^\top)$.
Fact~\ref{fact:openai-cosine-basis} therefore gives
\[
\sum_i\mathbb E Y_i=aI_{r^2},
\qquad
\sum_i\cos(n\theta_i)\mathbb E Y_i=0.
\]
Thus, the truncated weighted sum has zero expectation.

\medskip
\emph{Bounds for the centered summands.}
Put $Z_i=Y_i-\mathbb E Y_i$.
Using $\|t_i\|^2\le2r$, we obtain
\[
\|Y_i\|_{\rm op}
=\frac{R_i\|t_i\|^2}{d}
  \mathbf1_{\{R_i\le L\}}
\le\frac{2rL}{d},
\qquad
\|\mathbb E Y_i\|_{\rm op}
=\frac{a\|t_i\|^2}{d}
\le\frac{2r}{d}.
\]
Hence
\[
\|Z_i\|_{\rm op}\le
\frac{2r(L+1)}d=:R_*.
\]
The same bound holds for
$\cos(n\theta_i)Z_i$, since $|\cos(n\theta_i)|\le1$.

The Gaussian fourth moments give
\[
\mathbb E[R_i\gamma_i\gamma_i^\top]=(r+2)I_r.
\]
Indeed, the off-diagonal entries vanish by sign symmetry,
while each diagonal entry equals $3+(r-1)=r+2$.
Since
$(\gamma_i\gamma_i^\top)^2
=R_i\gamma_i\gamma_i^\top$
and
$(t_it_i^\top)^2=\|t_i\|^2t_it_i^\top$,
we have
\[
Y_i^2
=\frac{R_i\|t_i\|^2}{d^2}
(\gamma_i\gamma_i^\top\otimes t_it_i^\top)
\mathbf1_{\{R_i\le L\}}.
\]
Dropping the truncation indicator in this positive
semidefinite expression yields
\begin{align*}
\sum_i\mathbb E Y_i^2
&\preceq
\frac{r+2}{d^2}
\sum_i\|t_i\|^2(I_r\otimes t_it_i^\top)\\
&\preceq
\frac{2r(r+2)}{d^2}
\left(I_r\otimes\sum_i t_it_i^\top\right)\\
&=\frac{2r(r+2)}d I_{r^2}.
\end{align*}
Here we used $\|t_i\|^2\le2r$ and
$\sum_i t_it_i^\top=dI_r$.
Since
\[
\mathbb E Z_i^2
=\mathbb E Y_i^2-(\mathbb E Y_i)^2
\preceq\mathbb E Y_i^2,
\]
the sequence $(Z_i)_i$ has variance parameter at most
\[
v_*:=\frac{2r(r+2)}d.
\]
The same is true for $(\cos(n\theta_i)Z_i)_i$, because
\[
\mathbb E[(\cos(n\theta_i)Z_i)^2]
=\cos^2(n\theta_i)\mathbb E Z_i^2
\preceq\mathbb E Z_i^2.
\]

\medskip
\emph{Applying matrix Bernstein.}
Both sequences consist of independent, centered,
self-adjoint matrices.
The two-sided bounded matrix Bernstein inequality gives, for $t>0$,
\[
\mathbb P\!\left(
  \left\|\sum_i Z_i\right\|_{\rm op}>t
\right)
\le2r^2\exp\!\left(
  -\frac{t^2}{2(v_*+R_*t/3)}
\right),
\]
and the same bound holds for
$\sum_i\cos(n\theta_i)Z_i$.

Set
\[
z=\log(1000r^2),
\qquad
\Delta=\sqrt{2v_*z}+\frac{2R_*z}{3}.
\]
A direct expansion gives
\[
\Delta^2-2z(v_*+R_*\Delta/3)
=\frac{2R_*z}{3}\sqrt{2v_*z}\ge0.
\]
Therefore each matrix bound fails with probability
at most $2r^2e^{-z}=0.002$.
By a union bound, with probability at least $0.996$,
\[
\left\|\sum_i Z_i\right\|_{\rm op}\le\Delta,
\qquad
\left\|\sum_i\cos(n\theta_i)Z_i\right\|_{\rm op}
\le\Delta.
\]

On $\mathcal T$, the original and truncated sums agree.
Using their expectations computed above, we obtain
\[
B^\top B-I_{r^2}
=\sum_i Z_i-(1-a)I_{r^2},
\qquad
B^\top D_nB
=\sum_i\cos(n\theta_i)Z_i.
\]
Intersecting $\mathcal T$ with the two concentration
events therefore gives, with probability at least
$1-0.001-0.004=0.995$,
\[
\|B^\top B-I_{r^2}\|_{\rm op}\le\Delta+(1-a),
\qquad
\|B^\top D_nB\|_{\rm op}\le\Delta.
\]

\medskip
\emph{Simplifying the bound.}
Since $2q+1<d$, we have $d\ge2$ and $2r\le d$.
Thus $z=O(\log d)$ and $L+3=O(r+\log d)$,
with absolute implicit constants.
Together with $r+2\le3r$, these estimates imply
\begin{align*}
\Delta+(1-a)
&\le
2\sqrt{\frac{r(r+2)z}{d}}
+\frac{4r(L+1)z}{3d}
+\frac{L+3}{1000rd}\\
&\le
c\left[
r\sqrt{\frac{\log d}{d}}
+\frac{r(r+\log d)\log d}{d}
\right]
\end{align*}
for an absolute constant $c>0$.
Since $r=q+1$, choosing the absolute constant
$C$ in $\tau_{d,q}$ so that $C\ge c$ gives
$\Delta+(1-a)\le\tau_{d,q}$.
This proves both asserted bounds.
\end{proof}

\begin{proof}[Proof of Lemma~\ref{lem:openai-cosine}]
Choose the absolute constant $C$ sufficiently large
for Fact~\ref{fact:openai-gram-concentration} to apply.
Then, with probability at least $0.995$,
\[
\|B^\top B-I_{r^2}\|_{\rm op}\le\tau_{d,q},
\qquad
\|B^\top D_nB\|_{\rm op}\le\tau_{d,q}.
\]
On this event, let $u\in\mathcal K_q$.
By Fact~\ref{fact:openai-cosine-basis}, we can write
$u=Bc$. The two matrix bounds give
\begin{align*}
\|u\|^2
&=c^\top B^\top Bc
\ge(1-\tau_{d,q})\|c\|^2,\\
|u^\top D_nu|
&=|c^\top B^\top D_nBc|
\le\tau_{d,q}\|c\|^2.
\end{align*}
Since $\tau_{d,q}\le1/2$, combining these inequalities
yields
\[
|u^\top D_nu|
\le\frac{\tau_{d,q}}{1-\tau_{d,q}}\|u\|^2
\le2\tau_{d,q}\|u\|^2.
\]
These inequalities hold for every $c$ on the same event,
so the conclusion holds simultaneously for every
$u\in\mathcal K_q$.
Since $0.995\ge0.99$, the lemma follows.
\end{proof}

\subsection{Proof of Lemma~\ref{lem:openai-target-bias}}
\label{app:openai-target-bias}

Recall that
$\theta_i=(i-\tfrac12)\pi/d$ and
$\lambda_i=(\kappa+1)/2+(\kappa-1)\cos\theta_i/2$.
We first expand the inverse spectrum in cosine modes
and sum this expansion over the grid. We then bound
the variances and the normalized cosine bias.

\begin{fact}[\bf Cosine expansion of the inverse spectrum]
\label{fact:openai-inverse-cosine}
For every $\theta\in\mathbb R$,
\[
\frac{2}{\kappa+1+(\kappa-1)\cos\theta}
=\frac1{\sqrt\kappa}
\left(
1+2\sum_{j\ge1}(-\varrho_\kappa)^j\cos(j\theta)
\right).
\]
The series converges absolutely and uniformly in $\theta$.
\end{fact}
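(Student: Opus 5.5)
We need to prove Fact~\ref{fact:openai-inverse-cosine}, the cosine expansion of $2/(\kappa+1+(\kappa-1)\cos\theta)$. The plan is to write this function in Poisson-kernel form and then expand it as a geometric series.

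Put $\varrho=\varrho_\kappa=(\sqrt\kappa-1)/(\sqrt\kappa+1)\in[0,1)$, since $\kappa\ge2$. Then
\[
1+\varrho^2=\frac{2(\kappa+1)}{(\sqrt\kappa+1)^2},
\qquad
2\varrho=\frac{2(\kappa-1)}{(\sqrt\kappa+1)^2}.
\]
It follows that
\[
1+\varrho^2+2\varrho\cos\theta=\frac{2\bigl(\kappa+1+(\kappa-1)\cos\theta\bigr)}{(\sqrt\kappa+1)^2}.
\]
We also have
\[
1-\varrho^2=\frac{4\sqrt\kappa}{(\sqrt\kappa+1)^2}.
\]
Combining these two identities gives
\[
\frac{2}{\kappa+1+(\kappa-1)\cos\theta}
=\frac{4}{(\sqrt\kappa+1)^2\,(1+\varrho^2+2\varrho\cos\theta)}
=\frac1{\sqrt\kappa}\cdot\frac{1-\varrho^2}{1+\varrho^2+2\varrho\cos\theta}.
\]
This reduces the claim to the classical Poisson kernel identity
\[
\frac{1-s^2}{1-2s\cos\phi+s^2}=1+2\sum_{j\ge1}s^j\cos(j\phi),\qquad |s|<1,
\]
applied with $s=-\varrho$ and $\phi=\theta$. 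Note that $-2s\cos\theta=2\varrho\cos\theta$, which matches the denominator above.

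To prove the Poisson identity, we write $z=se^{i\phi}$ with $|z|=|s|<1$. Then
\[
1+2\sum_{j\ge1}s^j\cos(j\phi)=\operatorname{Re}\Bigl(1+2\sum_{j\ge1}z^j\Bigr)=\operatorname{Re}\frac{1+z}{1-z}.
\]
Multiplying numerator and denominator by $1-\bar z$ gives
\[
\operatorname{Re}\frac{1+z}{1-z}=\frac{1-|z|^2}{|1-z|^2}=\frac{1-s^2}{1-2s\cos\phi+s^2}.
\]
The geometric series converges because $|z|<1$.

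For absolute and uniform convergence, the $j$-th term is bounded by $2\varrho^j$, and $\sum_j 2\varrho^j<\infty$, so the Weierstrass M-test applies uniformly in $\theta$. The edge case $\kappa=1$ would give $\varrho=0$ and a trivial identity, but the hypothesis $\kappa\ge2$ excludes it anyway.

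The only step that needs care is the algebra verifying the two $\varrho$ identities and tracking the sign $(-\varrho)^j$ correctly. The rest is standard.
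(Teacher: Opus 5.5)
Your proof is correct and follows essentially the same route as the paper. Both arguments rewrite the left-hand side in the Poisson-kernel form $\frac{1}{\sqrt\kappa}\cdot\frac{1-\varrho_\kappa^2}{1+2\varrho_\kappa\cos\theta+\varrho_\kappa^2}$ and sum the geometric series in $-\varrho_\kappa e^{\pm i\theta}$; you do this via $\operatorname{Re}\frac{1+z}{1-z}$, whereas the paper sums the two conjugate series directly. Both then obtain absolute and uniform convergence from $\sum_j\varrho_\kappa^j<\infty$.
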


\begin{proof}
Since $0<\varrho_\kappa<1$, the geometric-series
identity gives
\begin{align*}
1+2\sum_{j\ge1}(-\varrho_\kappa)^j\cos(j\theta)
&=1-\frac{\varrho_\kappa e^{i\theta}}
          {1+\varrho_\kappa e^{i\theta}}
     -\frac{\varrho_\kappa e^{-i\theta}}
          {1+\varrho_\kappa e^{-i\theta}}\\
&=\frac{1-\varrho_\kappa^2}
        {1+2\varrho_\kappa\cos\theta+\varrho_\kappa^2}.
\end{align*}
The definition of $\varrho_\kappa$ implies
\[
\frac{1-\varrho_\kappa}{1+\varrho_\kappa}
=\frac1{\sqrt\kappa},
\qquad
\frac{1+2\varrho_\kappa\cos\theta+\varrho_\kappa^2}
     {(1-\varrho_\kappa)^2}
=\frac{\kappa+1+(\kappa-1)\cos\theta}{2}.
\]
Combining these identities proves the expansion.
Finally,
\[
\sum_{j\ge1}
\left|(-\varrho_\kappa)^j\cos(j\theta)\right|
\le\sum_{j\ge1}\varrho_\kappa^j
=\frac{\varrho_\kappa}{1-\varrho_\kappa},
\]
which proves absolute and uniform convergence.
\end{proof}

\begin{fact}[\bf Exact finite-grid moments]
\label{fact:openai-target-moments}
For $1\le n<d$,
\[
M=\frac d{\sqrt\kappa}
  \frac{1-\varrho_\kappa^{2d}}
       {1+\varrho_\kappa^{2d}},
\qquad
(-1)^n\mu_n
=\frac d{\sqrt\kappa}
  \frac{\varrho_\kappa^n-\varrho_\kappa^{2d-n}}
       {1+\varrho_\kappa^{2d}}.
\]
\end{fact}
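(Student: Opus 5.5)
The plan is to substitute the cosine expansion of Fact~\ref{fact:openai-inverse-cosine} into the traces and then sum over the Chebyshev grid. Since $U\sim N(0,\Lambda^{-1})$, we have
\[
M=\sum_i\lambda_i^{-1},\qquad
\mu_n=\sum_i\cos(n\theta_i)\lambda_i^{-1},
\qquad
\lambda_i^{-1}=\frac{2}{\kappa+1+(\kappa-1)\cos\theta_i}.
\]
Fact~\ref{fact:openai-inverse-cosine} gives
$\lambda_i^{-1}=\kappa^{-1/2}\bigl(1+2\sum_{j\ge1}(-\varrho_\kappa)^j\cos(j\theta_i)\bigr)$.
The series converges absolutely, so the finite sum over $i$ and the sum over $j$ can be interchanged. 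This reduces both quantities to the grid sums
$S_k:=\sum_{i=1}^d\cos(k\theta_i)$ for integers $k\ge0$.

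The next step is to evaluate $S_k$ for every integer $k$, not only for $0<|k|<2d$ as in the proof of Fact~\ref{fact:openai-cosine-basis}. Write $k=2dm+s$ with $|s|<2d$. Since $\theta_i=(i-\frac12)\pi/d$, we get $k\theta_i=s\theta_i+2m(i-\frac12)\pi$, so $\cos(k\theta_i)=(-1)^m\cos(s\theta_i)$. Combined with the telescoping identity, this gives
\[
S_k=\begin{cases}(-1)^m d,& k=2dm,\\ 0,&2d\nmid k.\end{cases}
\]
For $M$, only the multiples $j=2dm$ with $m\ge1$ survive, and $(-\varrho_\kappa)^{2dm}=\varrho_\kappa^{2dm}$. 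Therefore
\[
M=\frac{d}{\sqrt\kappa}\Bigl(1+2\sum_{m\ge1}(-1)^m\varrho_\kappa^{2dm}\Bigr)
=\frac d{\sqrt\kappa}\,\frac{1-\varrho_\kappa^{2d}}{1+\varrho_\kappa^{2d}}.
\]

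For $\mu_n$, the product-to-sum identity gives $\cos(n\theta)\cos(j\theta)=\frac12[\cos((j-n)\theta)+\cos((j+n)\theta)]$. The constant term $\kappa^{-1/2}S_n$ vanishes because $1\le n<d$. The $j$-th term contributes $\kappa^{-1/2}(-\varrho_\kappa)^j(S_{j-n}+S_{j+n})$. The surviving indices are $j=n+2dm$ with $m\ge0$ (from $S_{j-n}$; also $j-n=-2dm$ would need $j\le n-2d<0$, impossible) and $j=2dm-n$ with $m\ge1$ (from $S_{j+n}$). Each contributes $(-1)^md$ times $(-\varrho_\kappa)^j$. Using $(-1)^{n+2dm}=(-1)^{2dm-n}=(-1)^n$, this yields
\[
(-1)^n\mu_n=\frac d{\sqrt\kappa}\Bigl(\sum_{m\ge0}(-1)^m\varrho_\kappa^{n+2dm}+\sum_{m\ge1}(-1)^m\varrho_\kappa^{2dm-n}\Bigr).
\]
Summing the two alternating geometric series gives $\bigl(\varrho_\kappa^n-\varrho_\kappa^{2d-n}\bigr)/(1+\varrho_\kappa^{2d})$, which is the claimed formula.

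The main obstacle is bookkeeping rather than analysis. One must track the aliasing signs $(-1)^m$ exactly and check that no extra resonant index appears, in particular that $j=n-2dm$ is never nonnegative for $m\ge1$; this is where $n<d$ is used. Absolute summability justifies every rearrangement.
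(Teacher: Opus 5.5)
Your proposal is correct and follows essentially the same route as the paper: you substitute the cosine expansion of Fact~\ref{fact:openai-inverse-cosine}, interchange it with the grid sum, evaluate the aliased sums $\sum_i\cos(k\theta_i)$ (equal to $(-1)^m d$ for $k=2dm$ and $0$ otherwise), and sum the alternating geometric series over the resonant indices $j=2dm$ for $M$, and $j=n+2dm$ ($m\ge0$) and $j=2dm-n$ ($m\ge1$) for $\mu_n$. Because you keep $S_{j-n}$ and $S_{j+n}$ as two separate absolutely convergent sums, you do not need the paper's remark that these index classes are disjoint.
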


\begin{proof}
For every integer $k$,
\[
\sum_{i=1}^d\cos(k\theta_i)
=
\begin{cases}
d(-1)^a,&k=2ad,\quad a\in\mathbb Z,\\
0,&k\notin2d\mathbb Z.
\end{cases}
\]
Indeed, if $k\notin2d\mathbb Z$, the finite cosine sum is
$\sin(k\pi)/(2\sin(k\pi/(2d)))=0$.
If $k=2ad$, then
$\cos(k\theta_i)=\cos(2ai\pi-a\pi)=(-1)^a$.

Since $\mathbb E U_i^2=1/\lambda_i$,
\[
M=\sum_i\frac1{\lambda_i},
\qquad
\mu_n=\sum_i\frac{\cos(n\theta_i)}{\lambda_i}.
\]
Fact~\ref{fact:openai-inverse-cosine} allows us to
interchange the absolutely convergent series with
the finite grid sum.

For $M$, only the terms with $j=2ad$, $a\ge1$,
survive. Thus
\begin{align*}
M
&=\frac d{\sqrt\kappa}
  \left(1+2\sum_{a\ge1}
                (-1)^a\varrho_\kappa^{2ad}\right)\\
&=\frac d{\sqrt\kappa}
  \left(1-\frac{2\varrho_\kappa^{2d}}
                  {1+\varrho_\kappa^{2d}}\right)\\
&=\frac d{\sqrt\kappa}
  \frac{1-\varrho_\kappa^{2d}}
       {1+\varrho_\kappa^{2d}}.
\end{align*}

For $\mu_n$, we have $\sum_i\cos(n\theta_i)=0$.
Using the product-to-sum identity gives
\[
\mu_n
=\frac1{\sqrt\kappa}\sum_{j\ge1}(-\varrho_\kappa)^j
\left[
\sum_i\cos((j-n)\theta_i)
+\sum_i\cos((j+n)\theta_i)
\right].
\]
The first inner sum is nonzero precisely when $j=n$
or $j=2ad+n$ with $a\ge1$; its value is respectively
$d$ or $d(-1)^a$.
The second inner sum is nonzero precisely when
$j=2ad-n$ with $a\ge1$, and then equals $d(-1)^a$.
These three classes are disjoint because $1\le n<d$.
Moreover,
$(-\varrho_\kappa)^{2ad\pm n}
=(-1)^n\varrho_\kappa^{2ad\pm n}$.
Consequently,
\begin{align*}
(-1)^n\mu_n
&=\frac d{\sqrt\kappa}
\left[
\varrho_\kappa^n+
\sum_{a\ge1}(-1)^a
\left(
\varrho_\kappa^{2ad+n}
+\varrho_\kappa^{2ad-n}
\right)
\right]\\
&=\frac d{\sqrt\kappa}
\left[
\varrho_\kappa^n
-\frac{\varrho_\kappa^{2d+n}
       +\varrho_\kappa^{2d-n}}
      {1+\varrho_\kappa^{2d}}
\right]\\
&=\frac d{\sqrt\kappa}
  \frac{\varrho_\kappa^n-\varrho_\kappa^{2d-n}}
       {1+\varrho_\kappa^{2d}}.
\end{align*}
\end{proof}

\begin{proof}[Proof of Lemma~\ref{lem:openai-target-bias}]
Fact~\ref{fact:openai-target-moments} gives the
two asserted means.

\emph{Variance bounds.}
Since $\Lambda$ is diagonal, the coordinates of $U$
are independent. Each $U_i$ is centered Gaussian with
variance $1/\lambda_i$, so
$\mathbb E U_i^4=3/\lambda_i^2$ and
$\operatorname{Var}(U_i^2)=2/\lambda_i^2$.
Also,
$\lambda_i\ge(\kappa+1)/2-(\kappa-1)/2=1$.
Hence
\begin{align*}
\operatorname{Var}(\|U\|^2)
&=2\sum_i\frac1{\lambda_i^2}
\le2\sum_i\frac1{\lambda_i}=2M,\\
\operatorname{Var}(U^\top D_nU)
&=2\sum_i\frac{\cos^2(n\theta_i)}{\lambda_i^2}
\le2M.
\end{align*}
The exact formula for $M$ gives $M\le d/\sqrt\kappa$.

\medskip
\emph{Normalized bias.}
Since $n<d$ and $0<\varrho_\kappa<1$,
we have
$\varrho_\kappa^n>\varrho_\kappa^{2d-n}$.
Thus $(-1)^n\mu_n>0$, and dividing the two moment
formulas yields
\[
\frac{|\mu_n|}{M}
=\varrho_\kappa^n
  \frac{1-\varrho_\kappa^{2(d-n)}}
       {1-\varrho_\kappa^{2d}}.
\]
Because
$0<\varrho_\kappa^{2d}
<\varrho_\kappa^{2(d-n)}<1$,
the fraction satisfies
\[
1-\varrho_\kappa^{2(d-n)}
\le
\frac{1-\varrho_\kappa^{2(d-n)}}
     {1-\varrho_\kappa^{2d}}
\le1.
\]

To bound the powers of $\varrho_\kappa$, observe that
\begin{align*}
\log\frac1{\varrho_\kappa}
&=\log\frac{\sqrt\kappa+1}{\sqrt\kappa-1}\\
&=2\int_0^{1/\sqrt\kappa}\frac{dt}{1-t^2}\\
&=\frac2{\sqrt\kappa}
  +2\int_0^{1/\sqrt\kappa}\frac{t^2}{1-t^2}\,dt.
\end{align*}
The last integral is nonnegative, giving
$\log(1/\varrho_\kappa)\ge2/\sqrt\kappa$.
Throughout the integration interval,
$1-t^2\ge1-\kappa^{-1}$, so
\begin{align*}
\log\frac1{\varrho_\kappa}
&\le\frac2{\sqrt\kappa}
  +\frac{2}{1-\kappa^{-1}}
     \int_0^{1/\sqrt\kappa}t^2\,dt\\
&=\frac2{\sqrt\kappa}
  \left(1+\frac1{3(\kappa-1)}\right)
\le\frac8{3\sqrt\kappa}
<\frac3{\sqrt\kappa}.
\end{align*}
Here we used $\kappa\ge2$.
Exponentiating these bounds gives
\[
e^{-3n/\sqrt\kappa}
\le\varrho_\kappa^n
\le e^{-2n/\sqrt\kappa}.
\]

If $d-n\ge\sqrt\kappa$, the lower bound on
$\log(1/\varrho_\kappa)$ also gives
\[
\varrho_\kappa^{2(d-n)}
=\exp\!\left(
-2(d-n)\log\frac1{\varrho_\kappa}
\right)
\le e^{-4(d-n)/\sqrt\kappa}
\le e^{-4}.
\]
Consequently,
$(1-e^{-4})\varrho_\kappa^n
\le|\mu_n|/M\le\varrho_\kappa^n$.
Combining this with the bounds on $\varrho_\kappa^n$
proves
\[
(1-e^{-4})e^{-3n/\sqrt\kappa}
\le\frac{|\mu_n|}{M}
\le e^{-2n/\sqrt\kappa}.
\]
\end{proof}

\subsection{Proof of Lemma~\ref{lem:openai-admissibility}}
\label{app:openai-admissibility}

\begin{proof}[Proof of Lemma~\ref{lem:openai-admissibility}]
Since $0<\theta_i<\pi$, we have $-1<\cos\theta_i<1$,
and hence $1<\lambda_i<\kappa$.
As $A=O\Lambda O^\top$ with $O$ orthogonal, this gives
$I_d\preceq A\preceq\kappa I_d$.

The potential $V_A(x)=\tfrac12x^\top Ax$ is $C^\infty$
with Hessian $\nabla^2V_A(x)=A$.
Thus it is $1$-strongly convex and $\kappa$-smooth.
Moreover, $V_A(0)=0$ and
$V_A(x)\ge\tfrac12\|x\|^2>0$ for $x\ne0$,
so $0$ is its unique minimizer.

The associated target is $N(0,A^{-1})$.
Gaussian integration gives
\[
Z_A
=(2\pi)^{d/2}(\det A)^{-1/2}
=(2\pi)^{d/2}(\det\Lambda)^{-1/2}.
\]
Since $\Lambda$ is fixed, $Z_A$ is independent of $O$
and is known whenever the spectrum is public.
\end{proof}
\end{document}